\definecolor{blau}{rgb}{.15,.4,.8}
\definecolor{gruen}{rgb}{.25,.8,.25}
\definecolor{rot}{rgb}{.8,0,0}
\definecolor{gelb}{rgb}{.8,.8,0}
\definecolor{lila}{rgb}{.8,0,.8}
\definecolor{flieder}{rgb}{.77,.79,.91}
\definecolor{beige}{rgb}{.96,.79,.48}
\definecolor{orange}{rgb}{.8,.4,.1}
\definecolor{lightgreen}{rgb}{.73,.95,.76}
\definecolor{lightred}{rgb}{1,.71,.67}
\definecolor{lightblue}{rgb}{.52,.83,.98}
\definecolor{white}{rgb}{1,1,1}
\definecolor{blue}{rgb}{.15,.4,.8}
\definecolor{green}{rgb}{0,.8,.2}
\definecolor{red}{rgb}{.8,0,0}
\definecolor{yellow}{rgb}{.8,.8,0}
\definecolor{purple}{rgb}{.8,0,.8}
\newcommand{\eq}[1]{(\ref{#1})}
\newcommand{\be}{\begin{equation}}
\newcommand{\ee}{\end{equation}}
\newcommand{\ba}{\begin{array}}
\newcommand{\ea}{\end{array}}
\newcommand{\bea}{\begin{eqnarray}}
\newcommand{\eea}{\end{eqnarray}}
\def\mathlette#1#2{{\mathchoice{\mbox{#1$\displaystyle #2$}}%
                               {\mbox{#1$\textstyle #2$}}%
                               {\mbox{#1$\scriptstyle #2$}}%
                               {\mbox{#1$\scriptscriptstyle #2$}}}}
\newcommand{\Mat}[1]{\ensuremath{\mathlette{\boldmath}{#1}}}
\renewcommand{\Vec}[1]{\ensuremath{\mathlette{\boldmath}{#1}}}
\providecommand{\K}{\ensuremath{K}}
\providecommand{\N}{\ensuremath{N}}
\providecommand{\matS}{\ensuremath{{\Mat{S}}}}
\providecommand{\matT}{\ensuremath{{\Mat{T}}}}
\providecommand{\matH}{\ensuremath{{\Mat{H}}}}
\providecommand{\K}{\ensuremath{K}}
\providecommand{\N}{\ensuremath{N}}
\providecommand{\matI}{\ensuremath{{\Mat{I}}}}
\providecommand{\matS}{\ensuremath{{\Mat{S}}}}
\providecommand{\matT}{\ensuremath{{\Mat{T}}}}
\providecommand{\matH}{\ensuremath{{\Mat{H}}}}
\providecommand{\matA}{\ensuremath{{\Mat{A}}}}
\providecommand{\matC}{\ensuremath{{\Mat{C}}}}
\providecommand{\matXi}{\ensuremath{{\boldsymbol{\Xi}}}}
\providecommand{\matQTilde}{\ensuremath{{\Mat{\widetilde{Q}}}}}
\providecommand{\matPTilde}{\ensuremath{{\Mat{\widetilde{P}}}}}
\providecommand{\matGTilde}{\ensuremath{{\Mat{\widetilde{G}}}}}
\providecommand{\matTTilde}{\ensuremath{{\Mat{\widetilde{T}}}}}
\providecommand{\matRTilde}{\ensuremath{{\Mat{\widetilde{R}}}}}
\providecommand{\matG}{\ensuremath{{\Mat{G}}}}
\providecommand{\matP}{\ensuremath{{\Mat{P}}}}
\providecommand{\stackH}{\ensuremath{{\boldsymbol{\mathcal{H}}}}}
\providecommand{\stacky}{\ensuremath{{\boldsymbol{\mathcal{Y}}}}}
\providecommand{\stackb}{\ensuremath{{\boldsymbol{\mathcal{B}}}}}
\providecommand{\stackW}{\ensuremath{{\boldsymbol{\mathcal{W}}}}}
\providecommand{\stackT}{\ensuremath{{\boldsymbol{\mathcal{T}}}}}
\providecommand{\stackR}{\ensuremath{{\boldsymbol{\mathcal{R}}}}}
\providecommand{\vecb}{\ensuremath{{\Vec{b}}}}
\providecommand{\vecy}{\ensuremath{{\Vec{y}}}}
\providecommand{\vecs}{\ensuremath{{\Vec{s}}}}
\providecommand{\vech}{\ensuremath{{\Vec{h}}}}
\newcommand{\I}{\Mat{I}}
\newcommand{\E}{\mathrm{E}}
\newtheorem{theor}{Theorem}
\newtheorem{proposition}{Proposition}
\newtheorem{corollary}{Corollary}
\newtheorem{lemma}{Lemma}
\newtheorem{definition}{Definition}
\newtheorem{property}{Property}
\newcommand{\mnov}{Nov}
\newcommand{\usao}{U.S.A.} % or {} inside U.S.A.
\newcommand{\usa}{\usao, }
\def\ifundefined{\@ifundefined}
\renewcommand{\baselinestretch}{1.66}
\begin{document}

\title{Asynchronous
CDMA Systems with Random Spreading--Part I: Fundamental Limits}

\author{Laura Cottatellucci, Ralf R. M\"uller, and M\'erouane Debbah
\thanks{This work was presented in part at the 42$^{th}$ Annual Asilomar
Conference on Signals, Systems, and Computers,  Pacific Grove,
CA, \usa \mnov \ 2007. }
\thanks{This work was supported in part by the  French ANR "Masses de Données" project SESAME  and by the Research Council of Norway under grant 171133/V30. }
\thanks{Laura Cottatellucci is with Eurecom, Sophia Antipolis, France (e-mail:
laura.cottatellucci@eurecom.fr). She was with Institute of
Telecommunications Research, University of South Australia,
Adealide, SA, Australia. Ralf M\"uller is with Norwegian
University of Science and Technology, Trondheim, Norway, (e-mail:
mueller@iet.ntnu.no). M\'erouane Debbah was with Eurecom, Sophia Antipolis, France. He is currently with SUPELEC,  91192 Gif-sur-Yvette, France (e-mail: merouane.debbah@supelec.fr).} }

\ifundefined{IEEEtransversionmajor}{%
  \newlength{\IEEEilabelindent}
  \newlength{\IEEEilabelindentA}
  \newlength{\IEEEilabelindentB}
  \newlength{\IEEEelabelindent}
  \newlength{\IEEEdlabelindent}
  \newlength{\labelindent}
  \newlength{\IEEEiednormlabelsep}
  \newlength{\IEEEiedmathlabelsep}
  \newlength{\IEEEiedtopsep}

  \providecommand{\IEEElabelindentfactori}{1.0}
  \providecommand{\IEEElabelindentfactorii}{0.75}
  \providecommand{\IEEElabelindentfactoriii}{0.0}
  \providecommand{\IEEElabelindentfactoriv}{0.0}
  \providecommand{\IEEElabelindentfactorv}{0.0}
  \providecommand{\IEEElabelindentfactorvi}{0.0}
  \providecommand{\labelindentfactor}{1.0}

  \providecommand{\iedlistdecl}{\relax}
  \providecommand{\calcleftmargin}[1]{
                  \setlength{\leftmargin}{#1}
                  \addtolength{\leftmargin}{\labelwidth}
                  \addtolength{\leftmargin}{\labelsep}}
  \providecommand{\setlabelwidth}[1]{
                  \settowidth{\labelwidth}{#1}}
  \providecommand{\usemathlabelsep}{\relax}
  \providecommand{\iedlabeljustifyl}{\relax}
  \providecommand{\iedlabeljustifyc}{\relax}
  \providecommand{\iedlabeljustifyr}{\relax}

  \newif\ifnocalcleftmargin
  \nocalcleftmarginfalse

  \newif\ifnolabelindentfactor
  \nolabelindentfactorfalse

  % in V1.4 of IEEEtran.cls
  \newif\ifcenterfigcaptions
  \centerfigcaptionsfalse

  % we need to provide the old IED environments
  % with a bogus optional argument
  \let\OLDitemize\itemize
  \let\OLDenumerate\enumerate
  \let\OLDdescription\description

  \renewcommand{\itemize}[1][\relax]{\OLDitemize}
  \renewcommand{\enumerate}[1][\relax]{\OLDenumerate}
  \renewcommand{\description}[1][\relax]{\OLDdescription}

  \providecommand{\pubid}[1]{\relax}
  \providecommand{\pubidadjcol}{\relax}
  \providecommand{\specialpapernotice}[1]{\relax}
  \providecommand{\overrideIEEEmargins}{\relax}

    \let\CMPARstart\PARstart

  \let\OLDappendix\appendix
  \renewcommand{\appendix}[1][\relax]{\OLDappendix}

  \newif\ifuseRomanappendices
  \useRomanappendicestrue

   \let\OLDbiography\biography
  \let\OLDendbiography\endbiography
  \renewcommand{\biography}[2][\relax]{\OLDbiography{#2}}
  \renewcommand{\endbiography}{\OLDendbiography}
  % **** BACKWARD COMPATIBILITY CODE BLOCK END ****

  % alter the header to show we are using the older class
  \markboth{A Test for IEEEtran.cls--- {\tiny \bfseries
  [Running Older Class]}}{Shell: A Test for IEEEtran.cls}}{
  % END IF OLDER CLASS

    \markboth{}%
  {}}
%%%%%%%%%%%%%%%%%%%%%%%%%%%%%%%%%%%%%%%%%%%%%%%%%%%%%%%%%%%%%%%%%%%%

\renewcommand{\baselinestretch}{1.4}

\maketitle

\begin{abstract}
Spectral efficiency for asynchronous code
division multiple access (CDMA) with random spreading is calculated in the large
system limit allowing for arbitrary chip waveforms and
frequency-flat fading. Signal to interference
and noise ratios (SINRs) for suboptimal receivers, such
as the linear minimum mean square error (MMSE) detectors, are derived. The approach is general
and optionally allows even for statistics obtained by
under-sampling the received signal.

All performance measures are given as a function of the chip waveform and the delay
distribution of the users in the large system limit.
It turns out that synchronizing users on a chip level impairs performance for all chip waveforms with bandwidth greater than the Nyquist bandwidth, e.g.,\ positive roll-off factors. For example, with the pulse shaping demanded in the UMTS standard, user synchronization reduces spectral efficiency up to 12\% at 10~dB normalized signal-to-noise ratio. The benefits of asynchronism stem from the finding that the excess bandwidth of chip waveforms actually spans additional dimensions in signal space, if the users are de-synchronized on the chip-level.

The analysis of linear MMSE detectors shows that the limiting
interference effects can be decoupled both in the user domain and
in the frequency domain such that the concept of the
effective interference spectral density arises. This generalizes and refines Tse and Hanly's
concept of effective interference.

In Part II, the analysis is extended to any linear detector that
admits a representation as multistage detector and guidelines for
the design of low complexity multistage detectors with universal
weights are provided.

\vspace{10mm} \emph{Index Terms} - Asynchronous code division multiple access (CDMA), channel capacity, effective interference, excess bandwidth, minimum mean square error (MMSE) detector,  multistage detector, multiuser detection,  pulse shaping, random matrix theory,  random spreading sequences, spectral efficiency.
\end{abstract}

\renewcommand{\baselinestretch}{2}
\vspace{0mm}

\section{Introduction}\label{chap:async_sec:introduction}
The fundamental limits of synchronous code-division multiple-access
(CDMA) systems and the loss incurred by the imposition of
suboptimal receiving structures have been thoroughly studied in
different scenarios and from different perspectives. On the one
hand, significant efforts have been devoted to characterize the
optimal spreading sequences and the corresponding capacities
\cite{rupf:94b,viswanath:99a,viswanath:99b}. On the other hand,
very insightful analysis
\cite{verdu:99a,tse:99b,mueller:98d,shitz:99,tanaka:02} resulted from
modelling the spreading sequences by random sequences
\cite{grant:98}. In fact, as both the number of users $K$   and
the spreading factor $N$ tend to infinity with a fixed ratio, CDMA
systems with random spreading show self-averaging properties.
These enable the description of the system in terms of few
macroscopic system parameters and thus provide a deep
understanding of the system behavior.

In the literature, the fundamental limits of CDMA systems and the
asymptotic analysis of linear multiuser detectors under the
assumption of random spreading sequences is overwhelmingly focused on
synchronous CDMA systems.  While the assumption of user synchronization
allowed for accurate large-system analysis, it is not realistic
for the received signal on the uplink of a cellular CDMA system,
in particular if users move and cause varying delays.
Therefore, it is of theoretical and practical interest to
extend the analysis of CDMA systems with random spreading to
asynchronous users. This holds in particular, as we will see that
asynchronous users are beneficial from a viewpoint of system performance.

The analysis of asynchronous CDMA systems using a single-user matched filter as
receiver was first given in \cite{pursley:77a}. A rich field of analysis of
asynchronous CDMA systems with conventional detection at the
receiver is based on Gaussian approximation methods. An exhaustive
overview of these approaches exceeds the scope of this work,
which is focused on the analysis of asynchronous CDMA systems with
{\em optimal joint decoding or linear multiuser detection}. The
interested reader is referred to \cite{zang:03} \cite{jeong:06}
and references therein for asynchronous CDMA with single-user receivers.

The analysis and design of asynchronous CDMA systems with linear
detectors is predominantly restricted to consider symbol-asynchronous
but chip-synchronous signals, i.e.,\ the time delays of the signals
are multiples of the chip interval. The effect of chip-asynchronism
is eventually analyzed independently
\cite{schramm:99}. In this stream are works that optimize the
spreading sequences to maximize the sum capacity \cite{luo:05} and
analyze the performance of linear multiuser detectors
\cite{schramm:99,kiran:00,cottatellucci:04b,cottatellucci:04g}. In
\cite{schramm:99,kiran:00}, the linear MMSE detector for
symbol-asynchronous but chip-synchronous systems is shown to attain the
performance of the linear MMSE detector for synchronous systems
as the size of the observation window tends to infinity by empirical
and analytical means, respectively. However,
they verify numerically that the performance of linear MMSE
detectors is severely impaired by the use of short observation
windows. Additionally, \cite{kiran:00} provides the large-system
SINR for a symbol whose chips are completely received in an
observation window of length equal to the symbol interval $T_s$.
In \cite{cottatellucci:04b,cottatellucci:04g}, the analysis of
linear multistage and MMSE detectors is extended to observation
windows of arbitrary length. Furthermore,  a multistage detector
structure that does not suffer from windowing effects and performs
as well as the multistage detector for synchronous systems is
proposed.

In \cite{schramm:99,mantravadi:02}, the effects of chip
asynchronism are analyzed assuming bandlimited chip pulses. In
\cite{mantravadi:02}, the chip waveform is assumed to be an ideal
Nyquist sinc function, i.e.,\ a sinc function with bandwidth equal
to half of the chip rate. The received signal is filtered
by a lowpass filter (or, equivalently, a filter matched to the
chip waveform) and subsequently sampled at the time delay of the
signal of the user of interest with a frequency equal to the chip
rate\footnote{The chip rate satisfies the condition of the
sampling theorem in this case.}. Reference \cite{mantravadi:02} proves that
the SINR at the output of the linear MMSE detector converges in
the mean-square sense to the SINR in an equivalent synchronous
system. In \cite{schramm:99} the wider class of chip pulses
which are inter-chip interference free at the output of the chip matched
filter is considered. In the following we will refer to this class
of chip pulses as square root Nyquist chip pulses.

In \cite{madhow:94,madhow:99}, the performance of the linear MMSE
detector with completely asynchronous users and chip waveforms
limited to a chip interval is analyzed. However, the observation
window in \cite{madhow:94} spanned only a single symbol interval
not yielding sufficient discrete-time statistics; the resulting
degradation in performance was pointed out later in
\cite{schramm:99,kiran:00}.

As discussed above, previous approaches to the analysis of
asynchronous CDMA with multiuser detection were only concerned
with, if and how asynchronism can be prevented from causing
performance degradation. However, asynchronism is known to be
beneficial for CDMA systems with demodulation by single-user
matched filters (e.g., \cite{pursley:77a}). One of the main
contributions of this paper is to show that benefits from asynchronism
are not inherent to single-user matched filters but they are a general
property of CDMA systems. We quantify those benefits in terms of spectral efficiency and SINRs in the large-system limit.

Compared to synchronous systems, the analysis of asynchronous CDMA
raises two additional issues: (i) the way statistics are formed,
trading complexity against performance, and (ii) the effects of
excess bandwidth, chip-pulse shaping and the users' delay
distribution.

The optimum multiuser detector in \cite{verdu:86} is based on the
sufficient statistics obtained as output samples of a bank of
filters matched to the symbol spreading waveforms of all users.
The decorrelating detector in \cite{lupas:90} and the linear MMSE
detector in \cite{rupf:94} benefit from  the same sufficient
statistics. A method to determine the eigenvalue moments of a
correlation matrix in asynchronous CDMA systems using such
statistics and square root Nyquist chip pulse waveforms is
proposed in \cite{hwang:07}.

An alternative approach to generate useful statistics, which in
general are not sufficient, is borrowed from synchronous systems.
The received signals are processed by a filter matched to the chip
waveform and sampled at the chip rate. This approach is optimum
for single-user communications and chip-synchronous multi-user
communications, but causes aliasing to the signals of
de-synchronized users if the chip waveform has non-zero excess
bandwidth.
Discretization schemes using chip matched filters and sampling at
the Nyquist rate are studied in \cite{mantravadi:01}. There, the
notion of approximate sufficient statistics was introduced.
Furthermore, conventional CDMA systems with chip waveforms that approximate
sinc pulses were shown to outperform systems using rectangular pulse shaping.
Moreover, it was conjectured that sinc pulses are optimal for CDMA systems with linear MMSE multiuser detection.

In systems with bandlimited
waveforms, sampling at a rate faster than the Nyquist rate leads
to the same performance as the optimal time-discretization proposed in
\cite{verdu:86,lupas:90,rupf:94} if the condition of the sampling
theorem is satisfied \cite{schramm:99}. In contrast to the bank of symbol matched filters in
\cite{verdu:86}, this approach has the advantage that the
time delays of the users' signals need not be known
before sampling.

The impact of the shape  and  excess bandwidth of the chip pulses
received attention in
\cite{landolsi:95,landolsi:99,mantravadi:01}. In
\cite{landolsi:95,landolsi:99} an algorithm for the design of
chip-pulse waveforms for CDMA systems with \emph{conventional
detection}  has been proposed. The design criterion consists of
minimizing the bit error rate at the output of a \emph{single user
matched filter} in asynchronous CDMA systems while enforcing certain constraints on the
chip waveforms.

This work is organized in seven additional sections. Section
\ref{sec:mainresults}, gives a brief overview of the main results
found in this work.
Sections \ref{sec:notation} and \ref{sec:system_model} introduce notation
and the system model for asynchronous CDMA, respectively. Section
\ref{sec:MMSE_det} focuses on the analysis of linear MMSE
detectors and introduces the main mathematical tools for
analysis of the fundamental limits of asynchronous CDMA.
In Section \ref{sec_capacity}, the spectral efficiency of optimal joint decoding is derived on the basis of the results for the linear MMSE detector exploiting the duality between mutual information and MMSE.
Section~\ref{conj} addresses the extension of the presented results to more general settings.
Some
conclusions are drawn in Section \ref{chap:async_sec:conclusions}.

%%%%%%%%%%%%%%%%%%%%%%%%%%%%%%%%%%%%%%%%%%%%%%%%%%%%%%%%%%%%%%%%%%%%%%%%%%%
\section{Main Results}
\label{sec:mainresults}

Before going into the main results of this work, it is helpful to
get some intuition on asynchronous CDMA systems. First of all, one
might be interested in the question which chip waveform gives the
highest spectral efficiency for otherwise arbitrary system
parameters, like pulse shape, pulse width in time and frequency,
system load, etc. There is a surprisingly easy answer to this
question that does not require any sophisticated mathematical
tools:
\begin{proposition}
Without constraints on the system parameters, Nyquist sinc-pulses maximize spectral efficiency .
\end{proposition}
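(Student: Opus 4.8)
The plan is to sidestep the random-matrix analysis entirely and argue information-theoretically, viewing the entire multiuser waveform as a single input to the underlying continuous-time additive white Gaussian noise channel. First I would fix the quantities that any fair comparison must hold constant: the total received power $P$ aggregated over all users, the noise spectral density $N_0$, and the chip rate $1/T_c$ (which fixes the signalling rate), while letting the chip pulse, the load, the delay distribution, and all remaining parameters be arbitrary. For every admissible chip waveform the transmitted signal, and hence the received signal $y(t)=x(t)+\text{noise}$, occupies a bandwidth $W$ equal to the bandwidth of the chip pulse, with $W$ minimised over Nyquist pulses precisely by the zero-excess-bandwidth sinc pulse, for which $W=W_0$.

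The core of the argument is a converse. Since the users' data determine $x(t)$, which determines $y(t)$, the data-processing inequality bounds the sum mutual information by $I(x;y)$, which cannot exceed the capacity of the AWGN channel of bandwidth $W$ and power $P$. Dividing by the bandwidth, the spectral efficiency of any such system obeys
\begin{equation}
\frac{C}{W}\;\le\;\log_2\!\left(1+\frac{P}{N_0 W}\right).
\end{equation}
The key elementary observation is that the right-hand side is strictly decreasing in $W$ for fixed $P$ and $N_0$: diluting a fixed power over more bandwidth lowers the signal-to-noise ratio per dimension, and by concavity of the logarithm the loss in rate per dimension outweighs the gain in the number of dimensions. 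Hence the upper bound is largest for the smallest bandwidth, i.e.\ for the sinc pulse.

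It then remains to establish achievability for the sinc pulse, namely that its spectral efficiency actually equals $\log_2(1+P/(N_0W_0))$ and hence dominates every competitor. Here I would use that a sinc pulse carries no excess bandwidth, so a chip-matched filter sampled at the chip rate satisfies the sampling theorem and yields sufficient statistics; the asynchronous sinc system is therefore informationally equivalent to the synchronous one and loses nothing relative to the ambient channel. The large-system spectral efficiency of optimal joint decoding, derived later in this paper, then supplies for the sinc pulse exactly the value of the converse bound, so the bound is met and the claim follows.

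The main obstacle I anticipate is not the converse, which is immediate, but pinning down the comparison and defending achievability. One must state precisely what is held fixed so that ``normalised SNR'' is well defined, and exclude the degenerate regime $W\to0$ by tying $W$ to the chip rate through the Nyquist criterion; and one must invoke, rather than merely bound, the optimality of the excess-bandwidth-free system. Once this is in place, every alternative waveform either wastes spectrum in the synchronous case or, as shown in the sequel, recovers the dimensions lost to desynchronisation only at the cost of a proportional bandwidth increase, and in both cases the monotonicity of $\log_2(1+P/(N_0W))$ in $W$ decides the comparison in favour of the sinc pulse.
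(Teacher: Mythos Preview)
Your approach coincides with the paper's: a single-user AWGN converse (the paper phrases it as ``the multiuser system can never outperform the single user system, since we could otherwise improve a single user system by virtually splitting the single user into many virtual users'') combined with the observation that single-user spectral efficiency is maximized by the minimum-bandwidth Nyquist pulse, plus achievability for that pulse.

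There is, however, a gap in your achievability step. You assert that for the sinc pulse the large-system spectral efficiency ``supplies\dots exactly the value of the converse bound,'' but at any finite load $\beta$ random-spreading CDMA falls strictly short of the single-user bound; equality holds only in the limit $\beta\to\infty$. The paper makes this limit explicit and, rather than forward-referencing its own later results, invokes the prior Verd\'u--Shamai result \cite{verdu:99a} that synchronous CDMA with Nyquist sinc pulses reaches the single-user spectral efficiency as the load diverges. Your detour through the asynchronous-to-synchronous equivalence for sinc pulses is correct in spirit but unnecessary here: since the proposition leaves all system parameters free, exhibiting achievability for the synchronous case already suffices. Once you insert the $\beta\to\infty$ limit and replace the forward reference by the Verd\'u--Shamai citation, your argument is the paper's.
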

\begin{proof}
The proof is by contradiction. First, it is well-known that the
spectral efficiency of a single user channel is maximized by
Nyquist sinc-pulses. Further, we know from \cite{verdu:99a} that
the spectral efficiency of a synchronous CDMA system with Nyquist
sinc-pulses becomes identical to the spectral efficiency of a
single user channel, as the load converges to infinity. Finally,
the multiuser system can never outperform the single user system,
since we could otherwise improve a single user system by virtually
splitting the single user into many virtual users. Thus, the
Nyquist sinc pulse is optimum also for the multi-user system.
\end{proof}
Note that, from the previous proof, the Nyquist sinc pulse is
optimum for an infinite system load. However, we cannot judge
whether the optimum is unique from the line of thought proposed in
our proof. In fact, a straightforward application of a more general
result in this paper (shown in the Appendix
\ref{section:proof_proposition_2}) is the following:
\begin{proposition}
\label{prop2} Asynchronous CDMA systems with any sinc-pulses, no
matter whether they are constrained to the Nyquist bandwidth or to
a larger, or even to a smaller bandwidth\footnote{It is worth noticing that any chip pulse with bandwidth smaller than the Nyquist bandwidth necessarily introduces inter-chip interference.}, and users whose
empirical delays are uniformly distributed within a symbol interval achieve the same spectral efficiency as a single
user channel, if the load converges to infinity.
\end{proposition}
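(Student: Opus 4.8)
The plan is to reduce the statement to the general large-system spectral efficiency formula established in the body of the paper and to specialize it to the two hypotheses---a sinc chip waveform and a uniform delay distribution. This general formula, which is the more general result invoked for Appendix~\ref{section:proof_proposition_2}, expresses the spectral efficiency as a frequency integral over the support of the chip pulse, with an integrand governed by the squared magnitude of the chip-pulse spectrum $|P(f)|^2$ together with a self-consistent fixed-point equation for the effective interference spectral density. The delay distribution enters this formula through the aliasing (periodization at the chip-rate spacing) induced by desynchronization, so both hypotheses act directly on the ingredients of the integrand.

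First, I would record the defining property of a sinc pulse: its power spectrum $|P(f)|^2$ is flat on its band $[-W,W]$ and vanishes outside, regardless of whether $W$ equals, exceeds, or falls below the Nyquist bandwidth. Flatness makes the portion of the integrand that depends on $|P(f)|^2$ constant across the band, so the frequency integral collapses to the bandwidth times a single scalar quantity. Second, I would exploit the uniform delay hypothesis: with the empirical delays uniform over a symbol interval, the folding average that the general formula applies to $|P(f)|^2$ returns a frequency-flat effective spectrum, and combined with the sinc flatness this forces the effective interference spectral density to be frequency independent. This is where the paper's intuition becomes quantitative: uniform desynchronization privileges no frequency, so the excess bandwidth contributes genuine extra signal-space dimensions rather than aliased interference.

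Finally, I would take the load $K/N \to \infty$. In that limit the fixed-point equation for the effective interference degenerates and the per-frequency contribution converges to the single-user value $\log(1+\mathrm{SNR})$; integrating the now-constant integrand over $[-W,W]$ and normalizing recovers exactly the single-user spectral efficiency, independently of $W$. This matches Proposition~1 and \cite{verdu:99a} in the Nyquist-bandwidth case and extends it uniformly to all sinc bandwidths.

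The main obstacle is the sub-Nyquist regime flagged in the footnote, where the pulse is not square-root Nyquist and inter-chip interference is present. There one must verify that the uniform-delay averaging still flattens the effective spectrum despite the correlation introduced by inter-chip interference, so that the single-user limit continues to hold. Handling this interplay between a non-orthogonal pulse and the delay averaging inside the general formula is the delicate step, whereas the at- or above-Nyquist cases follow more directly from flatness alone.
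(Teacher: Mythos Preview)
Your approach is correct and is essentially the route the paper takes: specialize the large-system formula for the MMSE multiuser efficiency (Corollary~\ref{cor_MMSE_raised_cosine}) to a sinc pulse with uniform delays, obtain a frequency-independent fixed point because $|\Phi(\omega)|^2$ is flat on its support, pass to capacity via the MMSE--mutual-information relation, and send $\beta\to\infty$. The paper, however, compresses the argument via one algebraic observation you do not make explicit: in the sinc fixed-point equation \eqref{fix_point_aux} the load and the relative bandwidth enter only through the ratio $\beta/\alpha$, so the asynchronous system with load $\beta$ and bandwidth parameter $\alpha$ has the \emph{same} multiuser efficiency as the synchronous Nyquist-pulse system with load $\beta/\alpha$. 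This immediately gives $\mathcal{C}^{\text{(sinc)}}(\beta,\mathrm{SNR},\alpha)=\alpha\,\mathcal{C}^{(\mathrm{syn})}(\beta/\alpha,\mathrm{SNR})$, and since spectral efficiency is $\mathcal{C}^{\text{(sinc)}}/\alpha = \mathcal{C}^{(\mathrm{syn})}(\beta/\alpha,\mathrm{SNR})$, the $\beta\to\infty$ limit is exactly the synchronous infinite-load limit already known from \cite{verdu:99a}. Your frequency-integral reasoning rederives this, but the load-rescaling shortcut is cleaner and makes it transparent that the answer depends on $\alpha$ only through $\beta/\alpha$.

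Your flagged ``main obstacle'' in the sub-Nyquist case is not an obstacle at all, and worrying about it suggests a slight misreading of what the general formula already encodes. The fixed-point equation \eqref{fixed_point_eq_cor_MMSE_raised_cos} is derived for \emph{arbitrary} $\Phi$, square-root Nyquist or not; inter-chip interference is already absorbed into the large-system analysis and does not require separate treatment. The only ingredient you need from the pulse is that $|\Phi(\omega)|^2$ is constant on its support, which a sinc of any bandwidth satisfies. In fact the sub-Nyquist case is \emph{easier}: Theorem~\ref{theo:small_bandwidth_MMSE} shows that for $B\le 1/(2T_c)$ the same multiuser-efficiency formula holds for \emph{every} delay distribution, not just the uniform one, so the uniform-delay averaging you are concerned about is not even needed there.
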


The optimization of the system load neither gives the
theoretically most interesting cases to consider nor the
practically most relevant. Let us, thus, look at which chip
waveforms achieve the highest spectral efficiency for a fixed
load. Surprisingly, the result is not a little bit more useful for
practical applications:
\begin{corollary}
The chip waveform that maximizes the spectral efficiency for a
given finite load and given chip rate has vanishing bandwidth.
Furthermore, the maximum spectral efficiency is the same as the
one for the single user channel.
\end{corollary}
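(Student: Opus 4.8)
The plan is to establish two bounds that pin the maximizing spectral efficiency from above and below: an upper bound showing that no chip waveform, at any finite load, can exceed the single-user spectral efficiency, and a matching achievability statement showing that this ceiling is approached as the chip-pulse bandwidth tends to zero. Together these imply that the supremum over chip waveforms equals the single-user spectral efficiency and that it is approached in the vanishing-bandwidth limit.

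For the upper bound I would reuse the virtual-splitting argument from the proof of Proposition~1: any multiuser system that exceeded the single-user spectral efficiency could be used to improve a single-user channel by splitting its input into many virtual users, which is a contradiction. This argument is insensitive to the load and to the chip rate, so the single-user spectral efficiency is a valid ceiling for every finite load $K/N$ and every fixed chip rate.

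The heart of the proof is achievability, for which I would invoke the general spectral-efficiency formula derived in Appendix~\ref{section:proof_proposition_2} (the same result underlying Proposition~\ref{prop2}), which expresses the spectral efficiency $\mathcal{C}$ as an average, over the folded chip-pulse spectrum, of per-frequency spectral efficiencies, each governed by an effective load. The key observation is that at a fixed chip rate the number of signal-space dimensions available per symbol interval scales with the time-bandwidth product $W T_{\mathrm{c}}$, whereas the number of users per symbol interval is fixed by $K/N$; hence shrinking the bandwidth $W \to 0$ drives the effective load at every frequency to infinity. Since the optimal-decoding spectral efficiency is increasing in load and saturates, by the infinite-load limit underlying Proposition~\ref{prop2}, at the single-user value, the per-frequency spectral efficiencies all converge to that value, and integrating over the (vanishing) band yields the single-user spectral efficiency in the limit. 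This same monotonicity shows that $\mathcal{C}$ decreases with $W$ at fixed chip rate, so the optimum is genuinely approached as $W \to 0$ rather than at some interior bandwidth.

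The main obstacle I anticipate is making the correspondence ``vanishing bandwidth $\Leftrightarrow$ infinite effective load'' precise and controlling the limit uniformly: a sub-Nyquist chip pulse introduces inter-chip interference, so one must verify that the general formula of Appendix~\ref{section:proof_proposition_2} remains valid in this regime and that the per-frequency spectral efficiencies converge to the single-user value uniformly enough to exchange the limit with the frequency average. A secondary subtlety is the power normalization: the single-user benchmark must be held fixed (for instance at fixed signal-to-noise ratio per chip) as $W \to 0$, so that the limit genuinely reproduces the single-user spectral efficiency rather than diverging. Once these points are settled, the upper bound and the achievability limit coincide, and the claim follows.
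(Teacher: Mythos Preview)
Your approach is correct but takes a genuinely different route from the paper. The paper never analyzes the limit $B\to 0$ through a spectral-efficiency formula; it gives a purely structural argument. Each of the $K$ physical users is decomposed into $M$ time-multiplexed virtual users so that the transmitted signal is unchanged, but one now has $MK$ virtual users, a virtual chip interval $MT_c$, a virtual load tending to infinity with $M$, and virtual delays $\{0,T_c,\dots,(M-1)T_c\}$ whose empirical distribution on $[0,MT_c)$ becomes uniform as $M\to\infty$. Choosing a sinc pulse of bandwidth $1/(2MT_c)$ makes it a Nyquist pulse for the virtual system, so Proposition~\ref{prop2} applies verbatim and the single-user bound is reached; the vanishing physical bandwidth is simply the by-product of $M\to\infty$. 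This trick buys a proof that needs only Proposition~\ref{prop2} and manufactures, for free, both the infinite load and the uniform delay distribution that proposition requires, without ever invoking an explicit capacity formula.

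Your direct-limit argument also works, but the right tool is not a ``per-frequency'' decomposition and there is no uniform-convergence issue to worry about: the scalar identity $\mathcal{C}^{(\mathrm{sinc})}(\beta,\mathrm{SNR},\alpha)=\alpha\,\mathcal{C}^{(\mathrm{syn})}(\beta/\alpha,\mathrm{SNR})$ of (\ref{eq:sinc_capacity}) already gives spectral efficiency $\mathcal{C}^{(\mathrm{syn})}(\beta/\alpha,\mathrm{SNR})$, which is monotone in $\beta/\alpha$ and tends to the single-user value as $\alpha\to 0$. The only genuine point to check---and you do flag it---is that this identity survives the sub-Nyquist regime $\alpha<1$; that is exactly what Corollary~\ref{cor_sinc_pulse} (uniform delays) or, more generally, Theorem~\ref{theo:small_bandwidth_MMSE} (arbitrary delays, applicable because $B\le 1/(2T_c)$) guarantees. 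So your route is analytically more direct but leans on the machinery of Section~\ref{sec:MMSE_det}, whereas the paper's virtual-user construction is self-contained at the level of Section~\ref{sec:mainresults}.
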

\begin{proof}
The corollary follows directly from Proposition~\ref{prop2}. Note that Proposition~\ref{prop2} holds for an arbitrary chip rate and an arbitrary bandwidth of the chip waveform and states that the single user bound is reached at infinite load.
Though, the corollary is stated for a given finite load, we are free to decompose each physical user into $M$ virtual users and let $M$ increase to infinity such that the virtual load becomes infinite. Therefore, we take a user's signal and divide it into $M$ data streams that are time-multiplexed in such a way as to result in the same physical transmit signal for that user.
An example of such a decomposition is illustrated in Fig.~\ref{coro1}.
\begin{figure}
\centerline{\epsfig{file=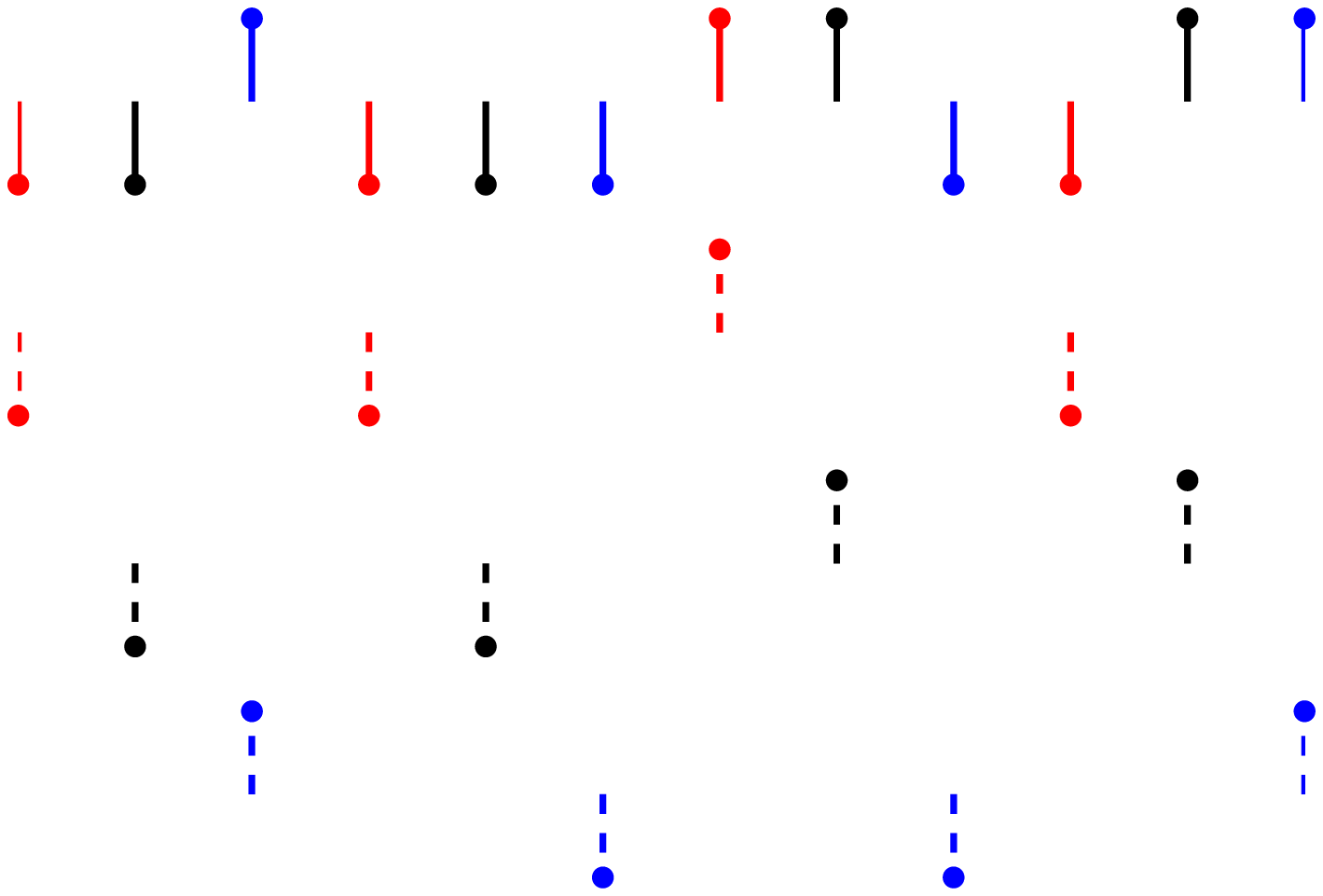, width=8.5cm,height=4.3cm}}
\caption{\label{coro1} Decomposition of a spreading sequences of length $N=12$ into $M=3$ virtual spreading sequences.}
\end{figure}
Applying this idea to each of the $K$ physical users, we have created $MK$ virtual users. Furthermore, the chip interval has grown from $T_c$ to $MT_c$ and the virtual users are asynchronous with a discrete uniform distribution of delays within the virtual chip interval of length $MT_c$. Consider now a sinc pulse of bandwidth $1/(2MT_c)$ as chip waveform. If we take the limit $M\to\infty$ for the system of virtual users, the delay distribution converges to the uniform distribution within the virtual chip interval and the number of virtual users converges to infinity. Thus, Proposition~\ref{prop2} applies and the single user bound is reached. Therefore, this choice of chip waveforms whose bandwidth vanishes is optimal.
\end{proof}

Optimizing chip waveforms to maximize spectral efficiency has proven to hardly aid the practical design of CDMA systems, since the optima are achieved for system parameters, e.g.,\ infinite load and/or vanishing bandwidth, that are far from the limits of practical implementation.
Furthermore, the choice of the chip waveform is influenced by many other factors than spectral efficiency like the difficulty to implement steeply decaying frequency filters and the need to keep the peak-to-average power ratio of the continuous-time transmit signal moderate. Therefore, many commercial CDMA systems, e.g.,\ the Universal Mobile Telecommunication System (UMTS), use chip waveforms with excess bandwidth. The UMTS standard uses root-raised cosine pulses with roll-off factor 0.22.
Motivated by the theoretical findings above and the practical constraints on the design of chip waveforms, the rest of this paper puts the focus on the performance analysis of CDMA system with a given fixed chip waveform. As it will be seen, this gives rise to a rich collection of insights into CDMA systems with asynchronous users. The main results are summarized in the following.

CDMA systems using chip pulse waveforms with bandwidth $B$ not
greater than half of the chip rate $\frac{1}{T_c},$ i.e.,\ $B \leq
\frac{1}{2 T_c},$ perform identically irrespective of whether the
users are synchronized or not for a large class of performance
measures.  Furthermore, our result generalizes the
equivalence result for the ideal Nyquist sinc waveform (with
bandwidth $\frac{1}{2 T_c}$) in \cite{mantravadi:02} to any chip
pulse satisfying the mentioned bandwidth constraint and to any
linear multistage detector and the optimal capacity-achieving
joint decoder. Note that the performance is independent of the
time delay distribution. Increasing the bandwidth of the chip
waveform above $\frac{1}{2 T_c}$, i.e.,\ allowing for some excess
bandwidth as it is customary in all implemented systems, the
behaviors of CDMA systems change substantially. They depend on the
time delay distribution and the equivalence between synchronous
and asynchronous systems is lost.

For any choice of chip waveform, we capture the performance of a
large CDMA system with linear MMSE detection by a positive
definite frequency-dependent Hermitian matrix $\boldsymbol{\Upsilon}(\Omega)$
whose size is the ratio of sampling rate to chip rate
(the sampling rate is a multiple of the chip rate). We require
neither the absence of inter-chip interference, nor that the
samples provide sufficient statistics, nor a certain delay
distribution. Unlike for synchronous users, the multiuser
efficiency \cite{verdu:98} in the large-system limit is not
necessarily unique for all users. The matrix $\boldsymbol{\Upsilon}(\Omega)$
reduces to a scalar frequency dependent function $\eta(\omega)$ in cases where oversampling is
not needed. Interestingly, the same holds true even in cases with
excess bandwidth if the delay distribution is uniform. The scalar
$\eta(\omega)$ can be understood as a multiuser efficiency
spectral density with the multiuser efficiency being its integral
over frequency $\omega$. We find that in large systems, the
effects of interference from different users and interference at
different frequencies decouple. We, thus, generalize Tse and
Hanly's \cite{tse:99b} concept of effective interference to the
concept of effective interference spectral density which decouples
the effects of interference in both user and frequency domain.

Excess bandwidth can be utilized if users are asynchronous.
While excess bandwidth is useless for synchronized systems in terms of multiuser efficiency, i.e.,\ all square root Nyquist pulses perform the same regardless of their bandwidth, desynchronizing users improves the performance of any system with non-vanishing excess bandwidth.

%%%%%%%%%%%%%%%%%%%%%%%%%%%%%%%%%%%%%%%%%%%%%%%%%%%%%%%%%%%%
\section{Notation and Some Useful Definitions}\label{sec:notation}

Throughout this work, upper and lower boldface symbols are respectively used
for matrices and vectors spanning a single symbol interval.
 Matrices and
vectors describing signals spanning more than a symbol interval
are denoted by upper boldface calligraphic letters.

In the following, we utilize \emph{unitary} Fourier transforms
both in the continuous time and in the discrete time domain.
The unitary Fourier transform of a signal
$s(t)$ in the continuous time domain is given by $S( \omega)=\frac{1}{\sqrt{2 \pi}} \int_{-\infty}^{+\infty} s(t) \mathrm{e}^{-j \omega t}
\mathrm{d}t $. The unitary Fourier transform of a sequence $\{
\ldots, c_{-1}, c_{0}, c_{1}, \ldots \}$ in the discrete time
domain is given by $c(\Omega)= \frac{1}{\sqrt{2
\pi}} \sum_{n=-\infty}^{+\infty} c_n \mathrm{e}^{-j \Omega n}$.
%Similarly, we use the \emph{unitary} discrete Fourier transform of a sequence of finite length.
We will refer to them shortly as
Fourier transform. Throughout this work, $\omega$ and $\Omega$ denote the angular frequency and the  angular frequency normalized to the chip rate, respectively. A function in $\Omega$ has support in the interval $[-\pi, \pi],$ or translations of it.

For further studies it is convenient to define  the concept of
\emph{$r$-block-wise circulant matrices of order $N$}:
\begin{definition} Let $r$ and $\N$ be  positive integers. An
$r$-block-wise circulant matrix of order $\N$  is an $r \N \times
\N$ matrix of the form
\begin{equation}
\Mat{C}=\left( \begin{array}{cccc}
 \Mat{B}_0 & \Mat{B}_1 & \cdots & \Mat{B}_{\N-1} \\
  \Mat{B}_{\N-1} & \Mat{B}_0 & \cdots & \Mat{B}_{\N-2} \\
  \vdots & \vdots &  &\vdots \\
 \Mat{B}_{1} & \Mat{B}_{2} & \cdots & \Mat{B}_{0} \\
\end{array} \right)
\end{equation}
with $\Mat{B}_i=(c_{1,i},c_{2,i}, \ldots, c_{r,i})^T.$
\end{definition}
In the matrix $\Mat{C}$, an $r \times N$ block row is obtained by a
circular right shift of the previous block.
Since the matrix
$\Mat{C}$ is univocally defined by the unitary Fourier transforms
of the sequences
\begin{equation} c_s(\Omega)=\frac{1}{\sqrt{2 \pi}} \sum_{k=0}^{\N-1}
c_{s,k} \mathrm{e}^{-j \Omega k } \qquad s=1, \ldots, r,
\end{equation}
there exists a bijection $\mathfrak F$ from  the frequency dependent vector $\Mat c(\Omega)=[c_1(\Omega), c_2(\Omega),\dots,c_r(\Omega)]$ to $\Mat C$.
Thus,
\begin{equation}
\mathfrak F\{\Mat c(\Omega)\} = \Mat C.
\end{equation}

Furthermore, the superscripts $\cdot^T$ and $\cdot^H$ denote the
transpose and the conjugate transpose of the matrix argument,
respectively. $\I_{n}$ is the identity matrix of size $n \times n$
and $\mathbb{C}$, $\mathbb{Z}$, $\mathbb{Z}^{+},\mathbb N,$  and
$\mathbb{R}$ are the fields of complex, integer, nonnegative
integer, positive integer, and real numbers, respectively.
$\mathrm{tr}(\cdot)$, $\| \cdot \|$, and $|\cdot|$ are the trace,
the Frobenius norm, and the spectral norm of the argument,
respectively, i.e.,\ $\|\mathbf{A}
\|=\sqrt{\mathrm{tr}(\mathbf{AA}^H)}$,
$|\mathbf{A}|=\max\limits_{\mathbf{x}^H\mathbf{x} \leq
1}\mathbf{x}^H \mathbf{A}\mathbf{A}^H \mathbf{x} $.
$\mathrm{diag}(\cdot): \mathbb{C}^{n} \mapsto \mathbb{C}^{n \times
n}$ transforms an $n$-dimensional vector into a diagonal matrix of
size $n\times n$ having as diagonal elements the components of the
vector in the same order. $\E\{\cdot\}$ and $\mathrm{Pr}\{\cdot\}$
are the expectation and probability operators, respectively.
$\delta_{ij}$ is the Kronecker symbol and $\delta(\lambda)$ is
Dirac's delta function. $\Mat{X}=(x_{ij})_{i=1,\ldots,
n_1}^{j=1,\ldots, n_2}$ is the $n_1 \times n_2$ matrix whose
$(i,j)$-element is the scalar $x_{ij}$.
$\Mat{X}=(\Mat{X}_{ij})_{i=1,\ldots, n_1}^{j=1,\ldots, n_2}$ is
the $n_1 q_1 \times n_2 q_2$ block matrix whose $(i,j)$-block is
the $q_1 \times q_2$ matrix  $\Mat{X}_{ij}$. The notation $\lfloor
\cdot \rfloor$ is adopted for the operator that yields the maximum
integer not greater than its argument and $x \mathrm{mod} y$
denotes the modulus, i.e., $x \mathrm{mod} y=x-\left \lfloor
\frac{x}{y}\right\rfloor y$. Furthermore, $\chi(\Vec{x} \in
\mathcal{A})$ denotes the indicator function of the variable
$\Vec{x}$ on the set $\mathcal{A}$ and $\chi(\Vec{x} \in
\mathcal{A})=1$ if $\Vec{x} \in \mathcal{A}$ and zero otherwise.

%%%%%%%%%%%%%%%%%%%%%%%%%%%%%%%%%%%%%%%%
\section{System Model}\label{sec:system_model}

Let us consider an asynchronous CDMA system with $K$ users in the
uplink channel. Each user and the base station are equipped with a
single antenna. The channel is flat\footnote{
Flat fading is no restriction of generality here as long as the excess delay is much smaller than the symbol interval $T_s$. This is, as the effect of multi-path can be incorporated into the shape of the chip waveform.
} fading and impaired by
additive white Gaussian noise. Then, the signal received at the
base station, in complex base-band notation, is given by
\begin{equation}\label{general_continuous_system}
   y(t)=\sum_{k=1}^{K} a_{k} s_k(t-\tau_k) + w(t)
   \qquad t \in (-\infty, +\infty).
\end{equation}
Here, $a_{k}$ is the received signal amplitude of  user $k$, which
takes into account the transmitted amplitude, the effects of the
flat fading, and the carrier phase offset; $\tau_k$ is the time
delay of user $k$; $w(t)$ is a zero mean white, complex Gaussian
process with power spectral density $N_0$; and $s_k(t)$
is the spread signal of user $k$. We have
\begin{equation}
   s_k(t)=\sum_{m=-\infty}^{+\infty} b_k[m] c_k^{(m)}(t),
\end{equation}
where $b_k[m]$ is the $m^{\mathrm{th}}$ transmitted symbol of user
$k$ and
\begin{equation}\label{spreading_waveform}
c_k^{(m)}(t)=\sum_{n=0}^{N-1} s_{k,m}[n] \psi(t-mT_s-nT_c)
\end{equation}
is its spreading waveform at time $m$. Here, $\vecs_{k}^{(m)}$ is the
spreading sequence vector of user $k$ in the $m^{\mathrm{th}}$ symbol
interval %\footnote{The spreading sequence of user $k$ possibly
%varies from symbol to symbol. This model is general and enables a
%proper description also of the spreading sequences of some CDMA
%systems currently in use such as the long spreading codes of the
%FDD (Frequency Division Duplex) mode in a UMTS uplink channel. }
 with elements $s_{k,m}[n]$, $n=0,\ldots, \N-1$. $T_s$ and
$T_c=\frac{T_s}{\N}$ are the symbol and chip interval,
respectively.

The users' symbols $b_k[m]$ are independent and identically
distributed (i.i.d.) random variables with $\mathrm{\mathrm{E}}\{|b_k[m]|^2
\}=1$ and $\mathrm{E}\{b_k[m]\}=0$.  The elements of the spreading
sequences $s_{k}^{(m)}[n]$ are assumed to be i.i.d.\ random variables
with $\mathrm{E}\{|s_{k,m}[n]|^2\}=\frac{1}{N}$ and
\mbox{$\mathrm{E}\{s_{k,m}[n]\}=0$}. This assumption properly
models the spreading sequences of some CDMA systems currently in
use, such as the long spreading codes of the FDD (Frequency
Division Duplex) mode in the UMTS uplink channel.

The chip waveform $\psi(t)$ is limited to bandwidth $B$
and energy $E_{\psi}=\int_{-\infty}^{+\infty}|\psi(t)|^2
\mathrm{d}t.$ Because of the  constraint on the variance of the
chips, i.e.,\ $\E\{|s_{k,m}[k]|^2 \}= \frac{1}{\N}$, the mean
energy of the signature waveform satisfies $\E \left \{
\int_{-\infty}^{+\infty}|c_k^{(m)}(t)|^2 \mathrm{d}t \right
\}=E_{\psi} $. We assume \begin{enumerate}
\item user 1 as reference user so that $\tau_1=0$,
\item the users are ordered according to increasing time delay with respect to the reference
user,
\item the time delay to be, at most, one chip interval so
that $\tau_k \in [0,T_c)$.
\end{enumerate}
Assumptions 1 and 2 are without loss of generality
\cite{verdu:93}. Assumption 3 is  made for the sake of
clarity and it will be removed in Section \ref{conj} where the
results are extended to the general case with $\tau_k \in
[0,T_s).$

At the receiver front-end, the base band signal is passed through
a filter with impulse response $g(t)$ and corresponding transfer
function $G(\omega)$ normalized such that
$\int_{-\infty}^{+\infty}|g(t)|^2 {\rm d}t =1$. We denote by
$\phi(t)$ the response of the filter to the input $\psi(t),$ i.e.,\
$\phi(t)= g(t) \ast \psi(t)$ and by $\Phi(\omega)$ its Fourier
transform. The filter output is sampled at rate $\frac{r}{T_c}$
with $r\in\mathbb N$. For further convenience, we also define
$E_\phi=\int_{-\infty}^{+\infty}|\phi(t)|^2{\rm d}t$.

Throughout this work we assume that the filtered chip pulse
waveform ${\phi}(t)$ is much shorter than the symbol waveform,
i.e.,\ ${\phi}(t)$ becomes negligible for $|t| > t_0$ and $t_0 \ll
T_s$. This technical assumption is usually verified in the systems
with large spreading factor we are considering. It allows to
neglect intersymbol interference. Thus, focusing on a given symbol
interval, we can omit the symbol index $m$ and the discrete-time
signal at the front-end output is given by
\begin{align}\label{discrete_output}
   y[p]=\sum_{k=1}^{K}a_{k}  b_k
   \overline{c}_{k}\left( \frac{p}{r}T_c - \tau_k \right)+w[p]
\end{align}
with sampling time $p \in \{0,\dots, rN{-}1\}$ and
\begin{equation}
\overline{c}_{k}(t)=\sum_{n=0}^{N-1} s_{k}[n] {\phi}
\left(t-n T_c \right).
\end{equation}
Here, $w[p]$ is discrete-time, complex-valued noise. In general,
$w[p]$ is not white, although the continuous process was white. However, it {\em is} white, if $g(t)\ast g(-t)$ is Nyquist with respect to the sampling rate. In this latter case, the noise variance is $\sigma^2=\frac{N_0 r}{T_c}.$ This expression accounts for the normalization of the front end filter.

In order to cope with the effects of oversampling, we consider an extended signal space with virtual spreading sequences of length $rN$.
The virtual spreading sequence of user $k$ is given by the $N r$-dimensional vector
\begin{equation}
\Mat{\overline\Phi}_{k}
\Vec{s}_{k}
\end{equation}
where
$\Vec{s}_{k} = (s_{k}[0] \ldots s_{k}[N-1])^{T},$
\begin{equation}
\label{block_wise_toeplitz_matrix}
   \overline{\Mat{\Phi}}_k= \left( \begin{array}{cccc}
     \Vec{\phi}_{-\tau_k} & \Vec{\phi}_{- T_c \!-\! \tau_k} & \ldots & \Vec{\phi}_{(-\N\!+\!1)T_c\!-\! \tau_k} \\
                 \Vec{\phi}_{T_c\!-\!\tau_k} & \Vec{\phi}_{-\tau_k} & \ldots & \Vec{\phi}_{(-\! \N\!+\!2)T_c\!-\!\tau_k} \\
 \vdots & \vdots & \vdots & \vdots \\       \Vec{\phi}_{(\!\N\!-\!1\!)T_c\!-\!\tau_k} & \Vec{\phi}_{(\N\!-\!2 )T_c\!-\!\tau_k} & \cdots & \Vec{\phi}_{\!-\!\tau_k} \\
\end{array} \right)
\end{equation}
is a $Nr \times N$ block matrix taking into account
the effects of delay and pulse shaping. Its blocks are the vectors $\Vec{\phi}_x=\left(\phi(x), \phi\left(x+\frac{T_c}{r}\right), \ldots, \phi\left(x+\frac{r-1}{r}T_c\right)\right)^T.$
In that way, we have described user $k$'s continuous-time channel with continuous delays canonically by the discrete-time channel matrix $\Mat{\overline\Phi}_k$.
Note that $\Mat{\overline\Phi}_k$ solely depends on the delay of user $k$, the oversampling factor $r$, the chip waveform, and the receive filter.

Structuring the matrix $\overline{\Mat{\Phi}}_k$ in blocks of
dimensions $r \times 1,$ it is block-wise
Toeplitz.  As well known \cite{gray:71,gray:72},
block-Toeplitz and block-circulant matrices are
asymptotically equivalent in terms of spectral distribution.
This asymptotic equivalence is sufficient for us, since  in this work we focus on performance measures of CDMA systems which depend only on the asymptotic eigenvalue distribution.
Similar asymptotically tight approximations
are used in the large system analysis of CDMA in frequency-selective
fading \cite{evans:00,li:04,cottatellucci:04}.

The equivalent block-circulant matrix is given by
\begin{equation}\label{matrice_Phi}
\Mat{\Phi}_{k} = \mathfrak F\left\{\left[{\phi}(\Omega,\tau_k),
{\phi}\left(\Omega,\tau_k-\tfrac{T_c}{r}\right),\ldots,
{\phi}\left(\Omega,\tau_k-\tfrac{(r-1)T_c}{r}\right)\right]\right\},
\end{equation}
where
\begin{equation}\label{discrete_Fourier_phi} {\phi}(\Omega,\tau)\overset{\triangle}{=} \frac{1}{T_c}
\sum_{\nu=-\infty}^{+\infty} \mathrm{e}^{  j \frac{
\tau}{T_c}(\Omega+ 2\pi \nu)} \Phi^{*} \left(\tfrac{1 }{T_c}(\Omega+ 2\pi \nu) \right)
\end{equation}
is the spectrum of the chip waveform $\phi(t)$ delayed by $\tau$ and sampled at rate $1/T_c$.
Thus, we replace the block-Toeplitz matrix $\Mat{\overline\Phi}_k$ for our asymptotic analysis by the block-circulant matrix $\Mat{\Phi}_{k}$ in the following and use the virtual spreading sequences
\begin{equation}
\Vec v_k =\Mat\Phi_k \Vec s_k.
\end{equation}

Let $\Mat{S}$ be the $rN \times K$ matrix of virtual
spreading, i.e.,\ $\Mat{S}=(\Mat{\Phi}_1 \Vec{s}_{1},
\Mat{\Phi}_2 \Vec{s}_{2}, \ldots \Mat{\Phi}_K \Vec{s}_{K}),$
$\Mat{A}$ the $K \times K$ diagonal matrix of received amplitudes,
$\Mat{H}=\Mat{S} \Mat{A},$ and $\Vec{b}$ and
\begin{equation}
\label{matrix_model}
\Vec y =\Mat{Hb}+\Mat w
\end{equation}
the vectors of transmitted and received signals, respectively.
Additionally, $\Vec{h}_{k}$ denotes the $k^{\mathrm{th}}$ column of the matrix
$\Mat{H}.$ Finally, we define the correlation
matrices $\Mat T=\Mat H\Mat H^H$,
$\Mat R=\Mat H^H \Mat H$ and the system load $\beta=\frac{K}{N}$.

%%%%%%%%%%%%%%%%%%%%%%%%%%%%%%%%%%%%%%%%%%

\section{Linear MMSE Detection}\label{sec:MMSE_det}

The linear MMSE detector $\Vec{{d}}_{k}$ generates a
soft decision $\widehat{b}_k=\Vec{{d}}_{k}^H \Mat y$
of the transmitted symbol $b_k$ based on the observation
$\Mat y$.
It can be derived from the Wiener-Hopf theorem \cite{Kaybook}
and is given by \be \Vec{{d}}_{k}=\E\{\Mat y
\Mat y^H\}^{-1} \E\{b_k^{*} \Mat y\} \ee with the expectation
taken over the transmitted symbols $\Mat b$ and the noise.
Specializing the Wiener-Hopf equation to the system model
(\ref{matrix_model}) yields
\begin{eqnarray}\label{MMSE_detector}
\Vec{d}_{k}&=&(\Mat H \Mat H^H+\sigma^2 \Mat{I})^{-1}\Vec{h}_{k}  \\
&=& c \cdot (\Mat H_{k} \Mat H_{k}^H+\sigma^2
\Mat{I})^{-1}\Vec{h}_{k}
\end{eqnarray}
for some $c\in\mathbb{R}$. Here, $\Mat H_{k}$ is the matrix
obtained from $\Mat H$ suppressing column $\Vec{h}_{k}$.
The second step follows from the matrix inversion lemma.

The performance of the linear MMSE detector is measured by the
signal-to-interference-and-noise ratio at its output
\cite{verdu:98} \be\label{SINR}
\mathrm{SINR}_{k}=\Vec{h}_{k}^H(\Mat H_{k}\Mat H_{k}^H+\sigma^2
\Mat{I})^{-1}\Vec{h}_{k}. \ee The SINR can be conveniently
expressed in terms of the multiuser efficiency $\eta_{k}$
\cite{verdu:98} \be \label{defMUE} \mathrm{SINR}_{k} =
\frac{|a_{k}|^2 E_{\phi}}{N_0}\, \eta_{k}. \ee
The multiuser
efficiency is a useful measure, since for large systems it is
identical for all users in special cases \cite{shitz:99} and it is
related to the spectral efficiency \cite{shitz:99,mueller:03,guo:04}.

%\subsection{Asymptotic Results}\label{math_tools_MMSE}
The SINR depends on the spreading
sequences, the received powers of all users, the chip pulse
shaping, and the time delays of all users.
To get deeper insight on the linear MMSE detector it is
convenient to analyze the performance for random spreading sequences in the large system limits, i.e.,\ as $K, N \rightarrow \infty
$ with the ratio $\frac{K}{N} \rightarrow \beta$ kept fixed.
The large-system analysis will identify the
macroscopic parameters that characterize a chip-asynchronous CDMA
system and the influences of chip pulse shaping and delay
distribution.

In this section we present  the large system analysis of a linear
MMSE detector for chip-asynchronous CDMA systems with random
spreading. Provided that the noise at the output of the front end
is white, the analysis applies to CDMA systems using either
optimum or suboptimum statistics, any chip pulse waveforms, and
any set of time delays in $[0,T_c)$ if their empirical
distribution function converges to a deterministic limit.

In Appendix \ref{section:proof_theo_sinr_MMSE_chip_asynch}, we derive the following theorem on the large-system performance of chip-asynchronous CDMA:

\begin{theor}\label{theo:sinr_MMSE_chip_asynch}
Let $\matA \in \mathbb{C}^{\K \times \K}$ be a diagonal matrix
with $k^{\text{th}}$ diagonal element $a_{k}\in \mathbb{C}$ and $T_c$ a positive
real. Given a function $\Phi(\omega): \mathbb{R} \rightarrow
\mathbb{C}$, let $\phi(\Omega,\tau)$ be as in
(\ref{discrete_Fourier_phi}). Given a positive integer $r$, let
$\Mat\Phi_k$, $k=1, \ldots, \K$, be $r$-block-wise circulant
matrices of order $\N$ defined in (\ref{matrice_Phi}).
 Let
${\matH}={\matS}\matA$ with
${\matS}=[\Mat\Phi_1\vecs_{1},
\Mat\Phi_2 \vecs_{2}, \ldots,
\Mat\Phi_K\vecs_{\K}]$ and $\vecs_k \in\mathbb{C}^{N\times 1}$.

Assume that the function $|\Phi(\omega)|$ is upper bounded and has
finite support. The receive filter is such that the sampled
discrete-time noise process is white. The vectors $\vecs_{k}$ are
independent with i.i.d.\ circularly symmetric Gaussian  elements.
Furthermore, the elements $a_{k}$ of the matrix $\matA$ are
uniformly bounded for any $K.$  The sequence of the empirical
joint distributions
$F_{|\matA|^2,T}^{(\K)}(\lambda,\tau)=\frac{1}{\K} \sum_{k=1}^{\K}
\chi(\lambda>|a_{k}|^2)\chi(\tau>{\tau}_k)$ converges almost
surely, as $\K \rightarrow \infty$, to a non-random distribution
function $F_{|\matA|^2,T}(\lambda,\tau).$

Then, given the received power $|a_{k}|^2$, the time delay
${\tau}_k,$ and the variance of the white noise
$\sigma^2=\frac{rN_0}{T_c}$, the SINR of user $k$ at the output of
a linear MMSE detector for a CDMA system with transfer matrix
${\matH}$ converges in probability as $K, N \rightarrow \infty$
with $\frac{K}{N} \rightarrow \beta$ and $r$ fixed to

\begin{equation}\label{limit_SINR_MMSE_genaral}
\lim_{K=\beta N \rightarrow \infty} \mathrm{SINR}_k =
\frac{{|a_{k}|^2}}{2\pi} \int\limits_{-\pi}^{\pi}
\Mat{\Delta}_{\phi,r}^H(\Omega, {\tau}_k) \Mat{\Upsilon}(\Omega)
\Mat{\Delta}_{\phi,r}(\Omega, {\tau}_k) \mathrm{d}\Omega
\end{equation}
where $\Mat{\Upsilon}(\Omega)$ is the unique positive definite $r \times
r $ matrix solution of the fixed point matrix equation\footnote{Here, the integration measure is meant to denote ${\rm d}F_{|\Mat A|^2,T}(\lambda,\tau) = f_{|\Mat A|^2,T}(\lambda,\tau) {\rm d}\lambda{\rm d}\tau$ in case such a representation exists.}
\begin{equation}\label{multiuser_efficiency_mat}
\Mat{\Upsilon}^{-1}(\Omega) = \sigma^2 \Mat{I}_{r} + \beta
\int \dfrac{\lambda
\Mat{\Delta}_{\phi,r}(\Omega, \tau)\Mat{\Delta}_{\phi,r}^H(\Omega, \tau)
{\rm d} F_{|\matA|^2, T } (\lambda, \tau)} {1 + \frac\lambda{2\pi}
\int_{-\pi}^{\pi} \Mat{\Delta}_{\phi,r}^H(\Omega^{\prime},
\tau)\Mat{\Upsilon}(\Omega^{\prime})\Mat{\Delta}_{\phi,r}(\Omega^{\prime}, \tau) \mathrm{d}\Omega^{\prime} }
\qquad -\pi < \Omega \leq \pi,
\end{equation}
and
\begin{equation}\label{delta_phi_r_def}\Mat{\Delta}_{\phi,r}(\Omega,{\tau})=\left(
\begin{array}{c}
 {\phi}(\Omega, \tau) \\
 {\phi}(\Omega,\tau-\frac{T_c}{r}) \\
 \vdots \\
 {\phi}(\Omega, \tau-\frac{T_c (r-1)}{r}) \\
\end{array}
\right).
\end{equation}
\end{theor}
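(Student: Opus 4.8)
The plan is to reduce the SINR to a normalized trace of a resolvent, to diagonalize the block-circulant structure by a block Fourier transform, and then to invoke a deterministic-equivalent argument from random matrix theory to obtain the block-diagonal limit together with its fixed-point characterization. First I would write $\vech_k = a_k\Mat{\Phi}_k\vecs_k$, so that (\ref{SINR}) becomes
\[
\mathrm{SINR}_k = |a_k|^2\,\vecs_k^H\Mat{\Phi}_k^H\bigl(\matH_k\matH_k^H+\sigma^2\matI\bigr)^{-1}\Mat{\Phi}_k\vecs_k .
\]
Because $\Mat{\Phi}_k$ is deterministic and $\vecs_k$ is independent of $\matH_k$, the trace lemma (concentration of Gaussian quadratic forms about their mean) replaces the quadratic form by $\frac{|a_k|^2}{\N}\tr\bigl((\matH_k\matH_k^H+\sigma^2\matI)^{-1}\Mat{\Phi}_k\Mat{\Phi}_k^H\bigr)$ up to a vanishing error; moreover $\matH_k$ may be replaced by $\matH$ in the large-system limit, since deleting one column is a rank-one perturbation of the resolvent.

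Next I would exploit the block-circulant structure. Writing $\Mat{U}=\Mat{F}\otimes\matI_r$ for the block Fourier transform, the bijection $\mathfrak F$ of (\ref{matrice_Phi}) gives $\Mat{U}\Mat{\Phi}_k=\Mat{D}_k\Mat{F}$, where $\Mat{D}_k$ is the matrix whose $m$-th $r\times 1$ block column is $\Mat{\Delta}_{\phi,r}(\Omega_m,\tau_k)$ with $\Omega_m=2\pi m/\N$. Crucially, since the entries of $\vecs_k$ are i.i.d. circularly symmetric Gaussian and $\Mat{F}$ is unitary, the transformed sequence $\tilde{\vecs}_k=\Mat{F}\vecs_k$ is again i.i.d. Gaussian. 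Hence in the Fourier domain $\matH\matH^H$ is unitarily equivalent to $\sum_k|a_k|^2\Vec{u}_k\Vec{u}_k^H$ with $\Vec{u}_k=\Mat{D}_k\tilde{\vecs}_k$, a sum of independent rank-one terms whose covariances $\E[\Vec{u}_k\Vec{u}_k^H]=\frac1{\N}\Mat{D}_k\Mat{D}_k^H$ are block-diagonal across the $\N$ frequency bins, the $m$-th block being $\frac1{\N}\Mat{\Delta}_{\phi,r}(\Omega_m,\tau_k)\Mat{\Delta}_{\phi,r}^H(\Omega_m,\tau_k)$. Throughout I would replace the block-Toeplitz matrices $\overline{\Mat{\Phi}}_k$ by their circulant counterparts $\Mat{\Phi}_k$, justified by the asymptotic spectral equivalence already invoked in the text.

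I would then apply the deterministic-equivalent machinery for sums of independent rank-one terms with a block-separable variance profile. Using the leave-one-out (cavity) method with the Sherman--Morrison identity and martingale concentration, one shows that the resolvent $\Mat{Q}=(\Mat{U}\matH\matH^H\Mat{U}^H+\sigma^2\matI)^{-1}$ admits a block-diagonal deterministic equivalent whose $m$-th block $\Mat{\Upsilon}(\Omega_m)$ solves
\[
\Mat{\Upsilon}(\Omega_m)^{-1}=\sigma^2\matI_r+\frac1{\N}\sum_{k}\frac{|a_k|^2\,\Mat{\Delta}_{\phi,r}(\Omega_m,\tau_k)\Mat{\Delta}_{\phi,r}^H(\Omega_m,\tau_k)}{1+\frac{|a_k|^2}{\N}\sum_{m'}\Mat{\Delta}_{\phi,r}^H(\Omega_{m'},\tau_k)\Mat{\Upsilon}(\Omega_{m'})\Mat{\Delta}_{\phi,r}(\Omega_{m'},\tau_k)} .
\]
The off-diagonal blocks vanish because every $\Mat{D}_k\Mat{D}_k^H$ is block-diagonal, which confirms the block-diagonal ansatz self-consistently. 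The hypotheses required here—uniform boundedness of $|a_k|^2$ and of $\|\Mat{\Delta}_{\phi,r}\|$ (the latter following from $|\Phi(\omega)|$ bounded with finite support), together with convergence of the empirical joint distribution $F_{|\matA|^2,T}^{(\K)}$—are exactly those in the statement.

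Finally I would pass to the limit: the bin sums $\frac1{\N}\sum_{m}$ converge to $\frac1{2\pi}\int_{-\pi}^{\pi}(\cdot)\,\mathrm d\Omega$ by a Riemann-sum argument using continuity of $\Omega\mapsto\Mat{\Delta}_{\phi,r}(\Omega,\tau)$ and of $\Mat{\Upsilon}$, while $\frac1{\N}\sum_k(\cdot)=\beta\,\frac1{\K}\sum_k(\cdot)\to\beta\int(\cdot)\,\mathrm dF_{|\matA|^2,T}(\lambda,\tau)$, giving (\ref{multiuser_efficiency_mat}) and, combined with the first step, the SINR formula (\ref{limit_SINR_MMSE_genaral}). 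Existence and uniqueness of the positive-definite solution $\Mat{\Upsilon}(\Omega)$ I would obtain by recognizing the right-hand side as a standard matrix interference function—positive, monotone, and scalable—so the fixed-point iteration converges to a unique positive-definite limit. The hard part will be making the block-structured deterministic equivalent rigorous: the coupling of all $\N$ bins through each user's single transformed chip sequence forces the scalar trace-lemma estimates to be upgraded to $r\times r$ block quadratic forms, and one must show the off-diagonal resolvent blocks are genuinely negligible while keeping these errors uniform as $\K,\N\to\infty$ and the bin sum is converted into an integral.
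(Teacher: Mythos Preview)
Your proposal is correct and follows the same high-level architecture as the paper: block-Fourier diagonalization of the $r$-block-circulant $\Mat{\Phi}_k$, rotational invariance of the Gaussian $\vecs_k$ to obtain a block-diagonal variance profile indexed by the $N$ frequency bins, a deterministic-equivalent result from random matrix theory, and a Riemann-sum/empirical-measure passage to (\ref{multiuser_efficiency_mat}) and (\ref{limit_SINR_MMSE_genaral}). The difference is in the random-matrix machinery. The paper does not use the leave-one-out/Sherman--Morrison route with martingale concentration; instead it invokes Girko's REFORM method (the block ``canonical equations''), stated in the appendix as a sequence of lemmas, and verifies Girko's four hypotheses (block independence, zero mean, a uniform second-moment bound, and a Lindeberg condition) directly for the transformed blocks $\widetilde{\vech}_{n,k}=a_k\,\Mat{\Delta}_{\phi,r}(2\pi(n-1)/N,\tau_k)\,\widetilde{s}_{n,k}$. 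Girko's formalism delivers the block-diagonal structure of the limiting resolvent and the existence/uniqueness of the positive-definite fixed point as part of the package, whereas you recover block-diagonality by a self-consistency argument and uniqueness via the standard interference-function (positivity, monotonicity, scalability) criterion. Both routes are rigorous; the practical advantage of Girko's is that the cross-bin coupling you single out as ``the hard part'' is already absorbed into the canonical-equation statement, so no separate off-diagonal negligibility estimate has to be carried out by hand, while your cavity approach is more self-contained and closer to what one would do today without access to Girko's monograph.
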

%%%%%%%%%

The performance of the linear MMSE detector operating on not necessarily sufficient statistics is
completely characterized by
\begin{enumerate}
\item
 an $r \times r$ matrix-valued transfer function
$\Mat{\Upsilon}(\Omega)$ and
\item the frequency and delay dependent vector $\Mat{\Delta}_{\phi,r}(\Omega,
\tau)$
\end{enumerate}
The multiuser efficiency varies from user to user and depends on
the time delay of the user of interest only through
$\Mat{\Delta}_{\phi,r}(\Omega, \tau_k).$ We can
define an SINR spectrum
\begin{equation}
\mathrm{SINR}_{k}(\Omega)= \frac{|a_{k}|^{2}}{2\pi}
\Mat{\Delta}_{\phi,r}^H(\Omega, \tau_k) \Mat{\Upsilon}(\Omega)
\Mat{\Delta}_{\phi,r}(\Omega, \tau_k)
\end{equation}
in the normalized frequency domain $-\pi < \Omega \leq
\pi,$ or, equivalently, a spectrum of the multiuser
efficiency. The system performance is in both cases obtained by integration over the
spectral components.

The fixed point equation \eq{multiuser_efficiency_mat} clearly reveals how and why synchronous users are the worst case for a given chip waveform.
We know from \cite{guo:04} that to each large multiuser system, there is an equivalent single user system with enhanced noise, but otherwise identical performance.
In the present case with oversampling factor $r$, the equivalent single user system is a frequency-selective MIMO (multiple-input multiple-output) system with $r$ transmit and $r$ receive antenna and governed by the $r\times r$ channel transfer matrix
\begin{equation}
\beta\int \dfrac{\lambda
\Mat{\Delta}_{\phi,r}(\Omega, \tau)\Mat{\Delta}_{\phi,r}^H(\Omega, \tau)
\mathrm{d} F_{|\matA|^2, T } (\lambda, \tau) }{1 + \frac\lambda{2\pi}
\int_{-\pi}^{\pi} \Mat{\Delta}_{\phi,r}^H(\Omega^{\prime},
\tau)\Mat{\Upsilon}(\Omega^{\prime})\Mat{\Delta}_{\phi,r}(\Omega^{\prime}, \tau) \mathrm{d}\Omega^{\prime} }.
\end{equation}
Note that this matrix is an integral of an outer product over the delay distribution.
Thus, for constant delay, i.e.,\ chip-synchronization, the matrix has rank one.
No additional dimensions in signal space can be spanned.
For distributed delays, the rank of the matrix can be as large as the oversampling factor $r$.
Driving the equivalence even further, the equivalent MIMO system can be transformed into an equivalent CDMA system with spreading factor $r$ and spreading sequences $\Mat{\Delta}_{\phi,r}(\Omega, \tau_k)$.
In this model, equal delays in the real CDMA system correspond to users with identical signature sequences in the equivalent CDMA system.

One cannot increase performance unboundedly by faster oversampling, as not all modes of the equivalent $r$-dimensional MIMO system can be excited with a chip waveform of limited excess bandwidth due to the projection onto the spectral support of the chip waveform in \eq{limit_SINR_MMSE_genaral}.
In order to utilize the excess bandwidth of the system, we need two ingredients: 1) Time delays separating the users by making the signatures in the equivalent system differ. 2) A receiver that transforms the continuous-time receive signal into sufficient discrete-time statistics, e.g.,\ by oversampling.
A lack of different delays leads to a system where only a single eigenmode of the equivalent MIMO system is excited.
A lack of oversampling leads to a system where more eigenmodes are excited, but are not converted into discrete time.

Additional intuitive insight into the behavior of the asynchronus CDMA
systems can be gained by focusing on CDMA systems with uniformly distributed delay. In this case, Theorem \ref{theo:sinr_MMSE_chip_asynch} can be formulated with a single scalar fixed point equation by moving from the frequency $\Omega$ that is normalized to the chip rate to the unnormalized frequency $\omega$.
This yields the following corollary:

\begin{corollary}\label{cor_MMSE_raised_cosine}
Let us adopt the same definitions as in Theorem
\ref{theo:sinr_MMSE_chip_asynch} and let the assumptions of
Theorem \ref{theo:sinr_MMSE_chip_asynch} be satisfied.
Additionally, assume that
the random variables $\lambda$ and $\tau$ in $F_{|\matA|^2,
T}(\lambda, \tau)$  are statistically independent and the random
variable $\tau$ is uniformly distributed in $[0, T_c)$.
Furthermore, let $\Phi(\omega)$ vanish outside the interval $[-2 \pi B;+ 2 \pi B]$
with $B \le \frac r{2T_c}$.
Then, the multiuser efficiency of the
linear MMSE detector for CDMA converges in probability
as $K,N \rightarrow \infty$ with $\frac{K}{N}\rightarrow \beta$
and $r$ fixed to
\begin{equation}\label{SINR_cor_MMSE_raised_cos}
\lim_{K= \beta N \rightarrow \infty} \eta_k =
\eta =\frac1{2\pi}
\int\limits_{-2\pi B}^{+2\pi B}\eta\left(\omega \right)
\mathrm{d}\omega
\end{equation}
where the multiuser efficiency spectral density
$\eta\left(\omega\right)$ is the unique solution
to the fixed point equation
\begin{equation}\label{fixed_point_eq_cor_MMSE_raised_cos}
\frac 1{\eta\left(\omega\right)} = \frac{ E_{\phi}}{ |\Phi\left(\omega\right)|^2} + \frac{\beta}{T_c} \int \frac{ \lambda \mathrm{d}F_{|\matA|^2}(\lambda)}{
\frac{N_0}{E_\phi }+ \lambda\eta}
\end{equation}
and is zero for $|\omega|>2 \pi B$.
\end{corollary}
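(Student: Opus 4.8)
The plan is to reduce the $r\times r$ matrix fixed point equation \eq{multiuser_efficiency_mat} of Theorem~\ref{theo:sinr_MMSE_chip_asynch} to $r$ decoupled scalar equations, one per unnormalized frequency slot, and then to unfold the normalized frequency $\Omega$ into $\omega$. First I would use the assumed independence of $\lambda$ and $\tau$ together with the uniform law of $\tau$ to factor $\mathrm dF_{|\matA|^2,T}(\lambda,\tau)=\mathrm dF_{|\matA|^2}(\lambda)\,\mathrm d\tau/T_c$ and to perform the inner $\tau$-integral of the outer product $\Mat{\Delta}_{\phi,r}(\Omega,\tau)\Mat{\Delta}_{\phi,r}^H(\Omega,\tau)$ first. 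Expanding each entry through the aliasing sum \eq{discrete_Fourier_phi}, the average over $\tau\in[0,T_c)$ annihilates every cross term by orthogonality of $\{\mathrm{e}^{\,j2\pi(\nu-\nu')\tau/T_c}\}$, leaving only the diagonal aliasing contributions $\nu=\nu'$.

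The bandwidth hypothesis enters next. I would introduce the discrete Fourier basis $\mathbf e_\mu(\Omega)$ whose $\ell$-th entry is $r^{-1/2}\mathrm{e}^{-j\ell(\Omega+2\pi\mu)/r}$; for any $r$ consecutive values of $\mu$ these are orthonormal by the standard geometric-sum identity. The constraint $B\le r/(2T_c)$ guarantees that for almost every $\Omega\in(-\pi,\pi]$ the frequencies $\omega_\nu=(\Omega+2\pi\nu)/T_c$ lying in $\mathrm{supp}\,\Phi$ number at most $r$ and are pairwise incongruent modulo $r$ (equivalently, the oversampled signal is alias-free). Hence the $\tau$-averaged outer product equals $\sum_\mu \tfrac{r}{T_c^2}|\Phi(\omega_\mu)|^2\,\mathbf e_\mu\mathbf e_\mu^H$, i.e.\ it is diagonal in $\{\mathbf e_\mu(\Omega)\}$ with eigenvalues proportional to $|\Phi(\omega_\mu)|^2$.

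Since $\matI_r=\sum_\mu\mathbf e_\mu\mathbf e_\mu^H$ in such a basis, the right-hand side of \eq{multiuser_efficiency_mat} is diagonal in $\{\mathbf e_\mu(\Omega)\}$ whenever $\boldsymbol{\Upsilon}(\Omega)$ is; by the uniqueness of the positive-definite solution asserted in Theorem~\ref{theo:sinr_MMSE_chip_asynch}, $\boldsymbol{\Upsilon}(\Omega)$ is therefore diagonal in this basis, with eigenvalues I denote $v(\omega_\mu)$. A short computation then shows that $\Mat{\Delta}_{\phi,r}^H\boldsymbol{\Upsilon}\Mat{\Delta}_{\phi,r}=\sum_\mu v(\omega_\mu)\tfrac{r}{T_c^2}|\Phi(\omega_\mu)|^2$ is independent of $\tau$, so the denominator of \eq{multiuser_efficiency_mat} loses its $\tau$-dependence and the $\Omega'$-integral unfolds, via $\omega=(\Omega'+2\pi\mu)/T_c$ with $\mathrm d\Omega'=T_c\,\mathrm d\omega$, into one integral over $[-2\pi B,2\pi B]$. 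Setting $\eta(\omega)=\tfrac{\sigma^2}{E_\phi}|\Phi(\omega)|^2 v(\omega)$ (which vanishes wherever $\Phi$ does) and using $\sigma^2=rN_0/T_c$, the $\mu$-th scalar eigenvalue equation collapses to \eq{fixed_point_eq_cor_MMSE_raised_cos}; evaluating \eq{limit_SINR_MMSE_genaral} with the same diagonalization and matching against \eq{defMUE} then yields \eq{SINR_cor_MMSE_raised_cos}. The $\tau$-independence of $|\mathbf e_\mu^H\Mat{\Delta}_{\phi,r}|^2$ is precisely what makes the limiting $\eta_k$ common to all users.

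I expect the self-consistency step to be the main obstacle: rigorously transferring the diagonal structure from the fixed-point map to its solution. This hinges on the uniqueness claim of Theorem~\ref{theo:sinr_MMSE_chip_asynch}; I would check that the map sending $\boldsymbol{\Upsilon}$ to the inverse of the right-hand side of \eq{multiuser_efficiency_mat} takes matrices diagonal in $\{\mathbf e_\mu(\Omega)\}$ to matrices of the same form, so that this family is invariant, and then place the unique solution inside it. Minor care is required for the measure-zero set of $\Omega$ at which an alias lands exactly on $\pm2\pi B$ (two slots coinciding, or a slot sitting on the band edge), which contributes nothing to the integrals; and when $B<r/(2T_c)$ strictly, fewer than $r$ slots are excited, but the basis and the diagonalization persist and the vanishing eigenvalues simply enforce $\eta(\omega)=0$ for $|\omega|>2\pi B$.
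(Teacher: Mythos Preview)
Your proposal is correct and follows essentially the same route as the paper. The paper packages the key facts as separate lemmas---a decomposition $\Mat Q(\Omega,\tau)=\Mat Q(\Omega)+\overline{\Mat Q}(\Omega,\tau)$, the vanishing of $\tr(\Mat B\,\overline{\Mat Q})$ for matrices $\Mat B$ diagonal in the basis $\Mat U(\Omega)$, and the explicit eigendecomposition $\Mat Q(\Omega)=\Mat U(\Omega)\Mat D(\Omega)\Mat U^H(\Omega)$---whereas you obtain the same content directly from the expansion $\Mat\Delta_{\phi,r}(\Omega,\tau)=\sum_\mu \tfrac{\sqrt r}{T_c}\,e^{j\tau\omega_\mu}\Phi^*(\omega_\mu)\,\mathbf e_\mu(\Omega)$; both then invoke uniqueness of the positive-definite solution to place $\boldsymbol\Upsilon(\Omega)$ inside the diagonal family and unfold $\Omega$ into $\omega$ via the change of variables you describe.
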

Theorem \ref{theo:sinr_MMSE_chip_asynch} is specialized to
Corollary \ref{cor_MMSE_raised_cosine} in Appendix
\ref{section:proof_cor_MMSE_raised_cosine}.

Under the conditions of Corollary \ref{cor_MMSE_raised_cosine} the
multiuser efficiency of the linear MMSE detector in asynchronous
systems is the same for all users.

Rewriting (\ref{SINR_cor_MMSE_raised_cos}) and
(\ref{fixed_point_eq_cor_MMSE_raised_cos}) in terms of SINRs,
these equations can be interpreted similarly to the corresponding equations in
\cite{tse:99b} for synchronous systems when the concept of
effective interference is generalized to the concept of effective
interference spectral density.
Let $P(\omega, \lambda)=
\frac{\lambda }{T_c} \left|\Phi\left(\omega \right)
\right|^2$ be the power spectral density of the received
signal for a user having received power $\lambda$. Then, the
result in Corollary \ref{cor_MMSE_raised_cosine} can be expressed
as
\begin{equation}
\mathrm{SINR}_k=\frac1{2\pi} \int\limits_{-2 \pi B}^{+2 \pi B} \mathrm{sinr}_{k}(\omega) \mathrm{d}\omega
\end{equation}
where the SINR spectral density $\mathrm{sinr}_k(\omega)$ is
given by
\begin{equation}
\mathrm{sinr}_k(\omega)=\frac{P(\omega,|a_{k}|^2)}{{N_0} + \beta
\mathrm{E}_{\lambda} \{ I(P(\omega,|a_{k}|^2), P(\omega,\lambda),
\mathrm{SINR}_k) \}}
\end{equation}
with the effective interference spectral density
\begin{equation}
I(P(\omega,|a_{k}|^2), P(\omega,\lambda), \mathrm{SINR}_k)=
\frac{P(\omega,|a_{k}|^2)
P(\omega,\lambda)}{P(\omega,|a_{k}|^2)+P(\omega,\lambda)
\mathrm{SINR}_k}.
\end{equation}
Heuristically, this means that for large systems the SINR spectral density is deterministic and given by
\begin{equation}
\mathrm{sinr}_k(\omega) \approx \frac{P(\omega,|a_{k}|^2)}{N_0 +
\frac{1}{N} \sum_{ j \neq k } I(P(\omega,|a_{k}|^2),
P(\omega,|a_{j}|^2), \mathrm{SINR}_k) }.
\end{equation}
This result yields an interpretation of the effects of each of the
interfering users on the SINR of user $k$ similar to the case of
synchronous systems in \cite{tse:99b}. The impairment at frequency
$\omega$ can be decoupled into a sum of the background noise and
an interference term from each of the users at the same frequency.
The cumulated interference spectral density at frequency $\omega$
depends only on the received power density of the user of interest
at this frequency, the received power spectral density of the
interfering users at this frequency, and the attained SINR of user
$k$. In other words, in asynchronous systems we have a decoupling of the effects
of interferers like in synchronous systems and an additional
decoupling in frequency. The term $I(P(\omega,|a_{k}|^2),
P(\omega,|a_{j}|^2), \mathrm{SINR}_k)$ is the effective
interference spectral density of user $j$ onto user $k$ at
frequency $\omega$ for a given SINR of user $k$.

Sinc waveforms have a particular theoretical interest. In the following we specialize
Corollary \ref{cor_MMSE_raised_cosine} to this case.

\begin{corollary}\label{cor_sinc_pulse}
Let us adopt the definitions in Theorem
\ref{theo:sinr_MMSE_chip_asynch} and let the assumptions of
Corollary \ref{cor_MMSE_raised_cosine} be satisfied.
Given a positive real $\alpha,$  we assume that
\begin{equation}\label{sinc_waveform}
\Phi(\omega) =
 \begin{cases}
   \sqrt{\frac{T_c}{\alpha}} & {\rm for  }\quad  |\frac\omega{2\pi}| \leq \frac{\alpha}{2 T_c}, \\
   0 & {\rm otherwise}.
 \end{cases}
\end{equation}
corresponding to a sinc waveform with bandwidth
$B=\frac{\alpha}{2T_c}$ and unit energy. Then, the multiuser efficiency of the linear MMSE detector converges in probability
as $K,N \rightarrow \infty$ with $\frac{K}{N}\rightarrow \beta$
 to
\begin{equation}\label{SINR_MMSE_sinc}
\lim_{K= \beta N \rightarrow \infty} \eta_k =
\eta_{\rm sinc}
\end{equation}
where the multiuser efficiency $\eta_{\rm sinc}$ is the unique positive
solution to the fixed point equation
\begin{equation}\label{fix_point_aux}
\frac{1}{\eta_{\rm sinc}}= 1 +
\frac{\beta}{\alpha} \int \frac{\lambda
\mathrm{d}F_{|\matA|^2}(\lambda)}{N_0+\lambda
\eta_{\rm sinc}}.
\end{equation}
\end{corollary}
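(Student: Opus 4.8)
The plan is to obtain Corollary~\ref{cor_sinc_pulse} as a pure specialization of Corollary~\ref{cor_MMSE_raised_cosine}, exploiting the single fact that a sinc spectrum is flat. Observe that in the scalar fixed-point equation \eq{fixed_point_eq_cor_MMSE_raised_cos} the frequency $\omega$ enters \emph{only} through the term $E_\phi/|\Phi(\omega)|^2$; the remaining term depends on $\omega$ solely through the scalar $\eta=\frac1{2\pi}\int_{-2\pi B}^{2\pi B}\eta(\omega)\,\mathrm{d}\omega$. For the sinc waveform \eq{sinc_waveform} one reads off $|\Phi(\omega)|^2=T_c/\alpha$ on the entire support $|\omega|\le 2\pi B$ with $B=\alpha/(2T_c)$, while the stated unit-energy normalization gives $E_\phi=1$. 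Hence the right-hand side of \eq{fixed_point_eq_cor_MMSE_raised_cos} is constant in $\omega$ across the band, so its solution $\eta(\omega)$ is constant there as well; denote this value by $\eta_0$. Outside the band $\Phi$ vanishes, $E_\phi/|\Phi(\omega)|^2$ diverges, and $\eta(\omega)=0$, which is consistent with the final assertion of Corollary~\ref{cor_MMSE_raised_cosine}.

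Next I would relate $\eta_0$ to the integrated multiuser efficiency. Since $\eta(\omega)\equiv\eta_0$ on $[-2\pi B,2\pi B]$, the definition \eq{SINR_cor_MMSE_raised_cos} yields $\eta=\frac1{2\pi}\,\eta_0\cdot 4\pi B = 2B\eta_0 = \frac{\alpha}{T_c}\eta_0$, i.e.\ $\eta_0=\frac{T_c}{\alpha}\eta$ and $1/\eta_0=\alpha/(T_c\eta)$. Substituting this together with $|\Phi(\omega)|^2=T_c/\alpha$ and $E_\phi=1$ into \eq{fixed_point_eq_cor_MMSE_raised_cos} gives
\[
\frac{\alpha}{T_c\,\eta} = \frac{\alpha}{T_c} + \frac{\beta}{T_c}\int \frac{\lambda\,\mathrm{d}F_{|\matA|^2}(\lambda)}{N_0 + \lambda\eta},
\]
and multiplying through by $T_c/\alpha$ reproduces the claimed scalar fixed-point equation \eq{fix_point_aux} with $\eta_{\rm sinc}=\eta$. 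Because the right-hand side no longer carries any $\omega$-dependence, the multiuser efficiency is the same for all frequencies and, in particular, identical for all users.

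For the uniqueness of the positive solution I would multiply \eq{fix_point_aux} by $\eta_{\rm sinc}$, turning it into $h(\eta_{\rm sinc})=1$ with $h(\eta)=\eta+\frac{\beta}{\alpha}\int\frac{\lambda\eta\,\mathrm{d}F_{|\matA|^2}(\lambda)}{N_0+\lambda\eta}$. Each summand is strictly increasing in $\eta$ on $(0,\infty)$, with $h(0^+)=0$ and $h(\eta)\to\infty$, so $h(\eta)=1$ has exactly one positive root; this confirms uniqueness directly and is in any case inherited from the uniqueness of $\eta(\omega)$ already granted by Corollary~\ref{cor_MMSE_raised_cosine}.

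I do not expect a genuine obstacle, since the statement is a specialization rather than a new result. The only point demanding care is the bookkeeping of the normalization constants---the factors $2\pi$, $T_c$, $\alpha$, the bandwidth $B=\alpha/(2T_c)$, and the energy $E_\phi$---and in particular verifying that the unit-energy sinc gives both $E_\phi=1$ and a flat $|\Phi(\omega)|^2=T_c/\alpha$. It is precisely this flatness that forces $\eta(\omega)$ to be constant and collapses the integral fixed-point relation of Corollary~\ref{cor_MMSE_raised_cosine} into the scalar equation \eq{fix_point_aux}.
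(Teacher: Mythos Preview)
Your proposal is correct and matches the paper's own treatment: the paper presents Corollary~\ref{cor_sinc_pulse} as an immediate specialization of Corollary~\ref{cor_MMSE_raised_cosine} and does not supply a separate proof, so the substitution-and-integration argument you carry out is exactly the intended route. Your added monotonicity argument for uniqueness is a clean bonus that the paper omits.
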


We recall that the multiuser efficiency of a linear MMSE detector
for a synchronous CDMA system satisfies \cite{tse:99b}
\begin{equation}\label{fixed_point_MMSE_sinc}
\frac{1}{\eta_{\mathrm{syn}} }= 1 + \beta \int
\frac{\lambda \mathrm{d}F_{|\matA|^2}(\lambda)}{N_0+\lambda
\eta_{\mathrm{syn}}}.
\end{equation}
This result holds
for synchronous CDMA systems using any chip
pulse waveform with bandwidth $B \geq \frac{1}{2 T_c}$ and
satisfying the Nyquist criterion.
Thus, it also applies to sinc pulses whose bandwidth is an integer multiple of $\frac1{2T_c}$.
Then, Corollary
\ref{cor_sinc_pulse} shows the interesting effect that an
asynchronous CDMA system using a sinc function with bandwidth
$B=\frac{\alpha}{2T_c}$ as chip pulse waveform performs as well as
a synchronous CDMA system with bandwidth $\frac r{2T_c},r\in \mathbb{N},$ and system load $\beta^{\prime}=
\frac{\beta}{\alpha}$.
This implies that only asynchronous CDMA has the capability to trade the
excess bandwidth of the chip pulse waveform against the spreading factor while synchronous CDMA has not.
In other words, asynchronous CDMA offers to trade degrees of freedom in the
frequency domain provided by the excess bandwidth of the chip pulse
waveform against degrees of freedom in the time domain
provided by spreading.

This phenomenon is
similar to the resource pooling in CDMA systems with spatial
diversity discovered in \cite{hanly:99a}. There, the degrees of
freedom in space provided by multiple antennas at the receiver
could be traded against degrees of freedom in time provided by the
spreading.
In order to make resource pooling happen, it is necessary that the steering vectors of the antenna arrays point into different directions. This condition is equivalent to requiring de-synchronization among users. If all users experience the same delay, this is like having totally correlated antenna elements.

In Corollary \ref{cor_sinc_pulse}, the bandwidth of the sinc waveform may be either larger or smaller than the Nyquist bandwidth.
For larger bandwidth, we get a resource pooling effect, for smaller bandwidth we create inter-chip interference and what could be called {\em anti-resource pooling}.
Inter-chip interference is no particular cause of concern.
In contrast, the effect of anti-resource pooling is to virtually increase the load, i.e.,\ squeezing the same number of data into a smaller spectrum is equivalent to squeezing more users into the same spectrum.
Since spectral efficiency of optimum joint decoding is an increasing function of the load \cite{verdu:99a}, anti-resource pooling is beneficial for spectral efficiency, though its implementation may cause some practical challenges.

In the following theorem, we extend anti-resource pooling to
arbitrary delay distributions:
\begin{theor}\label{theo:small_bandwidth_MMSE}
Let $\matA \in \mathbb{C}^{\K \times \K}$ be a diagonal matrix
with $k^{\text{th}}$ diagonal element $a_{k}\in \mathbb{C}$ and $T_c$ a positive
real. Given a function $\Phi(\omega): \mathbb{R} \rightarrow
\mathbb{C}$, let $\phi(\Omega,\tau)$ be as in
(\ref{discrete_Fourier_phi}). Given a positive integer $r$, let
$\Mat\Phi_k$, $k=1, \ldots, \K$, be $r$-block-wise circulant
matrices of order $\N$ defined in (\ref{matrice_Phi}).
 Let
${\matH}={\matS}\matA$ with
${\matS}=[\Mat\Phi_1\vecs_{1},
\Mat\Phi_2 \vecs_{2}, \ldots,
\Mat\Phi_K\vecs_{\K}]$ and $\vecs_k \in\mathbb{C}^{N\times 1}$.

Assume that the function $|\Phi(\omega)|$ is upper bounded and has
support  contained in the interval $
\left[-\frac{\pi}{T_c}, \frac{\pi}{T_c} \right]$. The receive
filter is such that the sampled discrete-time noise process is
white. The vectors $\vecs_{k}$ are independent with i.i.d.\
circularly symmetric Gaussian  elements. Furthermore, the elements
$a_{k}$ of the matrix $\matA$ are uniformly bounded for any $K.$
The sequence of the empirical distributions
$F_{|\matA|^2}^{(\K)}(\lambda)=\frac{1}{\K} \sum_{k=1}^{\K}
\chi(\lambda \leq |a_{k}|^2)$ converges in law almost surely, as $\K
\rightarrow \infty$, to a non-random distribution function
$F_{|\matA|^2}(\lambda).$

Then, the multiuser efficiency of the linear MMSE detector for CDMA with
transfer matrix $\matH$ converges
in probability as $K, N \rightarrow \infty$ with $\frac{K}{N}
\rightarrow \beta$ and $r$ fixed to
\begin{equation}\label{SINR_theo_small_bandwidth}
\lim_{K= \beta N \rightarrow \infty} \eta_k = \eta =\frac1{2\pi}
\int\limits_{-\pi/T_c}^{+\pi/T_c}\eta\left(\omega\right)
\mathrm{d}\omega
\end{equation}
where the multiuser efficiency spectral density $\eta(\omega)$ is
the unique solution to the fixed point equation
\begin{equation}\label{fixed_point_eq_theo_small_bandwidth}
\frac 1{\eta\left(\omega\right)} = \frac{
E_{\phi}}{|\Phi(\omega)|^2} + \frac{\beta}{T_c}
\int \frac{ \lambda \mathrm{d}F_{|\matA|^2}(\lambda)}{
\frac{N_0}{E_\phi}+ \lambda\eta}
\end{equation}
 for all $\omega$ in the support of $\Phi(\omega)$ and
zero elsewhere.
\end{theor}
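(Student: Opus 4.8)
The plan is to obtain Theorem~\ref{theo:small_bandwidth_MMSE} as a specialization of Theorem~\ref{theo:sinr_MMSE_chip_asynch}, exploiting that a sub-Nyquist bandwidth removes aliasing and thereby collapses all delay dependence to an immaterial phase. First I would observe that when $\Phi(\omega)$ is supported in $[-\pi/T_c,\pi/T_c]$, the folding sum in \eq{discrete_Fourier_phi} degenerates: for $\Omega\in(-\pi,\pi)$ the only integer $\nu$ for which $\tfrac1{T_c}(\Omega+2\pi\nu)$ falls inside the support of $\Phi$ is $\nu=0$ (the two endpoints $\Omega=\pm\pi$ form a null set and are irrelevant under integration). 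Hence
\begin{equation}
\phi(\Omega,\tau)=\frac1{T_c}\,\mathrm{e}^{j\Omega\tau/T_c}\,\Phi^*\!\left(\tfrac{\Omega}{T_c}\right),
\end{equation}
so the delay enters only through a unit-modulus phase. Inserting the shifted delays $\tau-\ell T_c/r$ into \eq{delta_phi_r_def} then factorizes the vector as
\begin{equation}
\Mat{\Delta}_{\phi,r}(\Omega,\tau)=\frac1{T_c}\,\mathrm{e}^{j\Omega\tau/T_c}\,\Phi^*\!\left(\tfrac{\Omega}{T_c}\right)\Vec{e}(\Omega),
\end{equation}
where $\Vec{e}(\Omega)=(1,\mathrm{e}^{-j\Omega/r},\ldots,\mathrm{e}^{-j\Omega(r-1)/r})^T$ is independent of $\tau$ and obeys $\Vec{e}^H(\Omega)\Vec{e}(\Omega)=r$.

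The key consequence I would draw is that the delay phase cancels in every Hermitian form. In particular the outer product $\Mat{\Delta}_{\phi,r}\Mat{\Delta}_{\phi,r}^H=\tfrac1{T_c^2}|\Phi(\Omega/T_c)|^2\,\Vec{e}(\Omega)\Vec{e}^H(\Omega)$ and the quadratic form $\Mat{\Delta}_{\phi,r}^H\Mat{\Upsilon}\Mat{\Delta}_{\phi,r}$ are both $\tau$-free. Therefore the $\tau$-integration in \eq{multiuser_efficiency_mat} acts only on the measure, reducing $F_{|\matA|^2,T}$ to its marginal $F_{|\matA|^2}$, and the matrix fixed-point equation collapses to the identity-plus-rank-one form
\begin{equation}
\Mat{\Upsilon}^{-1}(\Omega)=\sigma^2\Mat{I}_r+g(\Omega)\,\Vec{e}(\Omega)\Vec{e}^H(\Omega),
\end{equation}
with $g(\Omega)=\tfrac{\beta}{T_c^2}|\Phi(\Omega/T_c)|^2\int\lambda/D(\lambda)\,\mathrm dF_{|\matA|^2}(\lambda)$ and the now delay-free denominator $D(\lambda)=1+\lambda\tfrac{E_\phi}{N_0}\eta$, where $\eta=\tfrac1{2\pi}\int_{-\pi/T_c}^{\pi/T_c}\eta(\omega)\,\mathrm d\omega$. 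Such a matrix is inverted explicitly by the Sherman--Morrison formula.

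Next I would evaluate the inverse to obtain the scalar form $\Vec{e}^H\Mat{\Upsilon}\Vec{e}=r/(\sigma^2+rg(\Omega))$, substitute it into \eq{limit_SINR_MMSE_genaral}, and change variables from the normalized frequency $\Omega$ to $\omega=\Omega/T_c$, mapping $(-\pi,\pi]$ onto $(-\pi/T_c,\pi/T_c]$. Reading off the integrand as the multiuser efficiency spectral density $\eta(\omega)$ and using $\sigma^2=rN_0/T_c$ together with the definition of $g$, a routine rearrangement turns its reciprocal into $\tfrac{E_\phi}{|\Phi(\omega)|^2}+\tfrac{\beta}{T_c}\int\tfrac{\lambda\,\mathrm dF_{|\matA|^2}(\lambda)}{N_0/E_\phi+\lambda\eta}$, which is precisely \eq{fixed_point_eq_theo_small_bandwidth}; outside the support of $\Phi$ the density vanishes. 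Since the delay phase has cancelled at every stage, the limit is identical for all users, establishing $\eta_k=\eta$ independently of $\tau_k$ and of the entire delay distribution.

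Finally, uniqueness transfers from Theorem~\ref{theo:sinr_MMSE_chip_asynch}: the rank-one-plus-identity form sets up a bijection between positive-definite solutions $\Mat{\Upsilon}(\Omega)$ of \eq{multiuser_efficiency_mat} and positive solutions $\eta$ of the scalar equation (any such $\eta$ reconstructs a valid $\Mat{\Upsilon}(\Omega)$, and $\eta$ is recovered from $\Mat{\Upsilon}$ as the integral of $\eta(\omega)$), so the uniqueness of $\Mat{\Upsilon}(\Omega)$ forces uniqueness of $\eta(\omega)$. The main obstacle I anticipate is bookkeeping rather than conceptual: tracking the constants $T_c$, $E_\phi$, $\sigma^2$ and the Jacobian of the frequency rescaling through the Sherman--Morrison reduction so that the scalar equation matches \eq{fixed_point_eq_theo_small_bandwidth} exactly. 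The essential point is that sub-Nyquist bandwidth eliminates aliasing, reduces the delay dependence to a phase, and thereby degenerates the matrix fixed-point equation to rank one with the delay distribution dropping out entirely.
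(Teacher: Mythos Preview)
Your proposal is correct and follows essentially the same route as the paper: both arguments observe that the sub-Nyquist support of $\Phi$ kills all $\nu\neq0$ terms in the aliasing sum \eq{discrete_Fourier_phi}, so that $\Mat{\Delta}_{\phi,r}(\Omega,\tau)$ factors as a $\tau$-dependent phase times $\tfrac1{T_c}\Phi^*(\Omega/T_c)\Vec{e}(\Omega)$, the outer product $\Mat{Q}(\Omega,\tau)$ becomes $\tau$-independent and rank one, and the matrix fixed-point equation collapses to a scalar one after the change of variable $\omega=\Omega/T_c$. The only cosmetic difference is that you invert the identity-plus-rank-one matrix via Sherman--Morrison, whereas the paper diagonalizes it in the eigenbasis $\Mat{U}(\Omega)$ of Lemma~\ref{lemma:Q(x)_decomposition} and reads off the single nontrivial eigenvalue; these are equivalent computations, and your approach is arguably the more direct of the two.
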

Theorem \ref{theo:small_bandwidth_MMSE} is proven in Appendix
\ref{section:proof_theo_small_bandwidth_MMSE}.

No constraint is imposed on the set of time delays in Theorem \ref{theo:small_bandwidth_MMSE}.
It holds for any
set $\{ {\tau}_1, {\tau}_2 \ldots
{\tau}_K\}$ and we conclude that linear MMSE detectors
for synchronous and asynchronous CDMA systems have the same
performance if the bandwidth of the chip pulse waveforms satisfies
the constraint $B \leq \frac{1}{2 T_c}.$

%%%%%%%%%%%%%%%%%%%%%%%%%%%%%%%%%%%%%%%%%%

\section{Spectral Efficiency}\label{sec_capacity}
There exists a close relation between the total capacity of a CDMA
system and the multiuser efficiency of a linear MMSE detector for
the same system \cite{shitz:99},
\cite{mueller:03}, \cite{guo:04}. The rationale behind this
relation is a fundamental connection between mutual information and minimum mean-squared error in Gaussian channels \cite{guo:05}.
In the following, we extend
the results in Section \ref{sec:MMSE_det} to get insight into the
spectral efficiency of an asynchronous CDMA system.

The capacity of the CDMA channel was found in \cite{verdu:86b} for
synchronous CDMA systems. The total capacity per chip  for large
synchronous CDMA systems with square root Nyquist pulses and random spreading in the presence of AWGN
(additive white Gaussian noise) is \cite{verdu:99a}

\begin{align}
\mathcal{C}^{(\text{syn})}(\beta, \mathrm{SNR})&=\beta
\log_{2}\left(1+\mathrm{SNR}-\frac{1}{4} \digamma(\mathrm{SNR},
\beta) \right) + \log_{2}\left(1+\beta \mathrm{SNR} -\frac{1}{4}
\digamma(\mathrm{SNR}, \beta)\right) \nonumber \\ & - \frac{
\log_2 e }{4 \,\mathrm{SNR}}  \,\digamma(\mathrm{SNR}, \beta)
\label{capacity_sync}
\end{align}
with
\begin{equation}
\digamma(y,z)= \left( \sqrt{y(1+\sqrt{z})^2+1}-
\sqrt{y(1-\sqrt{z})^2+1} \right)^2.
\end{equation}
With the normalizations adopted in the system model, we have
$\mathrm{SNR}=E_\phi/N_0.$

The spectral efficiency of a synchronous CDMA system is equal
to $\mathcal{C}^{(\text{syn})}(\beta, \mathrm{SNR})$ for any
Nyquist sinc waveform.
For other chip waveforms, we need to take into account the excess bandwidth and calculate spectral efficiency as
\begin{equation}
\Gamma = \frac{\mathcal C}{T_cB}
\end{equation}
where $\mathcal C$ denotes the total capacity per chip and $B$ denotes the bandwidth of the chip pulse.
Note that for Nyquist sinc pulses $T_cB=1$, while in general $T_cB$ can be either larger, e.g.,\ for root-raised cosine pulses, or smaller, i.e.,\ for anti-resource pooling, than 1.

The expression of the total capacity per chip for asynchronous
CDMA systems constrained to a given chip pulse waveform $\psi(t)$ of bandwidth $B$ and a given receive filter $g(t)$
 can be obtained by making use of the results in
Section \ref{sec:MMSE_det} and the fundamental relation between
mutual information and MMSE in Gaussian channels provided in
\cite{guo:05}.  Since such constrained total capacity depends on $\psi(t)$ and $g(t)$ only via the waveform $\phi(t),$ output of the filter $g(t)$ for the input $\psi(t),$ we shortly refer to it as the total capacity constrained to the chip waveform $\phi(t).$

\begin{comment}
For the sake of readability, in the following
Corollary \ref{theor:constrained_capacity} we present the results
for CDMA systems satisfying the assumptions in Corollary
\ref{cor_MMSE_raised_cosine} and Theorem
\ref{theo:small_bandwidth_MMSE}. However, a similar statement
holds for the general system presented in Theorem
\ref{theo:small_bandwidth_MMSE}.
\end{comment}

%%%%
\begin{corollary}\label{theor:constrained_capacity}
Let us adopt the same definitions as in Theorem
\ref{theo:small_bandwidth_MMSE} and let the assumptions of
Corollary \ref{cor_MMSE_raised_cosine} or Theorem
\ref{theo:small_bandwidth_MMSE} be satisfied.
 Additionally, let the receive filter and sampling process be such that sufficient discrete-time statistics are provided.
Then, as $K,N
\rightarrow \infty$ with $\frac{K}{N} \rightarrow \beta$
  the total capacity per chip
constrained to the chip pulse waveform $\phi(t)$ converges to the
deterministic value
\begin{equation}\label{relation_capacity_Stieltjes_transform}
\mathcal{C}^{({\rm asyn})}\left(\beta, \frac{E_{\phi}}{N_0}, \phi
\right)= \frac{\beta }{\ln2}  \int_{0}\limits^{
\frac{E_{\phi}}{N_0}}  \int \frac{\lambda
\eta_{\gamma} \mathrm{d}F_{|\matA|^2}(\lambda)}{1+
\lambda \gamma \eta_\gamma}\,\mathrm{d}\gamma
\end{equation}
where $\eta_\gamma$ is the multiuser efficiency at signal-to-noise ratio $\gamma$ given in
(\ref{SINR_cor_MMSE_raised_cos}) and
(\ref{SINR_theo_small_bandwidth}), respectively.
\end{corollary}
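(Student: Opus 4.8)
The plan is to derive \eqref{relation_capacity_Stieltjes_transform} from the fundamental relation between mutual information and MMSE in Gaussian channels \cite{guo:05}, feeding in the multiuser efficiency $\eta_\gamma$ already characterized in Corollary~\ref{cor_MMSE_raised_cosine} and Theorem~\ref{theo:small_bandwidth_MMSE}. The first reduction is to note that the additional hypothesis of this corollary—that the receive filter and sampling provide \emph{sufficient} discrete-time statistics—guarantees that the mutual information of the finite-dimensional model $\Vec y=\matH\Vec b+\Vec w$ equals the total capacity of the underlying continuous-time channel. Hence it suffices to compute $\frac1N I(\Vec b;\Vec y)$ for the matrix channel and take the large-system limit.

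First I would parametrize the channel by a scalar signal-to-noise ratio $\gamma$ ranging from $0$ to $E_\phi/N_0$: after normalizing the noise to unit variance, the effective channel is $\sqrt\gamma\,\matH_0$ with $\matH_0$ carrying the fixed spreading and the relative amplitudes, so that $\matH^H\matH=\gamma\Mat M$ with $\Mat M=\matH_0^H\matH_0$. For the capacity-achieving Gaussian input the total mutual information is $I(\gamma)=\log\det(\matI+\gamma\Mat M)$, and the I-MMSE identity for the complex vector channel gives, in nats,
\begin{equation}
\frac{\mathrm d}{\mathrm d\gamma}I(\gamma)=\tr\!\left[\Mat M(\matI+\gamma\Mat M)^{-1}\right].
\end{equation}
Since the input is Gaussian, the optimal estimator is linear, so the MMSE matrix is exactly $\Mat E(\gamma)=(\matI+\gamma\Mat M)^{-1}$ and its diagonal entries are the per-user MMSEs of the very linear MMSE detector analyzed earlier.

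The key algebraic step is to rewrite the trace in terms of output SINRs via the exact identity $[\Mat E(\gamma)]_{kk}=1/(1+\mathrm{SINR}_k)$, which yields
\begin{equation}
\tr\!\left[\Mat M(\matI+\gamma\Mat M)^{-1}\right]
=\frac1\gamma\bigl(K-\tr\Mat E(\gamma)\bigr)
=\frac1\gamma\sum_{k=1}^{K}\frac{\mathrm{SINR}_k}{1+\mathrm{SINR}_k}.
\end{equation}
Dividing by $N$ and substituting the large-system limit $\mathrm{SINR}_k\to\lambda_k\gamma\eta_\gamma$ from the cited MMSE results, together with the almost-sure convergence of the empirical distribution $F_{|\matA|^2}^{(K)}$, the per-chip derivative converges to $\beta\int\lambda\,\eta_\gamma\,(1+\lambda\gamma\eta_\gamma)^{-1}\,\mathrm dF_{|\matA|^2}(\lambda)$. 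Integrating this over $\gamma\in[0,E_\phi/N_0]$ and converting nats to bits with the factor $1/\ln2$ then reproduces \eqref{relation_capacity_Stieltjes_transform} term by term.

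The hard part will be rigorously exchanging the large-system limit with the integration over $\gamma$: the SINR convergence inherited from the earlier results is only in probability and, a priori, only pointwise in $\gamma$, whereas the argument integrates the per-chip derivative over the whole interval $[0,E_\phi/N_0]$. I would close this gap by a dominated-convergence argument. The integrand admits the pointwise bound $\mathrm{SINR}_k/\bigl(\gamma(1+\mathrm{SINR}_k)\bigr)=|a_k|^2\eta_\gamma/(1+\mathrm{SINR}_k)\le|a_k|^2\eta_\gamma$, and since the amplitudes are uniformly bounded and the multiuser efficiency is bounded, $\frac1N I'(\gamma)$ is uniformly bounded over the finite interval by a constant independent of $N$; this is integrable and supplies the dominating function. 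A secondary point to verify is that $\gamma\mapsto\eta_\gamma$ is monotone and continuous—which follows from the structure of the fixed-point equations \eqref{fixed_point_eq_cor_MMSE_raised_cos} and \eqref{fixed_point_eq_theo_small_bandwidth}—so that the inner $\lambda$-integral is a well-defined, integrable function of $\gamma$ and the outer integral exists. With these two ingredients the interchange is justified and the proof closes.
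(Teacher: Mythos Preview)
Your proposal is correct and follows essentially the same route as the paper: both invoke the I-MMSE relation of \cite{guo:05}, rewrite the derivative of mutual information as $\tfrac{1}{\gamma}\sum_k \mathrm{SINR}_k/(1+\mathrm{SINR}_k)$ via the identity $\mathrm{MMSE}_k=1/(1+\mathrm{SINR}_k)$, substitute the large-system SINR limits from Corollary~\ref{cor_MMSE_raised_cosine} and Theorem~\ref{theo:small_bandwidth_MMSE}, and integrate over $\gamma\in[0,E_\phi/N_0]$. Your treatment is in fact more careful than the paper's on justifying the interchange of the large-system limit with the $\gamma$-integration, which the paper takes for granted.
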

The proof of this corollary is discussed in Appendix
\ref{section:proof_theor_constrained_capacity}.

Let us consider again the case of sinc chip waveforms
as defined in  (\ref{sinc_waveform}) and uniform distribution of the
time delays.
Let $\alpha$ denote the bandwidth of the sinc pulse relative to the Nyquist bandwidth.
As noticed in Section \ref{sec:MMSE_det}, the multiuser efficiency $\eta_{\mathrm{sinc}}$ of an asynchronous  system with such sinc waveforms  given by (\ref{fix_point_aux}) and load $\beta$ equals
the multiuser efficiency $\eta_{\mathrm{syn}}$ of a synchronous system with Nyquist sinc pulses given by (\ref{fixed_point_MMSE_sinc}) and load  $\beta^\prime = \frac{\beta}{\alpha}.$
Since the load enters capacity per chip \eq{relation_capacity_Stieltjes_transform} only via the multiuser efficiency except for the linear pre-factor to the integral, we immediately find the following equation relating the two capacities per chip
\begin{equation}\label{eq:sinc_capacity}
 \mathcal{C}^{\text{(sinc)}} (\beta, \text{SNR},
\alpha)= \alpha\,
\mathcal{C}^{(\text{syn})} \left(\frac{\beta}{\alpha}, \text{SNR}
\right).
\end{equation}
 It is apparent from
(\ref{eq:sinc_capacity}) that synchronous and asynchronous systems
have the same capacity for $\alpha=1.$

In order to compare different systems (with possibly different
spreading gains and data rates), spectral efficiency has to
be given as a function of $\frac{E_b}{N_0},$ the level of energy
per bit per noise level equal to \cite{verdu:99a} \cite{shitz:99}
\begin{equation}
\frac{E_b}{N_0}= \frac{\beta \text{SNR}}{\mathcal{C}^{(*)}(\beta,
\text{SNR}, \cdot )}.
\end{equation}

\begin{figure*}
\noindent \begin{minipage}[t]{.49\linewidth} \begin{minipage}[t]{\linewidth} %
\centering \epsfig{file=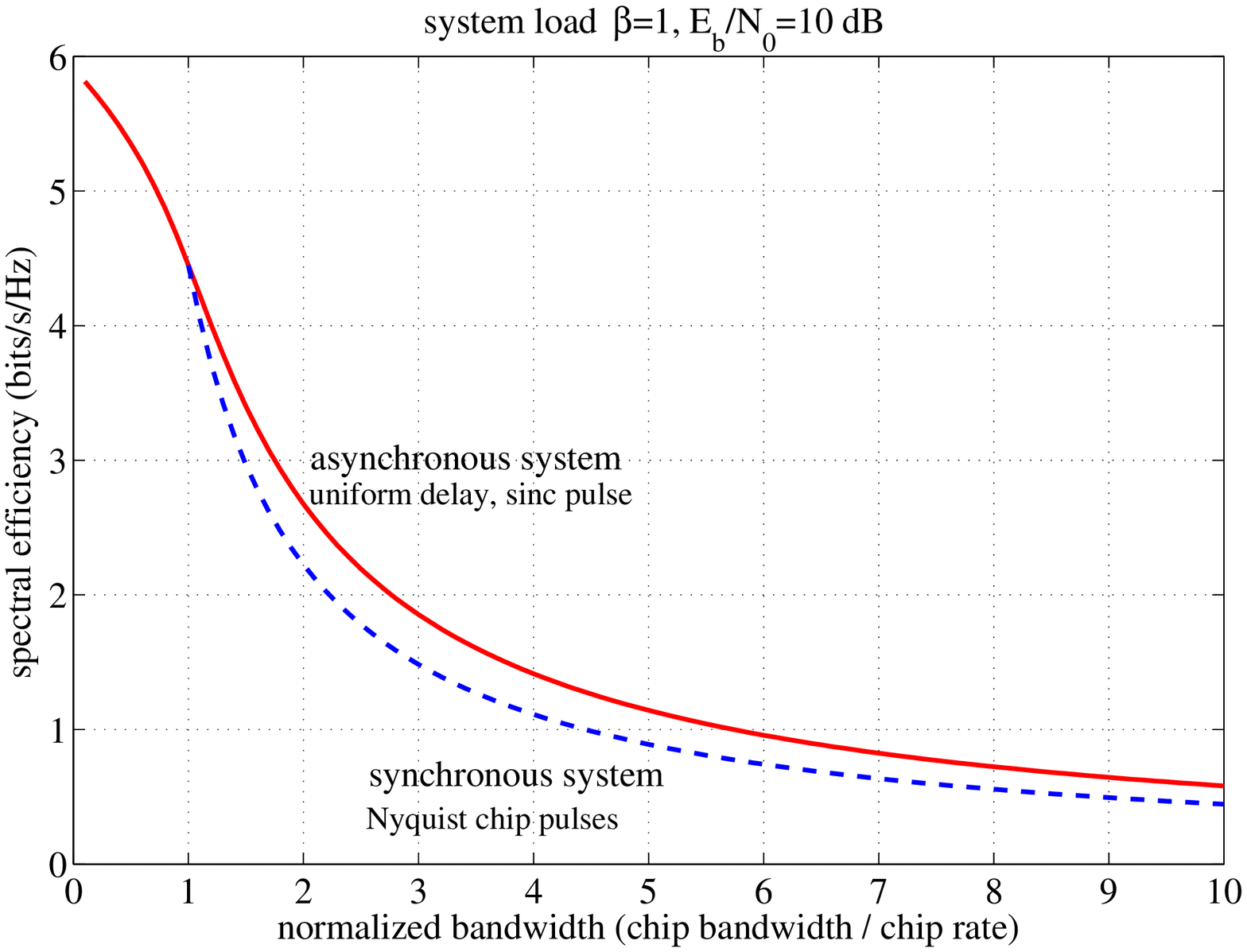,width=\linewidth}

{\caption{ Spectral efficiency of random CDMA with unit load versus
the normalized bandwidth $\alpha $ and $\frac{E_b}{N_0}=10 $ dB.
}\label{fig:capacity_vs_frequency}}
\end{minipage} \par\vspace*{0mm}\end{minipage}
\hfill \vspace*{0cm}
\begin{minipage}[t]{.49\linewidth}
\centering \epsfig{file=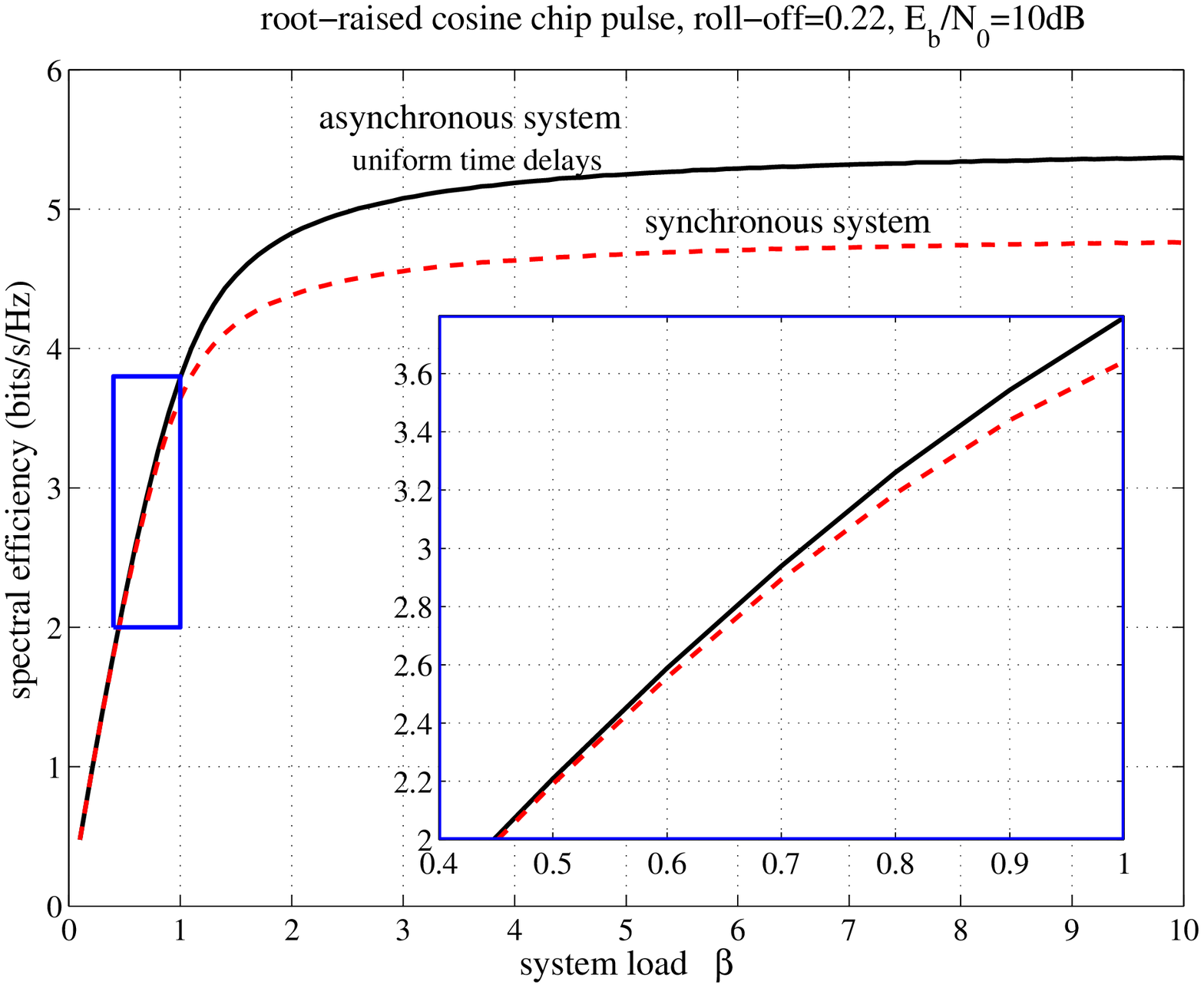,width=\linewidth}

{\caption{ Spectral efficiency of random CDMA versus
the load $\beta $ for the root-raised cosine chip pulse used in the UMTS standard and $\frac{E_b}{N_0}=10 $ dB.
}\label{fig:capacity_vs_load}}
\end{minipage}\end{figure*}

In Figure \ref{fig:capacity_vs_frequency}, we compare the spectral efficiency
of asynchronous CDMA with
the spectral efficiency of
synchronous CDMA.
TheF
spectral efficiencies are plotted against the bandwidth normalized to the Nyquist bandwidth with
$\frac{E_b}{N_0}=10 \, \mathrm{dB}$ and unit load $\beta=1.$
Recall from earlier discussions that for synchronous systems all Nyquist chip waveforms perform identically.
So there is no need to specify a particular Nyquist pulse except for the Nyquist pulse having the same bandwidth than the sinc pulse in the asynchronous case.
We see further that the smaller the normalized bandwidth, the higher the spectral efficiency is.
This is, as anti-resource pooling improves spectral efficiency by emulating a higher load.

In Figure \ref{fig:capacity_vs_load} the spectral efficiency is
plotted against the load $\beta$ with $\frac{E_b}{N_0}=10 \,
\mathrm{dB}$ for the chip waveform used in the UMTS standard.
When the load $\beta$
increases the gap in spectral efficiency between synchronous and asynchronous
systems increases.

%%%%%%%%%%%%%%%%%%%%%%%%%%%%%%%%%%%%%%%%%%%%%%%%%%%%%%%%%%%%
\section{Extension to General Asynchronous CDMA Systems}\label{conj}

In this section we extend the previous results to any distribution
of the time delays for CDMA systems. Without loss of generality we
can assume that the time delays $\tau_k \in [0, T_s)$
\cite{verdu:98}. In this case, intersymbol interference is not
negligible and an infinite observation window is necessary to
obtain sufficient statistics. Equation (\ref{matrix_model}) for
the chip asynchronous but symbol quasi-synchronous  system model is extended to a general asynchronous system by

\begin{align}\label{general_discrete_output}
   y[p]=\sum_{k=1}^{K}a_k  \sum_{m=-\infty}^{+\infty} b_k[m]
   \overline{c}_{k}^{(m)}\left( \frac{p}{r}T_c - \tau_k \right)+w[p]
\end{align}
with $p \in \mathbb{Z}$ and
\begin{equation}
\overline{c}_{k}^{(m)}=\sum_{u=0}^{N-1} s_{k,m}[u] {\phi}
\!\left(\!t-\left(\!u\!+\!mN \right)\!T_c \right).
\end{equation}

By assuming the same approximation as in (\ref{matrix_model}), the
virtual spreading sequence of user $k$ in  the symbol interval $m$
has nonzero elements only in the time interval $m$ and $m+1.$ Let
$\overline{\tau}_k$ denote the delay of the signal $k$ in terms
of the chip intervals and $\widetilde{\tau}_k$ the delay within a
chip, i.e., $\overline{\tau}_k=\left \lfloor \frac{\tau_k}{T_c}
\right \rfloor$ and $\widetilde{\tau}_k=\tau_k \mathrm{mod} T_c,$
respectively. The virtual spreading sequence of user $k$ is
obtained by computing $\boldsymbol{\Phi}_k$ as in
(\ref{matrice_Phi}) for $\tau=\widetilde{\tau}_k$ to account for
the delay within a chip and then by shifting the virtual spreading
vector down by $\overline{\tau}_k$ $r$-dimensional blocks to
account for the delay multiple of the chip interval. More
precisely, the virtual spreading in the $m$-th symbol interval is
given by the $2rN$-dimensional vector
\begin{equation}\label{virtual_spreading_esteso}
\left[ \begin{array}{c}
 \Mat{0}_{\overline{\tau}_k} \\
 \mathfrak{F}(\Vec{c}(\widetilde{\tau}_k) \\
 \Mat{0}_{N-\overline{\tau}_k}
\end{array} \right]\vecs_k^{(m)} =\widetilde{\boldsymbol{\Phi}}_k \vecs_k^{(m)}
\end{equation}
with
\begin{equation*}
\Vec{c}(\widetilde{\tau}_k)=\left[ \Phi(\Omega,
\widetilde{\tau}_k), \Phi\left(\Omega,
\widetilde{\tau}_k-\frac{T_c}{r} \right), \ldots \Phi\left(\Omega,
\widetilde{\tau}_k-\frac{(r-1)T_c}{r}\right) \right],
\end{equation*}
$\Vec{0}_{\overline{\tau}_k}$ and $\Vec{0}_{N-\overline{\tau}_k}$
column vectors with zero entries and dimension $r\overline{\tau}_k$
and $r(N-\overline{\tau}_k), $ respectively. The $2rN \times K$
virtual spreading matrix for the symbols transmitted at time
interval $m$ is then
\begin{equation*}
  \widetilde{\matS}^{(m)}=\left[\widetilde{\boldsymbol{\Phi}}_1 \vecs_1^{(m)}, \widetilde{\boldsymbol{\Phi}}_2 \vecs^{(m)}_2, \ldots \widetilde{\boldsymbol{\Phi}}_K \vecs_K^{(m)}
  \right].
\end{equation*}
For further study, we introduce the upper and lower part of the
matrix $\widetilde{\matS}^{(m)},$ $\widetilde{\matS}^{(m)}_u$ and $\widetilde{\matS}^{(m)}_d$ of size
$rN \times rK$ such that
\begin{equation*}
  \widetilde{\matS}^{(m)}=\left[\begin{array}{c}\widetilde{\matS}^{(m)}_u \\ \widetilde{\matS}^{(m)}_d
  \end{array}
  \right].
\end{equation*}
and the matrices $\widetilde{\matH}_u^{(m)}=\widetilde{\matS}_u^{(m)}\matA$ and
$\widetilde{\matH}_d^{(m)}=\widetilde{\matS}_d^{(m)}\matA.$ Then, the baseband discrete-time
asynchronous system in matrix notation is given by
\begin{equation}\label{matrix_model_general}
\boldsymbol{\mathcal{Y}}=
\boldsymbol{\mathcal{H}}\boldsymbol{\mathcal{B}}+\boldsymbol{\mathcal{W}}
\end{equation}
where
$\stacky=[\ldots,\vecy^{(m-1)^T},\vecy^{(m)^T},\vecy^{(m+1)^T}
\ldots]^T$ and
$\stackb=[\ldots,\vecb^{(m-1)^T},\vecb^{(m)^T},\vecb^{(m+1)^T}
\ldots]^T$ are the infinite-length vectors of received and
transmitted symbols respectively; $\stackW$ is an infinite-length
white Gaussian noise vector; and $\stackH$ is a bi-diagonal block
matrix with infinite block rows and block columns
\begin{equation}\label{unlimited_stack_H} \stackH = \left[ \begin{array}{ccccccc} \ddots & \ddots
&\ddots &\ddots &\ddots &\ddots &\ddots
\\
  \ldots & \mathbf{0} & \widetilde{\matH}_d^{(m-1)} & \widetilde{\matH}_u^{(m)} & \mathbf{0} &
 \ldots & \ldots  \\  \ldots & \ldots  & \mathbf{0} & \widetilde{\matH}_d^{(m)} & \widetilde{\matH}_u^{(m+1)} & \mathbf{0} &
 \ldots   \\ \ddots & \ddots &\ddots
&\ddots &\ddots &\ddots &\ddots
\\
\end{array} \right].
\end{equation}
Finally, we define the correlation matrices
$\stackT=\stackH\stackH^H$, $\stackR=\stackH^H \stackH$.

The following theorem shows that a linear MMSE
detector for a CDMA system with transfer matrix
${\stackH}$ and time delays $\tau_1, \tau_2, \ldots
\tau_K$ has the same limiting performance as a linear MMSE
detector for chip asynchronous but symbol quasi-synchronous CDMA
systems  introduced in Section \ref{sec:system_model}  with time delays $\widetilde{\tau}_1, \widetilde{\tau}_2,
\ldots \widetilde{\tau}_K.$ The same equivalence holds for capacity and spectral efficiency.

\begin{theor}\label{theo:equivalence_widetildestackH}
Given $\{\tau_1, \tau_2, \ldots \tau_K \}$ a set of delays in $[0,
T_s)$ let us consider the set of delays in $[0, T_c)$ defined as
$\left\{\widetilde{\tau}_k: \; \widetilde{\tau}_k= \tau_k  \mod T_c, \; k=1, \ldots K
\right\}.$ Given a positive integer $r,$  let $\Mat{\Phi}_k, \, k=1,\ldots K,$ be the $r$-blockwise circulant matrix of order $\N$ defined in (\ref{matrice_Phi}) with $\tau=\widetilde{\tau}_k.$   Let $\matA,$ $\Phi(\omega),$
 ${\matS},$ and ${\matH}$ be defined as in Theorem
\ref{theo:sinr_MMSE_chip_asynch}. Furthermore,
$\widetilde{\Mat{\Phi}}_k,$ $k=1 \ldots K $ are $2 r N \times N$
matrices such that $\widetilde{\Mat{\Phi}}_k= [\Mat{0}_{\overline{\tau}_k}^T,
\boldsymbol{\Phi}_k^{T}, \Mat{0}_{N-\overline{\tau}_k}^T
]^T$ with $\overline{\tau}_k=\left\lfloor \frac{\tau_k}{T_c} \right\rfloor,$
$\Mat{0}_{\overline{\tau}_k}$ and $\Mat{0}_{N-\overline{\tau}_k}$ zero matrices of dimensions $
r \overline{\tau_k} \times N$ and
$r\left( N-  \overline{\tau_k}\right) \times N,$ respectively. Let $\widetilde{{\matS}}^{(m)}=\left(
\widetilde{\Mat{\Phi}}_1 \vecs_{1}^{(m)}, \widetilde{\Mat{\Phi}}_2
\vecs_{2}^{(m)} \ldots \widetilde{\Mat{\Phi}}_K \vecs_{K}^{(m)} \right),$
$\widetilde{\matH}^{(m)}=[\widetilde{\matH}_u^{(m)T}, \widetilde{\matH}_d^{(m)T}]^T = \widetilde{\matS} \matA$ and ${\stackH}$ the infinite block row and block column
matrix of the same form as in (\ref{unlimited_stack_H}).
Let the same assumptions as in Theorem \ref{theo:sinr_MMSE_chip_asynch} hold.

Then, asymptotically, as $K, N \rightarrow \infty$ with
$\frac{K}{N} \rightarrow \beta$ the CDMA systems transfer matrices ${\stackH}$ and ${\matH}$ are equivalent in terms of multiuser efficiency for linear MMSE detectors and in terms of spectral efficiency.
\end{theor}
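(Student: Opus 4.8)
The plan is to prove the equivalence entirely at the level of the asymptotic eigenvalue distribution (a.e.d.), since every performance measure in the statement is a functional of it. The per-user multiuser efficiency is a resolvent quadratic form $\vech_k^H(\stackH_k\stackH_k^H+\sigma^2\Mat{I})^{-1}\vech_k$, and, by Corollary~\ref{theor:constrained_capacity}, the spectral efficiency \eq{relation_capacity_Stieltjes_transform} is a fixed functional of the multiuser efficiency evaluated at all signal-to-noise ratios. Hence it suffices to show that the limiting multiuser efficiency produced by the infinite, bi-diagonal block-Toeplitz operator $\stackH$ of \eq{unlimited_stack_H} is governed by the same fixed-point equation \eq{multiuser_efficiency_mat} as the single-symbol-interval system of Theorem~\ref{theo:sinr_MMSE_chip_asynch}; equality of spectral efficiencies then follows automatically.

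First I would isolate the purely algebraic reason why only the within-chip delays can matter. From \eq{discrete_Fourier_phi} one has $\phi(\Omega,\tau+T_c)=\mathrm{e}^{j\Omega}\phi(\Omega,\tau)$, and therefore, writing $\tau_k=\widetilde{\tau}_k+\overline{\tau}_kT_c$,
\[
\Mat{\Delta}_{\phi,r}(\Omega,\tau_k)=\mathrm{e}^{j\overline{\tau}_k\Omega}\,\Mat{\Delta}_{\phi,r}(\Omega,\widetilde{\tau}_k).
\]
The integer chip delay contributes only a unit-modulus phase that is common to all $r$ entries of $\Mat{\Delta}_{\phi,r}$. This phase cancels in the outer product $\Mat{\Delta}_{\phi,r}(\Omega,\tau)\Mat{\Delta}_{\phi,r}^H(\Omega,\tau)$ and in the scalar form $\Mat{\Delta}_{\phi,r}^H\boldsymbol{\Upsilon}\Mat{\Delta}_{\phi,r}$, which are the only ways the delays enter \eq{multiuser_efficiency_mat} and \eq{limit_SINR_MMSE_genaral}. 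Consequently the fixed-point matrix $\boldsymbol{\Upsilon}(\Omega)$ and the SINR \eq{limit_SINR_MMSE_genaral} are identical whether we use $\{\tau_k\}$ or $\{\widetilde{\tau}_k\}$ --- \emph{provided} the limiting performance of $\stackH$ obeys \eq{multiuser_efficiency_mat} in the first place. The entire content of the theorem is thus concentrated in showing that the intersymbol interference carried by the off-diagonal blocks of \eq{unlimited_stack_H} does not alter this fixed-point equation.

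To establish that, I would exploit two structural features of $\stackH$: it is statistically stationary in the symbol index $m$ (the matrices $\widetilde{\boldsymbol{\Phi}}_k$ are $m$-independent and the spreading is i.i.d.\ across $m$), and it is banded (bi-diagonal), so $\stackT=\stackH\stackH^H$ is block tri-diagonal. Truncating to a window of $W$ symbol intervals and invoking the asymptotic equivalence of block-Toeplitz and block-circulant matrices \cite{gray:71,gray:72}, I would replace the banded operator by its circulant closure, whose edge corrections contribute $O(1/W)$ to all normalized traces and Stieltjes transforms and thus vanish as $W\to\infty$. The circular closure folds each user's two-block virtual spreading $\widetilde{\boldsymbol{\Phi}}_k$ --- namely $\boldsymbol{\Phi}_k$ split across a symbol boundary by the integer shift $\overline{\tau}_k$ --- back into a single-symbol block-circulant spreading in which the chip shift is realized, in the chip-frequency domain, as the phase $\mathrm{e}^{j\overline{\tau}_k\Omega}$ above. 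Feeding the resulting separable per-user covariances $\tfrac1N|a_k|^2\boldsymbol{\Phi}_k\boldsymbol{\Phi}_k^H$ into the deterministic equivalent for the resolvent then reproduces \eq{multiuser_efficiency_mat}, and the phase identity collapses $\tau_k$ to $\widetilde{\tau}_k$.

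The main obstacle is making this double reduction rigorous in a setting that is simultaneously random, growing with $N$, and infinite in the symbol direction --- conditions outside the scope of the classical deterministic Gray equivalence. Concretely, one must (i) bound the window-edge corrections of the block tri-diagonal $\stackT$ uniformly in $N$ and justify interchanging the limits $N\to\infty$ and $W\to\infty$, and (ii) prove the ``reassembly'' step: that the fresh, independent randomness used in consecutive symbol intervals produces no surviving cross-symbol correlations in the limit, so that only the aggregate within-chip covariance --- which is phase-equivalent across integer chip shifts --- enters the deterministic equivalent. Controlling these self-averaging and truncation effects, rather than the algebra of the phase cancellation, is where the real work lies.
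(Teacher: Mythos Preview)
Your phase identity $\Mat{\Delta}_{\phi,r}(\Omega,\tau_k)=\mathrm{e}^{j\overline{\tau}_k\Omega}\Mat{\Delta}_{\phi,r}(\Omega,\widetilde{\tau}_k)$ is exactly the right observation, and it is the same one the paper exploits. But the paper's route from there is considerably more direct than yours and sidesteps precisely the two obstacles you flag at the end. Instead of establishing that $\stackH$ satisfies \eq{multiuser_efficiency_mat} via a symbol-level Toeplitz-to-circulant truncation and an $N\to\infty$, $W\to\infty$ limit interchange, the paper works at finite $N$ and shows that $\stackT=\stackH\stackH^H$ is unitarily (and distributionally) equivalent to a block-diagonal operator whose diagonal blocks are copies of the single-symbol matrix $\matT^{(m)}=\matH^{(m)}\matH^{(m)H}$ built with the reduced delays $\widetilde{\tau}_k$. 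The mechanism is: apply the block Fourier transform $\Mat{F}_N\otimes\Mat{I}_r$ within each symbol interval; then the integer chip shift $\overline{\tau}_k$ appears only as the diagonal phase rotation $\boldsymbol{\mathcal{R}}_{rN}^{\overline{\tau}_k}$ (your $\mathrm{e}^{j\overline{\tau}_k\Omega}$, sampled) acting on $\boldsymbol{\Delta}_{\phi,r}(\widetilde{\tau}_k)\widetilde{\vecs}_k^{(m)}$. Because $\widetilde{\vecs}_k^{(m)}$ is circularly symmetric Gaussian and independent across $m$, this unitary rotation is absorbed into the spreading vector without changing its law. After absorption, the column for $(k,m)$ is supported entirely in the $m$-th symbol block, so the bi-diagonal structure collapses and Theorem~\ref{theo:sinr_MMSE_chip_asynch} applies block by block.

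The payoff of the paper's argument is that there is no second asymptotic in a window length $W$, no edge-effect control uniform in $N$, and no ``reassembly'' of cross-symbol randomness to justify: the Gaussian rotational invariance does all of that work algebraically, before any large-system limit is taken. Your route would presumably succeed, but the analytic burden you correctly identify as ``where the real work lies'' is entirely avoidable once you notice that the phase can be pushed onto the random spreading rather than onto the deterministic equivalent.
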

This theorem is shown in Appendix
\ref{section:proof_theo_equivalence_widetildestackH}.

Interestingly, the system performance depends on the time delays
${\tau}_k$ only through the offsets $\widetilde{\tau}_k - \left
\lfloor \frac{\tau_k}{T_c} \right\rfloor T_c.$ Therefore, any
shift of the signal multiple of $T_c$ does not affect the
performance of the system.
\begin{comment}We conjecture the asymptotic
equivalence between a CDMA system with transfer matrix $\stackH$
and a CDMA system with transfer matrix $\widetilde{\stackH}.$ In
some special cases this conjecture is proven \cite{kiran:00},
\cite{cottatellucci:04g}, \cite{cottatellucci:04b}. In the general
case it is supported by numerical results. Some simulations
validating this conjecture are in part II Section IV.
\end{comment}
 \begin{comment} The considerations is this paper have been restricted to frequency-flat fading.
This may look like a restriction of generality.
However, we conjecture that the opposite is the case:
Frequency-selective fading implies that the impulse response of the channel is random and spans several chip intervals.
This implies that the mean delay among the users is random and approximately uniformly distributed.
Since the impulse response of the channel can be easily taken care of by a modified chip waveform, we conjecture that the case of frequency-selective fading breaks down to flat fading with uniform delay distribution and thus is not more but less general than the case considered here.
\end{comment}

The analysis presented in this contribution has been restricted to frequency flat fading for the sake of clarity. The extension to multipath fading channels is straightforward when the impulse response of the channel is much shorter than the symbol interval. In fact, the chip pulse waveform at the output of the matched filter $\phi(t)$ can include the effects of the frequency selective channel impulse response $a(t)$ along with the effects of the transmitted chip pulse waveform $\psi(t),$ and the filter at the front-end $g(t)$, i.e., $\phi(t)= \psi(t)*a(t)*g(t).$   Then, the analysis of a system with frequency selective fading reduces to the proposed analysis.

%%%%%%%%%%%%%%%%%%%%%%%%%%%%%%%%%%%%%%%%%%%%%%%%%%%%%%%%%%%%%%%%%

\section{Conclusions}\label{chap:async_sec:conclusions}

This work provides a general framework for the analysis of
asynchronous CDMA systems with random spreading using sufficient
or suboptimum statistics and any chip pulse waveform. Furthermore,
it includes several optimum or suboptimum receiver structures of
practical and theoretical interest. Therefore, it provides insight
into both the fundamental limits of asynchronous CDMA systems and
the performance loss of implementations where suboptimum
receiver structures, suboptimum statistics, and/or non-ideal chip
pulses are utilized.

For the receiver structures investigated in Part I,  the
performance of a CDMA system is independent of the time delay
distribution if the bandwidth of the chip pulse waveform is not
greater than half of the chip rate, i.e., $B \leq \frac{1}{2 T_c}.$
This also implies that synchronous and asynchronous CDMA systems
have the same performance and generalizes the equivalence result
in \cite{mantravadi:02} for Nyquist sinc ($B = \frac{1}{2 T_c}$)
pulses and linear MMSE detectors to any chip pulse waveform.
The behavior of CDMA system changes substantially as  the
bandwidth gets larger. In this case, the system performance is
significantly affected by the distribution of the time delays and
the performance of linear detectors may depend on the specific
time delay of the signal of interest. If the receiver is fed by
sufficient statistics and the time delay distribution is uniform
the performance of optimum or suboptimum receivers is independent
of the time delays.
In the following,
we summarize the most interesting aspects pointed out by the large
system analysis, for each class of receivers.

\subsection{Optimum Receiver}

The spectral efficiency constrained to a given chip pulse waveform
characterizes the performance of a CDMA channel with optimum
receiver. The spectral efficiency is
expressed in terms of the multiuser efficiency spectral density $\eta(\omega).$
When the chip-modulation is based
on sinc pulses whose bandwidth is $\alpha$ times the Nyquist bandwidth, the spectral efficiency of asynchronous CDMA systems is
identical to the spectral efficiency of synchronous systems with load
$\beta^{\prime}=\frac{\beta}{\alpha}$ and Nyquist sinc pulses.
Spectral efficiency is a strictly decreasing function of the relative pulse bandwidth $\alpha$ and for $\alpha\to0$, the spectral efficiency of a single user AWGN channel is reached.

 For $\alpha > 1$ an asynchronous CDMA system
with modulation based on a sinc function can compensate to some
extent for the loss in spectral efficiency of synchronous CDMA
systems with equal bandwidth. For $\beta \rightarrow \infty$ it
attains the maximum spectral efficiency for any finite bandwidth
$B=\frac{\alpha}{2 T_c}.$

\subsection{Linear MMSE Detector}

The output SINR of a linear MMSE detector can be obtained from the
solution to a system of fixed point equations in the general case.
In the two cases (i) chip pulses with
bandwidth $B \leq \frac{1}{2T_c}$  and (ii) chip pulses with
bandwidth $B>
\frac{1}{2 T_c}$, sufficient statistics and uniform time delay distribution
the fixed point system of equation reduces to a single equation.
In those cases, the performance of a linear MMSE detector in
asynchronous CDMA systems is characterized by a unique value  of
multiuser efficiency. Furthermore, the measure of multiuser
efficiency can be refined by the concept of spectrum of the
multiuser efficiency that is also unique for all the users.
Additionally, for these CDMA systems the limiting interference
effects can be decoupled into
user domain and frequency domain, as the system grows large, such that we can define an
effective interference spectral density similarly to the effective
interference in \cite{tse:99b} for synchronous systems.

In the special case that the modulation is based on sinc functions
with bandwidth $B=\frac{\alpha}{2T_c},$ a linear MMSE detector in
asynchronous CDMA channels performs identically to a synchronous CDMA
system with square root Nyquist chip pulses \cite{tse:99b} and
load $\beta^{\prime}=\frac{\beta}{\alpha}.$ This effect is similar
to the resource pooling effect for synchronous CDMA systems with
spatial diversity in \cite{hanly:99a} and shows the possibility to
trade degrees of freedom in the frequency domain against degrees
of freedom in the time domain.

Though this work focused on performance measures for CDMA, similar results hold for asynchronous MIMO systems due to the mathematical analogy between CDMA and MIMO systems when described as a discrete-time vector channel.
This means, that MIMO systems with excess bandwidth and desynchronized modulators for different antenna elements benefit in a similar manner than CDMA systems with desynchronized users.

\section*{Acknowledgment}

The authors thank Alex Grant for useful discussions.

\appendices

\section{Useful Mathematical Tools}\label{section:mathematical_tools}
Let $\Phi(\omega)$ be the unitary Fourier transform of a pulse waveform
$\phi(t)$ with bandwidth $B \leq \frac{r}{2T_c} .$ Then, in the
normalized frequency interval $\Omega \in \left[-\pi,
\pi \right]$ the unitary Fourier transform
(\ref{discrete_Fourier_phi}) of the sequence obtained by sampling
$\phi(t)$ at time instant $\tau$ and rate $\frac{r}{T_c}$ is given
by
\begin{equation}\label{phi_form_bandlimited}
{\phi}(\Omega,\tau)= \frac{1}{T_c} \mathrm{e}^{   \frac{
j\tau}{T_c}\Omega} \sum_{s \in \mathcal{Z}(\Omega)}
\mathrm{e}^{  j 2 \pi \frac{ \tau}{T_c}s} \Phi^{*} \left(\frac{\Omega+ 2
\pi s}{T_c} \right) \quad \text{for } \quad |\Omega|\leq
\pi
\end{equation}
where $\mathcal{Z}(\Omega)$ is the set of all integers in the interval\\ $\left[\min\left(-\mathrm{sign}(\Omega)\left\lfloor \frac{r-1}{2}
\right\rfloor
, \mathrm{sign}(\Omega)\left\lfloor\frac{r}{2}\right\rfloor\right), \;\max\left(-\mathrm{sign}(\Omega)\left\lfloor \frac{r-1}{2}
\right\rfloor
, \mathrm{sign}(\Omega)\left\lfloor\frac{r}{2}\right\rfloor\right)\right].$

The matrix
\begin{equation}\label{matrixQ(xt)_def}\Mat{Q}(\Omega,\tau)=\Mat{\Delta}_{\phi,r}(\Omega,\tau)
\Mat{\Delta}_{\phi,r}(\Omega,\tau)^H,\end{equation}  with
$\Mat{\Delta}_{\phi,r}(\Omega,\tau)$ defined in
(\ref{delta_phi_r_def}), can be decomposed in the sum of two
matrices
\begin{equation}\label{Q(xt)_decomposition}
\Mat{Q}(\Omega,\tau)=\Mat{Q}(\Omega)+\overline{\Mat{Q}}(\Omega,\tau)
\end{equation}
where the $(k, \ell)$-elements of the matrices $\Mat{Q}(\Omega)$ and
$\overline{\Mat{Q}}(\Omega,\tau)$ are given by
\begin{equation}\label{elem_Q(x)}
(\Mat{Q}(\Omega))_{k,\ell}=\frac{1}{T_c^2}
\sum_{s \in \mathcal{Z}(\Omega)}
\left|\Phi\left(\frac{\Omega+ 2 \pi s}{T_c} \right)\right|^2
\mathrm{e}^{- j\frac{ k-\ell}{r}(\Omega+2 \pi s)}  \quad \text{for }
\quad |\Omega|\leq \pi,
\end{equation}
and
\begin{multline}\label{elem_Q(x_tau)}
(\overline{\Mat{Q}}(\Omega,\tau))_{k,\ell}=\frac{1}{T_c^2}
\sum_{\begin{subarray}{c}
  s,u \in \mathcal{Z}(\Omega) \\
  s \neq u
\end{subarray}}\Phi\left(\frac{\Omega+ 2 \pi u}{T_c} \right)\Phi^{*}\left(\frac{\Omega+ 2
\pi s}{T_c} \right) \mathrm{e}^{-j 2 \pi \frac{
\tau}{T_c}(s-u)} \mathrm{e}^{-j  \left(\frac{
k-1}{r}(\Omega-2 \pi s)-\frac{ \ell-1}{r}(\Omega-2 \pi u) \right)} \\ \text{for } \quad
|\Omega|\leq \pi,
\end{multline}
respectively.

Useful properties of the matrices $\Mat{Q}(\Omega)$ and
$\overline{\Mat{Q}}(\Omega,\tau)$ are stated in the following lemmas.

\begin{lemma}\label{Lemma:traceBQ}
Let $\Mat{B}$ be an $r \times r$ matrix of the form
\begin{equation}\label{matrixBform}
\Mat{B}= \Mat{B}(\Omega)=\left[ \begin{array}{ccccc}
  b_0 & b_1 \mathrm{e}^{j \frac{\Omega}{r} } & \ldots & \ldots & b_{r-1} \mathrm{e}^{j \frac{ (r-1)}{r} \Omega} \\
  b_{r-1} \mathrm{e}^{-j \frac{\Omega}{r}} & b_0 & b_1 \mathrm{e}^{j \frac{\Omega}{r} } & \ldots & b_{r-2} \mathrm{e}^{j \frac{ (r-2)}{r} \Omega} \\
  \ldots & \ddots & \ddots & \ddots & \ddots \\
  b_{1} \mathrm{e}^{-j \frac{ (r-1)}{r} \Omega} & \ddots & \ddots & b_{r-1} \mathrm{e}^{-j \frac{\Omega}{r} } & b_0
\end{array} \right],
\end{equation}
i.e., given $b_0=b_0(\Omega),b_1=b_1(\Omega), \ldots b_{r-1}=b_{r-1}(\Omega),$
eventually functions of $\Omega,$ $(\Mat{B})_{\ell,k},$ the element
$(\ell,k)$ of the matrix $\Mat{B}$ satisfies
$(\Mat{B})_{\ell,k}=\mathrm{e}^{\frac{j(k-\ell)}{r}\Omega}
b_{(r+k-\ell)\mathrm{mod}r}.$ Let $\overline{\Mat{Q}}(\Omega,\tau)$ be
the $r \times r$ matrix with element $(k, \ell)$ defined in
(\ref{elem_Q(x_tau)}). Then,
\begin{equation*}
\mathrm{tr}(\Mat{B \overline{Q}}(\Omega, \tau))=0.
\end{equation*}
\end{lemma}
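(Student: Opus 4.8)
The plan is to expand the trace entry by entry and show that every term in the double sum over $s\neq u$ that defines $\overline{\Mat{Q}}(\Omega,\tau)$ is annihilated once the sum over the matrix indices is carried out. Writing $\mathrm{tr}(\Mat{B}\,\overline{\Mat{Q}}(\Omega,\tau))=\sum_{k,\ell}(\Mat{B})_{k,\ell}\,(\overline{\Mat{Q}}(\Omega,\tau))_{\ell,k}$ and inserting the explicit entries from (\ref{matrixBform}) and (\ref{elem_Q(x_tau)}), the summand carries three groups of exponential phases: the twist $e^{j(\ell-k)\Omega/r}$ coming from $\Mat{B}$, together with the two phases from $\overline{\Mat{Q}}$. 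First I would collect all factors that depend on the matrix indices $k,\ell$ and isolate the dependence on the residue $(r+\ell-k)\bmod r$ carried by $b_{(r+\ell-k)\bmod r}$.

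The key algebraic observation is that the coefficient of $\Omega$ in the combined $(k,\ell)$-dependent exponent cancels identically, since the $\Omega/r$ contributions sum to $\tfrac{1}{r}\bigl[(\ell-k)-(\ell-1)+(k-1)\bigr]=0$. This is precisely the point at which the phase twist built into $\Mat{B}$ in (\ref{matrixBform}) is matched to the phase structure of $\overline{\Mat{Q}}$; what survives is the factor $e^{j\frac{2\pi}{r}[(\ell-1)s-(k-1)u]}$. Shifting to $k'=k-1,\ \ell'=\ell-1\in\{0,\dots,r-1\}$, using $b_{(r+\ell-k)\bmod r}=b_{(\ell'-k')\bmod r}$, substituting $m=(\ell'-k')\bmod r$, and noting that $s,u$ are integers so the phases depend only on residues mod $r$, the double sum factorizes:
\begin{equation*}
\sum_{k,\ell}(\cdots)=\left(\sum_{m=0}^{r-1} b_m\, e^{j\frac{2\pi}{r}ms}\right)\left(\sum_{k'=0}^{r-1} e^{j\frac{2\pi}{r}k'(s-u)}\right).
\end{equation*}

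The inner geometric sum in $k'$ equals $r$ when $s\equiv u \pmod r$ and vanishes otherwise. The crucial finiteness input, and the step I expect to require the most care, is that $\mathcal{Z}(\Omega)$ consists of exactly $r$ consecutive integers: from its definition (given just after (\ref{phi_form_bandlimited})) its cardinality is $\lfloor r/2\rfloor+\lfloor(r-1)/2\rfloor+1=r$. Hence any two distinct $s,u\in\mathcal{Z}(\Omega)$ satisfy $0<|s-u|\le r-1$, so $s-u\not\equiv 0\pmod r$ and the geometric sum is zero. Since the defining sum of $\overline{\Mat{Q}}(\Omega,\tau)$ runs only over pairs with $s\neq u$, every term is killed and $\mathrm{tr}(\Mat{B}\,\overline{\Mat{Q}}(\Omega,\tau))=0$. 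The main obstacle is thus not the algebra but correctly pinning down the consecutiveness and cardinality of $\mathcal{Z}(\Omega)$, which is exactly what lets the strictly off-diagonal ($s\neq u$) character of $\overline{\Mat{Q}}$ annihilate the twisted-circulant matrix $\Mat{B}$.
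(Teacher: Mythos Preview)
Your argument is correct and follows essentially the same route as the paper: expand the trace, observe that the $\Omega/r$ phases from $\Mat{B}$ and $\overline{\Mat{Q}}$ cancel, reduce to a geometric sum $\sum_{k'} e^{j\frac{2\pi}{r}k'(s-u)}$, and use $|s-u|\in\{1,\dots,r-1\}$ for distinct $s,u\in\mathcal{Z}(\Omega)$ to conclude it vanishes. The only cosmetic difference is that the paper splits the index sum into two pieces $\eta_1$ (for $k\ge\ell$) and $\eta_2$ (for $k<\ell$) before recombining them, whereas your use of the bijection $\ell'\mapsto m=(\ell'-k')\bmod r$ on $\{0,\dots,r-1\}$ achieves the same factorization in one stroke.
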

\begin{proof}
Let $\overline{q}_{us}(\Omega)=\frac{1}{T_c^2}\Phi\left(\frac{\Omega+ 2
\pi u}{T_c} \right)\Phi^{*}\left(\frac{\Omega+ 2
\pi s}{T_c}\right).$ Then,
\begin{align}
\mathrm{tr}(\Mat{B \overline{Q}}(\Omega+j 2
\pi u, \tau))&= \sum_{k=1}^{r}
\sum_{\ell=1}^{r}  (\Mat{\overline{Q}}(\Omega,
\tau))_{k,\ell}(\Mat{B})_{\ell,k} \nonumber\\
&= \sum_{\begin{subarray}{c}
  s,u \in \mathcal{Z}(\Omega) \\
  s \neq u
\end{subarray}}\overline{q}_{us} \mathrm{e}^{j 2 \pi \frac{ \tau}{T_c}(s-u)}
\sum_{k=1}^{r} \sum_{\ell=1}^{r} (\Mat{B})_{\ell,k} \mathrm{e}^{-j
 \frac{ k-\ell}{r} \Omega } \mathrm{e}^{-j \frac{ 2
\pi}{r}(-s(k-1)+u(\ell-1))} \nonumber \\
&= \sum_{\begin{subarray}{c}
  s,u \in \mathcal{Z}(\Omega) \\
  s \neq u
\end{subarray}}\overline{q}_{us} \mathrm{e}^{j 2 \pi \left(\frac{
\tau}{T_c}-\frac{1}{r} \right)(s-u) }  \sum_{k=1}^{r}
\sum_{\ell=1}^{r} b_{(r+k-\ell)\mathrm{mod}r} \mathrm{e}^{-j
\frac{ 2 \pi}{r}(u-s)k} \mathrm{e}^{j \frac{ 2
\pi}{r}u(k-\ell)} \nonumber \\
&=\sum_{\begin{subarray}{c}
  s,u \in \mathcal{Z}(\Omega) \\
  s \neq u
\end{subarray}}\overline{q}_{us} \mathrm{e}^{j 2 \pi \left(\frac{
\tau}{T_c}-\frac{1}{r} \right)(s-u)} (\eta_{1}+ \eta_{2})
\nonumber
\end{align}
with
\begin{equation}\label{eta1}
\eta_1=\sum^{r}_{\begin{subarray}{c}
  {k,\ell=1} \\
  k \geq \ell
\end{subarray}} b_{(r+k-\ell)\mathrm{mod}r} \mathrm{e}^{-j \frac{ 2
\pi}{r}(u-s)k} \mathrm{e}^{j \frac{ 2 \pi}{r}u(k-\ell)}
\end{equation}
and
\begin{equation}\label{eta2}
\eta_2=\sum^{r}_{\begin{subarray}{c}
  {k,\ell=1} \\
  k < \ell
\end{subarray}} b_{(r+k-\ell)\mathrm{mod}r} \mathrm{e}^{-j \frac{ 2
\pi}{r}(u-s)k} \mathrm{e}^{j \frac{ 2 \pi}{r}u(k-\ell)}.
\end{equation}
Substituting $v=k-\ell$ in (\ref{eta1}) and $v= r+k-\ell$ in
(\ref{eta2}) we obtain
\begin{equation*}
\eta_1=\sum_{k=1}^{r}  \sum_{v=0}^{k-1} b_{v} \mathrm{e}^{-j
\frac{ 2 \pi}{r}(u-s)k} \mathrm{e}^{j \frac{ 2 \pi u v}{r}}
\end{equation*}
and
\begin{equation*}
\eta_2=\sum_{k=1}^{r}  \sum_{v=k}^{r-1} b_{v} \mathrm{e}^{-j
\frac{ 2 \pi}{r}(u-s)k} \mathrm{e}^{j \frac{ 2 \pi u v}{r}},
\end{equation*}
respectively. For $s,t \in  \mathcal{Z}(\Omega) $ and $s
\neq t$, $|s-t| \in [1, \ldots, r-1].$ Therefore, $\sum_{k=1}^r
\mathrm{e}^{-j \frac{ 2 \pi}{r}(u-s)k}=0$ and $\eta_1+\eta_2=0$
for all $\Omega.$ Then, also $\mathrm{tr}(\Mat{B \overline{Q}}(\Omega,
\tau))=0$ and this concludes the proof of Lemma
\ref{Lemma:traceBQ}.
\end{proof}
It follows immediately from Lemma \ref{Lemma:traceBQ} that
$\mathrm{tr}\Mat{ \overline{Q}}(\Omega, \tau)=0$ since the identity
matrix $\Mat{I}$ is of the form (\ref{matrixBform}) with $b_0=1$
and $b_i=0$ for $i=1, \ldots r-1.$

\begin{lemma}\label{lemma:invarianceQT}
Let $\Mat{B}=\Mat{B}(\Omega)$ be a matrix defined as in Lemma
\ref{Lemma:traceBQ} and let $\Mat{Q}(\Omega)$ be the $r \times r$
matrix with element $(k,\ell)$ defined in (\ref{elem_Q(x)}). Then,
the matrix $\Mat{C}(\Omega)=\Mat{Q}(\Omega) \Mat{B}(\Omega)$ is of the form
(\ref{matrixBform}).
\end{lemma}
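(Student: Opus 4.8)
The plan is to show that the family of matrices of the form (\ref{matrixBform}) is closed under multiplication, and that $\Mat{Q}(\Omega)$ already belongs to this family; the claim then follows at once, since $\Mat{C}(\Omega)=\Mat{Q}(\Omega)\Mat{B}(\Omega)$ is a product of two such matrices.

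First I would observe that $\Mat{Q}(\Omega)$ itself has the structure (\ref{matrixBform}). Reading off (\ref{elem_Q(x)}) and interchanging the two indices to match the convention of Lemma \ref{Lemma:traceBQ}, one factors out the phase $\mathrm{e}^{j(k-\ell)\Omega/r}$ to write
\begin{equation*}
(\Mat{Q}(\Omega))_{\ell,k}=\mathrm{e}^{j\frac{k-\ell}{r}\Omega}\, q_{k-\ell},
\qquad
q_d=\frac{1}{T_c^2}\sum_{s\in\mathcal{Z}(\Omega)}\left|\Phi\!\left(\tfrac{\Omega+2\pi s}{T_c}\right)\right|^2 \mathrm{e}^{j\frac{2\pi s d}{r}}.
\end{equation*}
Since $\mathrm{e}^{j2\pi s d/r}$ is $r$-periodic in $d$ for integer $s$, the coefficient $q_d$ depends on $d$ only modulo $r$, i.e.\ $q_{k-\ell}=q_{(r+k-\ell)\bmod r}$, which is exactly the defining property of (\ref{matrixBform}).

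The heart of the argument is the closure step. I would extend the generating coefficients of any matrix of the form (\ref{matrixBform}) $r$-periodically, so that $(\Mat{M})_{\ell,k}=\mathrm{e}^{j(k-\ell)\Omega/r}\, m_{k-\ell}$ with the index of $m$ read modulo $r$. Given two such matrices $\Mat{M},\Mat{N}$ with coefficient sequences $m_\bullet,n_\bullet$, I would compute
\begin{equation*}
(\Mat{M}\Mat{N})_{\ell,k}=\sum_{t=1}^{r}\mathrm{e}^{j\frac{t-\ell}{r}\Omega} m_{t-\ell}\, \mathrm{e}^{j\frac{k-t}{r}\Omega} n_{k-t}
=\mathrm{e}^{j\frac{k-\ell}{r}\Omega}\sum_{t=1}^{r} m_{t-\ell}\, n_{k-t}.
\end{equation*}
The two intermediate phases telescope \emph{exactly}, because their exponents involve the true integer differences $t-\ell$ and $k-t$ rather than their residues; this is the single point that needs care. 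Substituting $a=(t-\ell)\bmod r$ then turns the remaining sum into the cyclic convolution $\sum_a m_a\, n_{(k-\ell)-a}$, which depends on $k-\ell$ only modulo $r$. Hence $\Mat{M}\Mat{N}$ is again of the form (\ref{matrixBform}).

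I expect the main obstacle to be exactly this bookkeeping: one must keep the phase factor attached to the genuine integer $k-\ell$ while reducing the coefficient index modulo $r$, and check that the periodic extension is consistent with the original definition $(\Mat{B})_{\ell,k}=\mathrm{e}^{j(k-\ell)\Omega/r} b_{(r+k-\ell)\bmod r}$. Once closure is established, applying it to $\Mat{Q}(\Omega)$ (shown above to lie in the family) and $\Mat{B}(\Omega)$ (in the family by hypothesis) finishes the proof. As a cleaner alternative route, one may note that (\ref{matrixBform}) describes precisely the matrices $\Mat{D}^{-1}\Mat{C}\Mat{D}$ with $\Mat{C}$ circulant and $\Mat{D}=\mathrm{diag}(\mathrm{e}^{j\ell\Omega/r})$ a fixed diagonal; closure is then immediate from the closure of circulant matrices under multiplication together with $\Mat{D}^{-1}\Mat{C}_1\Mat{D}\,\Mat{D}^{-1}\Mat{C}_2\Mat{D}=\Mat{D}^{-1}\Mat{C}_1\Mat{C}_2\Mat{D}$.
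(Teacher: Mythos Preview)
Your argument is correct, and it differs from the paper's proof in a genuinely useful way. The paper verifies the conclusion by direct element-wise computation of the specific product $\Mat{C}=\Mat{Q}(\Omega)\Mat{B}(\Omega)$: it writes $(\Mat{C})_{\ell,k}=\mathrm{e}^{j(k-\ell)\Omega/r}\kappa(\ell,k)$ with an auxiliary quantity $\eta(\ell,k,s)$ and then checks the shift invariance $\kappa(\ell,k)=\kappa((\ell+1)\bmod r,(k+1)\bmod r)$ by hand. Your route instead proves two reusable facts: that $\Mat{Q}(\Omega)$ already lies in the family~(\ref{matrixBform}), and that this family is closed under multiplication (either via the telescoping of phases plus cyclic convolution of the coefficient sequences, or via the conjugated-circulant description $\Mat{D}^{-1}\Mat{C}\Mat{D}$). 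What this buys you is structure: the closure argument shows at once that (\ref{matrixBform}) is a commutative matrix algebra, and the $\Mat{D}^{-1}\Mat{C}\Mat{D}$ characterization makes it transparent that all such matrices share the common eigenbasis $\Mat{U}(\Omega)$, which is precisely what the paper establishes separately in Lemmas~\ref{lemma:Q(x)_decomposition} and~\ref{lemma:matrix_with_eigenbasis_U}. The paper's approach, on the other hand, is self-contained and avoids introducing the circulant viewpoint, at the cost of a computation tailored to $\Mat{Q}(\Omega)$ specifically.
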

\begin{proof}
The element $(k, \ell)$ of the matrix $\Mat{C}=\Mat{C}(\Omega),$
$(\Mat{C})_{\ell,k}$ is given by
\begin{align}
(\Mat{C})_{\ell,k} & =\sum_{t=1}^{r} (\Mat{B})_{\ell, t}
(\Mat{Q}(\Omega))_{t,k} \nonumber\\
& = \mathrm{e}^{j \frac{k-\ell}{r} \Omega} \kappa(\ell,k)
\end{align}
with
\begin{equation*}
\kappa(\ell,k) =
\frac{1}{T_c^2}\sum_{s \in \mathcal{Z}(\Omega)}
\left|\Phi\left( \frac{\Omega +2 \pi s}{T_c} \right)\right|^2
\eta(\ell, k,s)
\end{equation*}
and
\begin{equation}\label{aux_var_lemma2}
\eta(\ell,k,s)=\sum_{t=1}^{r} b_{(r+t-\ell)\mathrm{mod}r}
\mathrm{e}^{-j 2 \pi \left( \frac{t-k}{r}\right)s}.
\end{equation}
In order to prove Lemma \ref{lemma:invarianceQT} it is sufficient
to prove that
\begin{equation}\label{property_kappa}\kappa(\ell,
k)=\kappa((\ell+1)\mathrm{mod}r, (k+1)\mathrm{mod}r).
\end{equation} In fact, in
this case $\Mat{C}_{\ell,k}=\mathrm{e}^{ j \frac{k-\ell}{r}
\Omega} \kappa_{(r+k-\ell)\mathrm{mod}r}$ with
$\kappa_{(r+k-\ell)\mathrm{mod}r}=\kappa(\ell,k).$ The property
(\ref{property_kappa}) is implied by a similar property on
$\eta(\ell, k,s)$
\begin{equation}\label{property_eta}\eta(\ell,
k,s)=\eta((\ell+1)\mathrm{mod}r,
(k+1)\mathrm{mod}r,s).\end{equation} It is straightforward to
verify that (\ref{property_eta}) is satisfied since both factors
$b_{(r+t-\ell)\mathrm{mod}r}$ and $\mathrm{e}^{-j 2\pi \left(
\frac{t-k}{r}\right)}$ are periodical in their arguments $\ell$
and $k$, respectively, with period $r$ and $k$ and $\ell$ are
simultaneously increased by a unit. This concludes the proof of
Lemma \ref{lemma:invarianceQT}.
\end{proof}
The following lemma provides the eigenvalue decomposition of the
matrix $\Mat{Q}(\Omega).$
\begin{lemma}\label{lemma:Q(x)_decomposition}
Let $\Mat{Q}(\Omega)$ be an $r \times r$ matrix with element $(k,\ell)$
defined in (\ref{elem_Q(x)}). Then, the matrix $\Mat{Q}(\Omega)$ can be
decomposed as follows
\begin{equation}\label{Q(x)_decomposition}
\Mat{Q}(\Omega)=\Mat{U}(\Omega) \Mat{D}(\Omega)\Mat{U}^H(\Omega)
\end{equation}
where
\begin{equation}\label{U_definition}
\Mat{U}(\Omega)=\left( \Vec{e}\left(\Omega-\mathrm{sign}(\Omega)2 \pi \left \lfloor
\frac{r-1}{2} \right \rfloor \right), \ldots \Vec{e}\left(\Omega
\right) \ldots \Vec{e}\left( \Omega+\mathrm{sign}(\Omega) 2 \pi \left \lfloor
\frac{r}{2} \right \rfloor \right) \right),
\end{equation}
$\Vec{e}\left( \Omega \right)$ is an r-dimensional column vector
defined by
\begin{equation*}
\Vec{e}\left( \Omega \right)= \frac{1}{\sqrt{r}}\left( 1,
\mathrm{e}^{-j\frac{\Omega}{r}}, \ldots \mathrm{e}^{-j\frac{r-1}{r} \Omega} \right)^T,
\end{equation*}
and $\Mat{D}(\Omega)$ is the diagonal matrix whose $s^{\mathrm{th}}$
diagonal element is given by
\begin{equation}\label{elements_D(x)}
(\Mat{D}(\Omega))_{ss}= \frac{r}{T_c^2} \left| \Phi\left(  \frac{1
}{T_c} \left( \Omega- \mathrm{sign}(\Omega) 2 \pi \left(\left\lfloor
\frac{r-1}{2}\right\rfloor-s+1 \right)\right) \right) \right|^2.
\end{equation}
\end{lemma}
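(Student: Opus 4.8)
The plan is to recognize that $\Mat{Q}(\Omega)$ defined through (\ref{elem_Q(x)}) is a sum of $r$ rank-one terms, one for each $s\in\mathcal{Z}(\Omega)$, and that these terms are weighted projections onto the lines spanned by the vectors $\Vec{e}(\Omega+2\pi s)$. Once the spanning vectors are shown to be orthonormal, the factorization (\ref{Q(x)_decomposition}) is a genuine spectral decomposition rather than a mere product.

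First I would record that $\mathcal{Z}(\Omega)$ consists of exactly $r$ consecutive integers. For $\Omega>0$ it is the set of integers in $\left[-\lfloor\frac{r-1}{2}\rfloor,\lfloor\frac r2\rfloor\right]$, whose cardinality is $\lfloor\frac r2\rfloor+\lfloor\frac{r-1}{2}\rfloor+1=r$, and the case $\Omega<0$ is symmetric. Hence $\Mat{U}(\Omega)$ in (\ref{U_definition}), whose columns are the $\Vec{e}(\Omega+2\pi s)$, is a square $r\times r$ matrix. Under the ordering fixed in (\ref{U_definition}), the column in position $p=1,\dots,r$ carries the integer $s=-\mathrm{sign}(\Omega)\left(\lfloor\frac{r-1}{2}\rfloor-p+1\right)$, which is precisely the labelling appearing in (\ref{elements_D(x)}).

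Next I would verify orthonormality of these columns. Each has unit norm since $\|\Vec{e}(\cdot)\|^2=\frac1r\sum_{k=1}^r 1=1$. For $s\neq u$ in $\mathcal{Z}(\Omega)$,
\[
\Vec{e}^H(\Omega+2\pi s)\,\Vec{e}(\Omega+2\pi u)=\frac1r\sum_{k=0}^{r-1}\mathrm{e}^{j\frac{2\pi k}{r}(s-u)}.
\]
Because $s$ and $u$ lie in a window of $r$ consecutive integers, $s-u\in\{-(r-1),\dots,-1,1,\dots,r-1\}$ is never a multiple of $r$, so $\mathrm{e}^{j\frac{2\pi}{r}(s-u)}\neq 1$ and the geometric sum vanishes. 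Thus $\Mat{U}(\Omega)$ is unitary. This shifted-DFT orthogonality computation is the real crux of the lemma; the remaining steps are bookkeeping.

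Finally I would compute the $(k,\ell)$-entry of $\Mat{U}(\Omega)\Mat{D}(\Omega)\Mat{U}^H(\Omega)=\sum_{s\in\mathcal{Z}(\Omega)} d_s\,\Vec{e}(\Omega+2\pi s)\Vec{e}^H(\Omega+2\pi s)$ and match it to (\ref{elem_Q(x)}). The $(k,\ell)$-entry of the rank-one term for a given $s$ equals $\frac1r\mathrm{e}^{-j\frac{k-\ell}{r}(\Omega+2\pi s)}$, so taking the weight $d_s=\frac{r}{T_c^2}\left|\Phi\!\left(\frac{\Omega+2\pi s}{T_c}\right)\right|^2$ yields
\[
\frac{1}{T_c^2}\sum_{s\in\mathcal{Z}(\Omega)}\left|\Phi\!\left(\frac{\Omega+2\pi s}{T_c}\right)\right|^2\mathrm{e}^{-j\frac{k-\ell}{r}(\Omega+2\pi s)},
\]
which is exactly $(\Mat{Q}(\Omega))_{k,\ell}$. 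Reindexing $s$ by column position as in the first step converts the weights $d_s$ into the diagonal entries (\ref{elements_D(x)}), completing the decomposition. I expect no serious obstacle beyond the orthogonality step and keeping the $\mathrm{sign}(\Omega)$ conventions consistent between (\ref{U_definition}) and (\ref{elements_D(x)}).
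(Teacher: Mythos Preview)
Your proposal is correct and follows essentially the same approach as the paper: both arguments recognize that $\Mat{Q}(\Omega)$ is the sum $\sum_{s\in\mathcal{Z}(\Omega)}\frac{r}{T_c^2}\left|\Phi\!\left(\frac{\Omega+2\pi s}{T_c}\right)\right|^2\Vec{e}(\Omega+2\pi s)\Vec{e}^H(\Omega+2\pi s)$ and then repackage this as $\Mat{U}\Mat{D}\Mat{U}^H$. Your version is more complete in that you explicitly verify the orthonormality of the columns of $\Mat{U}(\Omega)$ (so that the factorization is genuinely spectral), whereas the paper states the rank-one sum and declares the decomposition immediate without checking unitarity.
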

\begin{proof}
Decomposition (\ref{Q(x)_decomposition}) can be immediately
derived by noting that
\begin{equation*}
\Mat{Q}(\Omega)=\sum_{s \in \mathcal{Z}(\Omega)} \frac{r}{T_c^2} \left| \Phi\left( \frac{1}{T_c}\Omega+2 \pi
s \right) \right|^2 \Vec{e}(\Omega+2 \pi s)
\Vec{e}^{H}(\Omega + 2 \pi s).
\end{equation*}
This expression can be rewritten as (\ref{Q(x)_decomposition}) and
Lemma \ref{lemma:Q(x)_decomposition} is proven.
\end{proof}

The following lemma shows that the matrix $\Mat{Q}(\Omega)$ and any
other matrix with the same basis of eigenvectors is of the form
(\ref{matrixBform}).
\begin{lemma}\label{lemma:matrix_with_eigenbasis_U}
Let $\Mat{C}(\Omega)= \Mat{U}(\Omega) \Mat{M}(\Omega) \Mat{U}^{H}(\Omega)$ with
$\Mat{U}(\Omega)$ unitary matrix defined in (\ref{U_definition}) and
$\Mat{M}(\Omega)$ diagonal matrix with elements $m_{kk}(\Omega).$ Then,
$\Mat{C}(\Omega)$ is of the form (\ref{matrixBform}).
\end{lemma}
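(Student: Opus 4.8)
The plan is to substitute the explicit spectral representation of $\Mat{C}(\Omega)$ and read off its entries directly; the statement is essentially a one-line consequence of the fact that the columns of $\Mat{U}(\Omega)$ are rescaled DFT-type vectors. Since the columns of $\Mat{U}(\Omega)$ defined in (\ref{U_definition}) are exactly the vectors $\Vec{e}(\Omega+2\pi s)$ for $s$ ranging over $\mathcal{Z}(\Omega)$, expanding $\Mat{C}(\Omega)=\Mat{U}(\Omega)\Mat{M}(\Omega)\Mat{U}^H(\Omega)$ as a sum of rank-one terms gives
\begin{equation*}
\Mat{C}(\Omega)=\sum_{s\in\mathcal{Z}(\Omega)} m_{s}\,\Vec{e}(\Omega+2\pi s)\,\Vec{e}^H(\Omega+2\pi s),
\end{equation*}
where $m_{s}$ is the diagonal entry of $\Mat{M}(\Omega)$ attached to the column $\Vec{e}(\Omega+2\pi s)$. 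This is the same bookkeeping of the sum as in Lemma~\ref{lemma:Q(x)_decomposition}, except that the weights $m_{s}$ are now arbitrary rather than the specific values $\frac{r}{T_c^2}|\Phi(\cdot)|^2$; so the present lemma is simply that decomposition with the constraint on the diagonal dropped.

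I would then compute a generic entry. Using that the $p$-th component of $\Vec{e}(\theta)$ is $\frac{1}{\sqrt r}\mathrm{e}^{-j(p-1)\theta/r}$, the $(\ell,k)$ entry of each outer product is $\frac{1}{r}\mathrm{e}^{j(k-\ell)(\Omega+2\pi s)/r}$, whence
\begin{equation*}
(\Mat{C}(\Omega))_{\ell,k}=\mathrm{e}^{j\frac{k-\ell}{r}\Omega}\;\frac{1}{r}\sum_{s\in\mathcal{Z}(\Omega)} m_{s}\,\mathrm{e}^{j\frac{2\pi(k-\ell)}{r}s}.
\end{equation*}
The leading factor $\mathrm{e}^{j(k-\ell)\Omega/r}$ already coincides with the prefactor in (\ref{matrixBform}), so it only remains to show that the bracketed sum depends on $k,\ell$ solely through $(k-\ell)\bmod r$.

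The one point that needs an argument --- and the only place where anything is actually used --- is that every $s\in\mathcal{Z}(\Omega)$ is an integer, so $\mathrm{e}^{j2\pi(k-\ell)s/r}$ is invariant under the shift $k-\ell\mapsto k-\ell+r$, the surplus factor $\mathrm{e}^{j2\pi s}$ being $1$. Therefore, setting
\begin{equation*}
b_{m}(\Omega)=\frac{1}{r}\sum_{s\in\mathcal{Z}(\Omega)} m_{s}\,\mathrm{e}^{j\frac{2\pi m}{r}s},\qquad m=0,1,\ldots,r-1,
\end{equation*}
we get $(\Mat{C}(\Omega))_{\ell,k}=\mathrm{e}^{j(k-\ell)\Omega/r}\,b_{(r+k-\ell)\bmod r}(\Omega)$, which is exactly the structure (\ref{matrixBform}). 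I expect no genuine obstacle here: the only care required is in matching the indexing convention between the column label $s$ of $\Mat{U}(\Omega)$ and the diagonal entries $m_{kk}(\Omega)$ of $\Mat{M}(\Omega)$, and in verifying the $r$-periodicity of the residual sum in $k-\ell$, both of which are routine.
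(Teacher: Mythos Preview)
Your proof is correct and follows essentially the same approach as the paper: both expand $\Mat{C}(\Omega)$ entrywise via the explicit column vectors $\Vec{e}(\Omega+2\pi s)$, factor out $\mathrm{e}^{j(k-\ell)\Omega/r}$, and observe that the residual sum is $r$-periodic in $k-\ell$ because the shifts $s$ are integers. The only cosmetic difference is that the paper indexes columns by $i=1,\ldots,r$ rather than by $s\in\mathcal{Z}(\Omega)$, which amounts to the reindexing you already flagged as routine.
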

\begin{proof}
The $\ell^{\mathrm{th}}$ row of the matrix $\Mat{U}(\Omega)$ is given
by
\begin{equation*}
\Vec{u}_{\ell}(\Omega)= \frac{1}{\sqrt{r}} \left( \mathrm{e}^{-j
\frac{\ell-1}{r}\left(\Omega- \mathrm{sign}(\Omega) 2 \pi \lfloor
\frac{r-1}{2}\rfloor \right) }, \ldots \mathrm{e}^{-j
\frac{\ell-1}{r}\left(\Omega+ \mathrm{sign}(\Omega) 2 \pi \lfloor \frac{r}{2}
\rfloor \right) } \right)
\end{equation*}
and $c_{\ell k}(\Omega),$ the element $(\ell,k)$ of the matrix
$\Mat{C}$ satisfies
\begin{align}
c_{\ell k} (\Omega)&= \frac{1}{r} \sum_{i=1}^{r} m_{ii} \mathrm{e}^{-j
\frac{\ell-k}{r}\left(\Omega- \mathrm{sign}(\Omega) 2 \pi \lfloor
\frac{r-1}{2}\rfloor +2 \pi (i-1)\right) } \nonumber \\
& = \widetilde{b}_{\ell k} \mathrm{e}^{-j \frac{(\ell-k)}{r}
\Omega}
\end{align}
with $\widetilde{b}_{\ell k}= \sum_{i=1}^{r} \frac{m_{ii}}{r}
\mathrm{e}^{j 2 \pi \frac{\ell-k}{r} \left( \mathrm{sign}(\Omega)
\lfloor \frac{r-1}{2} \rfloor -i+1 \right) }.$ It is
straightforward to verify that $\widetilde{b}_{\ell k} =
\widetilde{b}_{(\ell+1) \mathrm{mod}r, (k+1) \mathrm{mod}r }.$
This concludes the proof of Lemma
\ref{lemma:matrix_with_eigenbasis_U}.
\end{proof}

The following lemmas state results from random matrix theory
developed along the lines of the REFORM method proposed by Girko
in \cite{girko:01a} and \cite{girko:01b}.

\begin{lemma}\cite{girko:01a,cottatellucci:07}\label{lemma_T_R}
Let $\matXi=(\xi_{ij})_{i=1,\ldots N q_1}^{j=1,\ldots K q_2}$ be
an $ N q_1 \times K q_2$ matrix of complex random elements
$\xi_{ij}$ structured in $NK$ blocks of size $ q_1 \times q_2$,
$\matXi_{ij}$, i.e., \begin{equation*} \matXi=(\matXi_{ij})_{i=1,
\ldots N}^{j=1, \ldots K}
\end{equation*}
and $K=\beta N$ with $\beta>0.$ Let
$\matPTilde=(\matP_{ij})_{ij=1,\ldots p_1}=[\matXi \matXi^H
+\alpha \I]^{-1}$ and $\matGTilde=(\matG_{ij})_{ij=1,\ldots
p_2}=[\matXi^H \matXi+\alpha \I]^{-1}$, where $\matP_{ij}$ and
$\matG_{ij}$ are complex blocks of size $q_1 \times q_1 $ and $q_2
\times q_2,$ respectively.

Additionally, assume
\begin{description}
\item[H-1]\label{hyp1} $\matXi_{ks}$, $k=1, \ldots, N$, $s=1, \ldots, K$,
the random blocks of the matrix $\matXi$ are independent.
\item[H-2] \label{hyp2} All the elements of the matrix $\matXi$ are zero mean, i.e., $\E\{\matXi\}=\Mat{0}.$
\item[H-3]\label{hyp3}  $\sup_{K,N} \max_{i=1, \ldots,
N} \sum_{j=1}^{K} \E \|  \matXi_{ij} \|^2 + \sup_{K,N} \max_{j=1,
\ldots, K} \sum_{i=1}^{N} \E \| \matXi_{ij} \|^2 < + \infty,$
\item[H-4] \label{hyp4} Lindeberg condition: $\forall \tau>0$
\begin{multline}
\lim_{K=\beta N \rightarrow \infty} \left( \max_{i=1, \ldots, N}
\sum_{j=1}^{K} \E \left(\|\matXi_{ij} \|^2 \chi\{\|\matXi_{ij} \|>
\tau \} \right) + \max_{j=1, \ldots, K} \sum_{i=1}^{N} \E
\left(\|\matXi_{ij} \|^2 \chi\{\|\matXi_{ij} \|> \tau \} \right)
\right)=0.
\end{multline}
\end{description}
Then, for $\alpha \in \mathbb{C} \backslash \mathbb{R}^{-}$
\begin{equation*}
\lim_{K=\beta N \rightarrow \infty} \E|\Mat{P}_{p \ell}(\alpha) -
\matT_{p \ell}(\alpha)|=0 \qquad p, \ell= 1, \ldots, p_1
\end{equation*}
and
\begin{equation*}
\lim_{K=\beta N \rightarrow \infty} \E |\Mat{G}_{p \ell}(\alpha) -
\alpha^{-1} \Mat{R}_{p \ell}(\alpha)|=0 \qquad p, \ell= 1, \ldots,
p_2
\end{equation*}
i.e., the blocks of the matrices $\matQTilde$ and $\matGTilde$
converge in the first mean to the corresponding blocks of the
matrices
\begin{equation*}
\matTTilde=\mathrm{diag}((\Mat{C}_{nn}^{(1)}(\alpha))^{-1})_{n=1,
\ldots, N}
\end{equation*}
and
\begin{equation*}
\matRTilde=\mathrm{diag}((\Mat{C}_{kk}^{(2)}(\alpha))^{-1})_{k=1,
\ldots, K}
\end{equation*}
respectively.  The matrix blocks $\Mat{C}_{nn}^{(1)}(\alpha)$ of
size $q_1 \times q_1$ and $\Mat{C}_{kk}^{(2)}(\alpha)$ of size
$q_2 \times q_2$ are equal to
\begin{align}
\Mat{C}_{nn}^{(1)}(\alpha) &= \alpha \Mat{I} + \sum_{j=1}^{K} \E
\left(\matXi_{nj} (\Mat{X} )_{jj} \matXi_{nj}^H \right)_{\Mat{X}=
\alpha \matGTilde} & n=1,\ldots, N  \label{C1_diagonal_blocks}\\
\Mat{C}_{kk}^{(2)}(\alpha) &= \Mat{I} + \sum_{j=1}^{p_1} \E \left(
\matXi_{j k}^H (\Mat{Y} )_{jj} \matXi_{j k}^H \right)_{\Mat{Y}=
\matPTilde} & k=1,\ldots, K, \label{C2_diagonal_blocks}
\end{align}
respectively.
\end{lemma}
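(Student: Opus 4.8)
The plan is to characterize the diagonal blocks of the two resolvents through the block Schur-complement (partitioned inverse) formula, and then to show that the random ``interference'' terms concentrate on their conditional means, which produces the coupled self-consistent system \eqref{C1_diagonal_blocks}--\eqref{C2_diagonal_blocks}. I would work within Girko's REFORM framework: keep $\alpha \in \mathbb{C}\setminus\mathbb{R}^-$ fixed so that $\matXi\matXi^H + \alpha\I$ is invertible with uniformly bounded resolvent norm, measure everything in the first mean (expected spectral-norm distance to the deterministic surrogate), and propagate the bounds with a martingale-difference argument. By symmetry, the statement for $\matGTilde=[\matXi^H\matXi+\alpha\I]^{-1}$ follows from that for $\matPTilde$ after interchanging the roles of block rows and columns (and of $q_1$ and $q_2$); the prefactor $\alpha^{-1}$ on $\matRTilde$ is exactly the one produced by that interchange.

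First I would fix a block index $n$ and partition $\matXi\matXi^H+\alpha\I$ into its $n$-th block row/column and the remainder. The Schur-complement formula expresses the diagonal block $\matP_{nn}$ as the inverse of $\alpha\I_{q_1}+\sum_{j}\matXi_{nj}\matXi_{nj}^H$ minus a bilinear correction of the form $\matXi_n \matXi_{(n)}^H\,\big[\matXi_{(n)}\matXi_{(n)}^H+\alpha\I\big]^{-1}\matXi_{(n)}\matXi_n^H$, where $\matXi_n$ is the $n$-th block row and $\matXi_{(n)}$ is $\matXi$ with that block row deleted. The reduced resolvent $[\matXi_{(n)}\matXi_{(n)}^H+\alpha\I]^{-1}$ is independent of $\matXi_n$, which is the structural feature that makes the subsequent averaging tractable. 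An entirely analogous identity holds for $\matG_{kk}$ with block columns playing the role of block rows.

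Next I would show concentration of the two random ingredients. Using the independence of the blocks (hypothesis H-1) and their zero mean (H-2), a martingale-difference decomposition that reveals the block columns $\matXi_{\cdot j}$ one at a time, combined with Burkholder's inequality and the uniform second-moment bound (H-3), forces the fluctuations of both $\sum_j \matXi_{nj}\matXi_{nj}^H$ and of the bilinear correction around their conditional expectations to vanish in first mean as $N\to\infty$. Replacing the reduced resolvent by the full resolvent is a block-rank-one perturbation that is negligible in the limit, so after taking expectations the correction reinstates, for each $j$, the weight $(\Mat{X})_{jj}=\alpha(\matGTilde)_{jj}\approx(\matRTilde)_{jj}=(\Mat{C}^{(2)}_{jj})^{-1}$ inside the sum. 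This is precisely $\Mat{C}^{(1)}_{nn}$ of \eqref{C1_diagonal_blocks}, and the symmetric computation yields $\Mat{C}^{(2)}_{kk}$ of \eqref{C2_diagonal_blocks}, closing the coupled fixed point $\matTTilde=\mathrm{diag}((\Mat{C}^{(1)}_{nn})^{-1})$, $\matRTilde=\mathrm{diag}((\Mat{C}^{(2)}_{kk})^{-1})$. The off-diagonal blocks $\matP_{p\ell}$, $p\neq\ell$, are bilinear forms in independent zero-mean blocks, so they average to zero and concentrate, matching the block-diagonal surrogate.

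The hard part will be the concentration step carried out under only the second-moment control H-3 and the Lindeberg condition H-4, rather than under Gaussian entries or exponential tails. This demands a truncation of each block at a threshold $\tau$: the discarded tail is bounded directly by H-4, while the centered bounded part is handled by the martingale estimate, and one must check that the truncation does not perturb the limiting equations. A second delicate point is well-posedness of the limit: I would verify that the map defining $(\Mat{C}^{(1)},\Mat{C}^{(2)})$ acts as a contraction (equivalently, is monotone) on the cone of positive-definite block-diagonal matrices for every $\alpha\in\mathbb{C}\setminus\mathbb{R}^-$, which guarantees both existence and uniqueness of the deterministic equivalent. These two ingredients — the Lindeberg truncation and the fixed-point stability argument — are exactly what \cite{girko:01a,cottatellucci:07} supply, and the remaining work is to track the block sizes $q_1,q_2$ through their estimates.
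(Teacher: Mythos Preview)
The paper does not supply its own proof of this lemma: it is quoted as an auxiliary result from \cite{girko:01a,cottatellucci:07} in the appendix of mathematical tools, with no argument given beyond the citation. There is therefore no paper proof to compare against.

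That said, your sketch is faithful to the REFORM method of Girko that the cited references implement. The Schur-complement representation of the diagonal resolvent blocks, the leave-one-out independence to decouple the $n$-th block row from the reduced resolvent, the martingale-difference concentration under H-1--H-3, the Lindeberg truncation to dispense with higher-moment or Gaussian assumptions, and the rank-one perturbation argument to swap the reduced resolvent for the full one are exactly the ingredients in \cite{girko:01a}. Your remark that the $\matGTilde$ statement follows by exchanging rows and columns, with the factor $\alpha^{-1}$ arising from the normalization asymmetry between \eqref{C1_diagonal_blocks} and \eqref{C2_diagonal_blocks}, is also correct. The one point you leave implicit is that the fixed-point contraction/monotonicity argument you allude to is, in the cited works, stated separately (it is the content of the companion Lemma~\ref{lemma_existence_uniqueness} in this paper), so strictly speaking it is not part of the present lemma but of the subsequent one; your outline otherwise matches the intended route.
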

\begin{lemma}\cite{girko:01a,cottatellucci:07}\label{lemma_convergence_canonical_eqns}
Let us assume that the definitions of Lemma \ref{lemma_T_R} hold
and the conditions of Lemma \ref{lemma_T_R} are satisfied.

Then, the  $q_1 \times q_1$ matrices $\Mat{C}^{(1)}_{nn}(\alpha)$,
$n=1,\ldots, N$ and the $q_2 \times q_2$ matrices
$\Mat{C}_{kk}^{(2)}(\alpha)$, $k=1, \ldots, K$, defined in
(\ref{C1_diagonal_blocks}) and (\ref{C2_diagonal_blocks}),
respectively, converge as $K = \beta N \rightarrow \infty$ to the
limit matrices
\begin{align}
\lim_{K=\beta N \rightarrow +\infty} \Mat{C}_{nn}^{(1)} & {=}
\boldsymbol{\Psi}_{nn}^{(1)}
\qquad n=1,\ldots,N \nonumber \\
\lim_{K=\beta N \rightarrow +\infty} \Mat{C}_{kk}^{(2)} & {=}
\boldsymbol{\Psi}_{kk}^{(2)} \qquad k =1,\ldots, K \nonumber
\end{align}
where $\boldsymbol{\Psi}_{nn}^{(1)}$, $k= 1, \ldots, N$ and
$\boldsymbol{\Psi}_{kk}^{(2)}$, $k= 1, \ldots, K$ satisfy the
canonical system of equations
\begin{align}\label{C_1girko}
    \boldsymbol{\Psi}_{nn}^{(1)}\!&=  \alpha  \Mat{I}\!+ \!\sum_{j=1}^{K} \! \mathrm{E}\!
    \left\{  \Mat{\Xi}_{nj} \left[
    {\boldsymbol{\Psi}}_{jj}^{(2)} \right]^{-1}\Mat{\Xi}_{nj} ^H \! \right\},&  n \!= \!1,\ldots,\! N, \\
   \boldsymbol{\Psi}_{kk}^{(2)} \!&=  \Mat{I} \! + \! \sum_{j=1}^{N}
\mathrm{E}\left\{\Mat{\Xi}_{jk}^H
\left[{\boldsymbol{\Psi}}^{(1)}_{jj} \right]^{-1} \Mat{\Xi}_{j
k}\right\} ,& k \! = \! 1,\ldots, K. \label{C_2girko}
\end{align}
\end{lemma}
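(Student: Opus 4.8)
The plan is to start from the finite-size identities (\ref{C1_diagonal_blocks}) and (\ref{C2_diagonal_blocks}) defining the blocks $\Mat{C}_{nn}^{(1)}(\alpha)$ and $\Mat{C}_{kk}^{(2)}(\alpha)$, and to close them into the self-consistent canonical system by feeding in the convergence already established in Lemma \ref{lemma_T_R}. Concretely, Lemma \ref{lemma_T_R} asserts that the diagonal blocks $\matG_{jj}$ of $\matGTilde$ and $\matP_{jj}$ of $\matPTilde$ converge in the first mean to $\alpha^{-1}(\Mat{C}_{jj}^{(2)})^{-1}$ and $(\Mat{C}_{jj}^{(1)})^{-1}$, respectively. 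Substituting $\Mat{X}=\alpha\matGTilde$ in (\ref{C1_diagonal_blocks}) and $\Mat{Y}=\matPTilde$ in (\ref{C2_diagonal_blocks}), and then replacing these resolvent blocks by their limits, turns the two definitions into an asymptotically closed system whose unknowns are again the $\Mat{C}$-blocks.

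First I would establish, uniformly in $N$, that the blocks $\Mat{C}_{nn}^{(1)}$ and $\Mat{C}_{kk}^{(2)}$ together with their inverses are bounded in spectral norm. Because $\alpha\in\mathbb{C}\setminus\mathbb{R}^-$, the additive term $\alpha\Mat{I}$ (respectively $\Mat{I}$) keeps the definite part of each $\Mat{C}$-block bounded away from singularity, so the inverses exist and are bounded, while the norm of the summed correction term is controlled by the row/column mass condition H-3. These a priori bounds confine the sequence $\{(\Mat{C}_{nn}^{(1)},\Mat{C}_{kk}^{(2)})\}_N$ to a compact set and make the map $\Mat{C}\mapsto\Mat{C}^{-1}$ Lipschitz on the relevant region, so that the \emph{first-mean} convergence of the resolvent blocks transfers to the inverse blocks $(\Mat{C}_{jj}^{(2)})^{-1}$ that appear after substitution.

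The central step is to pass to the limit in the substituted equations. For each fixed $n$ I would write the difference between $\Mat{C}_{nn}^{(1)}$ and the right-hand side of the candidate equation (\ref{C_1girko}) as a single sum over $j$ of terms $\E\{\matXi_{nj}[(\alpha\matGTilde)_{jj}-(\boldsymbol{\Psi}_{jj}^{(2)})^{-1}]\matXi_{nj}^H\}$, split via the triangle inequality into the Lemma \ref{lemma_T_R} error and the as-yet-unproven error $\Mat{C}_{jj}^{(2)}\to\boldsymbol{\Psi}_{jj}^{(2)}$, and bound each summand using $\|\matXi_{nj}(\cdot)\matXi_{nj}^H\|\le\|\matXi_{nj}\|^2\,\|\cdot\|$. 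The aggregate is then controlled through H-3 together with the first-mean estimate, the Lindeberg condition H-4 supplying the uniform integrability needed to exchange the limit with the expectation and the $O(K)$-term sum. The symmetric argument handles (\ref{C_2girko}), and together they show that every limit point of the compact sequence satisfies the canonical system (\ref{C_1girko})--(\ref{C_2girko}).

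Finally I would invoke uniqueness of the solution of the canonical system for $\alpha\in\mathbb{C}\setminus\mathbb{R}^-$ — a standard consequence of the contraction property of the associated fixed-point map, which follows from the same definiteness that supplied the inverse bounds — to conclude that the sequence has a single limit point and hence converges to $(\boldsymbol{\Psi}_{nn}^{(1)},\boldsymbol{\Psi}_{kk}^{(2)})$. I expect the main obstacle to be propagating the first-mean convergence of Lemma \ref{lemma_T_R} through the quadratic weighting by $\matXi_{nj}$ and the sum of $O(K)$ terms: one must prevent the accumulated error from blowing up even though the number of summands grows, which is exactly where the uniform mass bound H-3 and the Lindeberg truncation H-4 are indispensable. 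The invertibility bounds on the $\Mat{C}$-blocks, uniform in $N$, are the technical linchpin that legitimizes both the substitution and the uniqueness argument.
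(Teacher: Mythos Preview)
The paper does not prove this lemma; it is quoted verbatim from Girko's monograph \cite{girko:01a} and \cite{cottatellucci:07} and is used as a black-box tool in Appendix~\ref{section:proof_theo_sinr_MMSE_chip_asynch}. Consequently there is no ``paper's own proof'' to compare your proposal against.

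Your sketch follows the standard architecture of Girko's REFORM method: bootstrap the first-mean convergence of the resolvent diagonal blocks (Lemma~\ref{lemma_T_R}) into the defining relations (\ref{C1_diagonal_blocks})--(\ref{C2_diagonal_blocks}), obtain a priori bounds from the $\alpha\Mat{I}$ shift and hypothesis H-3, show that any limit point satisfies the closed system, and conclude via uniqueness. This is indeed how the argument runs in the cited references, and the paper itself records the uniqueness statement separately as Lemma~\ref{lemma_existence_uniqueness}. The one point you flag as the main obstacle --- controlling the accumulated error over $O(K)$ summands --- is genuine, and in Girko's treatment it is handled not by a direct H-3/H-4 bound on the sum of errors but by a more delicate martingale-difference or perturbation-expansion argument on the resolvent; your triangle-inequality bound $\sum_j\E\|\matXi_{nj}\|^2\cdot\max_j\E|(\alpha\matGTilde)_{jj}-(\Mat{C}_{jj}^{(2)})^{-1}|$ is the right shape but you would need to justify that the maximum over $j$ of the Lemma~\ref{lemma_T_R} error is $o(1)$ uniformly, which is stronger than what Lemma~\ref{lemma_T_R} literally states (it gives pointwise first-mean convergence for each block). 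That uniformity is where the real work lies in the original proofs.
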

The following Lemma states the existence and uniqueness of the
solution of the system of canonical equations in the class of
definite positive Hermitian matrices.
\begin{lemma}\cite{girko:01a}\label{lemma_existence_uniqueness}
Let us adopt the definitions of Lemma \ref{lemma_T_R} and let us
assume that  the conditions of Lemma \ref{lemma_T_R} are
satisfied. Let us consider the system of canonical equations
(\ref{C_1girko}) and (\ref{C_2girko}). Then, the solution of the
canonical system of equations (\ref{C_1girko}) and
(\ref{C_2girko}) exists and it is unique in the class of
nonnegative definite analytic matrices for
$\mathrm{Re}{(\alpha)}>0.$
\end{lemma}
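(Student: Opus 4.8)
The plan is to recast the canonical system (\ref{C_1girko})--(\ref{C_2girko}) as a single fixed-point equation $\boldsymbol{\Psi}=\mathcal T(\boldsymbol{\Psi})$ for the tuple $\boldsymbol{\Psi}=\big(\boldsymbol{\Psi}^{(1)}_{11},\dots,\boldsymbol{\Psi}^{(1)}_{NN},\boldsymbol{\Psi}^{(2)}_{11},\dots,\boldsymbol{\Psi}^{(2)}_{KK}\big)$, and to exploit two structural facts that hold for $\mathrm{Re}(\alpha)>0$. Write $\mathrm{He}(\Mat X)=\tfrac12(\Mat X+\Mat X^H)$ for the Hermitian part. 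First, every solution is automatically accretive and bounded away from singularity: using the congruence identity $\mathrm{He}(\Mat A^{-1})=\Mat A^{-H}\,\mathrm{He}(\Mat A)\,\Mat A^{-1}$, one gets $\mathrm{He}\big([\boldsymbol{\Psi}^{(2)}_{jj}]^{-1}\big)\succ\Mat 0$ whenever $\mathrm{He}\,\boldsymbol{\Psi}^{(2)}_{jj}\succ\Mat 0$, hence $\mathrm{He}\big(\matXi_{nj}[\boldsymbol{\Psi}^{(2)}_{jj}]^{-1}\matXi_{nj}^H\big)\succeq\Mat 0$, and (\ref{C_1girko}) forces $\mathrm{He}\,\boldsymbol{\Psi}^{(1)}_{nn}\succeq\mathrm{Re}(\alpha)\Mat I$; symmetrically $\mathrm{He}\,\boldsymbol{\Psi}^{(2)}_{kk}\succeq\Mat I$. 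Since $\|\Mat A^{-1}\|\le 1/\lambda_{\min}(\mathrm{He}(\Mat A))$ for accretive $\Mat A$, this gives $\|[\boldsymbol{\Psi}^{(1)}_{nn}]^{-1}\|\le 1/\mathrm{Re}(\alpha)$ and $\|[\boldsymbol{\Psi}^{(2)}_{kk}]^{-1}\|\le 1$. Second, combining these with the uniform second-moment bound \textbf{H-3} yields an a priori upper bound on the $\|\boldsymbol{\Psi}^{(1)}_{nn}\|$ and $\|\boldsymbol{\Psi}^{(2)}_{kk}\|$. Thus $\mathcal T$ maps a compact convex subset of the product of accretive cones (cut out by the lower Hermitian bound and the resulting upper norm bound) continuously into itself, and since for fixed $N,K$ the unknowns lie in a finite-dimensional space, Brouwer's fixed-point theorem gives existence.

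For analytic dependence on $\alpha$ I would invoke the holomorphic implicit function theorem: the system reads $\mathcal F(\boldsymbol{\Psi},\alpha)=\Mat 0$ with $\mathcal F$ holomorphic in both arguments on $\{\mathrm{Re}(\alpha)>0\}$, so the solution is locally holomorphic in $\alpha$ once the Fréchet derivative $\partial_{\boldsymbol{\Psi}}\mathcal F$ is invertible at the solution. That invertibility is exactly the strict-contraction estimate used for uniqueness below, so analyticity and uniqueness will follow from the same computation. (Alternatively one obtains an analytic solution directly as the limit of the pre-limit block resolvents of Lemma~\ref{lemma_T_R}, which are manifestly analytic in $\alpha$ off $\mathbb{R}^-$.)

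The uniqueness argument is where the real work lies. Given two solutions $\boldsymbol{\Psi},\tilde{\boldsymbol{\Psi}}$, I would set $\Mat{\Delta}^{(1)}_n=\boldsymbol{\Psi}^{(1)}_{nn}-\tilde{\boldsymbol{\Psi}}^{(1)}_{nn}$ and $\Mat{\Delta}^{(2)}_k=\boldsymbol{\Psi}^{(2)}_{kk}-\tilde{\boldsymbol{\Psi}}^{(2)}_{kk}$, subtract the two copies of (\ref{C_1girko})--(\ref{C_2girko}), and linearise the inverses with the resolvent identity $\Mat A^{-1}-\Mat B^{-1}=\Mat A^{-1}(\Mat B-\Mat A)\Mat B^{-1}$. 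This produces a homogeneous coupled linear system $\Mat{\Delta}=\mathcal L(\Mat{\Delta})$, where $\mathcal L$ is assembled from the (bounded) resolvents of the two solutions together with the maps $\E\{\matXi_{nj}(\cdot)\matXi_{nj}^H\}$. Bounding each block via $\|[\boldsymbol{\Psi}]^{-1}\|\le 1/\mathrm{Re}(\alpha)$ (resp.\ $\le 1$) and summing with \textbf{H-3} shows that the operator norm of $\mathcal L$ is $O(1/\mathrm{Re}(\alpha))$; hence $\mathcal L$ is a strict contraction once $\mathrm{Re}(\alpha)$ exceeds some threshold $M$, forcing $\Mat{\Delta}=\Mat 0$ on the half-plane $\{\mathrm{Re}(\alpha)>M\}$. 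Finally I would extend uniqueness to all of $\{\mathrm{Re}(\alpha)>0\}$ by analytic continuation: two solutions in the analytic class agree on $\{\mathrm{Re}(\alpha)>M\}$, so by the identity theorem for matrix-valued holomorphic functions they coincide throughout the connected domain $\{\mathrm{Re}(\alpha)>0\}$.

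I expect the contraction estimate for $\mathcal L$ to be the main obstacle. The naive operator-norm bound only delivers a contraction for large $\mathrm{Re}(\alpha)$, which is precisely why the analytic-continuation step is indispensable; pushing the bound uniformly down to $\mathrm{Re}(\alpha)\to 0^+$ would instead require a sharper weighted estimate, treating $\mathcal L$ as a positive linear map on the cone and controlling its Perron value through a trace functional. Equally delicate is the definiteness bookkeeping for complex $\alpha$, where throughout "positive definite" must be read as "positive definite Hermitian part'' and the accretivity of inverses and of the congruences $\matXi(\cdot)\matXi^H$ must be tracked carefully.
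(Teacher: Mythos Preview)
The paper does not prove this lemma at all; it is quoted verbatim from Girko's monograph \cite{girko:01a} (note the citation attached to the lemma header), so there is no in-paper argument to compare your proposal against.

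That said, your sketch is sound and follows what is by now a standard route to such canonical-equation results in random matrix theory. The Brouwer argument for existence on the accretive cone, using the a priori bounds $\mathrm{He}\,\boldsymbol{\Psi}^{(1)}_{nn}\succeq\mathrm{Re}(\alpha)\,\Mat I$ and $\mathrm{He}\,\boldsymbol{\Psi}^{(2)}_{kk}\succeq\Mat I$ together with the norm cap coming from \textbf{H-3}, is correct in finite dimensions. For uniqueness, composing the two resolvent-difference estimates you write down gives
\[
\max_n\|\Mat{\Delta}^{(1)}_n\|\;\le\; \frac{C^2}{(\mathrm{Re}\,\alpha)^2}\,\max_n\|\Mat{\Delta}^{(1)}_n\|,
\]
with $C$ the constant in \textbf{H-3}, so the map is a strict contraction on $\{\mathrm{Re}\,\alpha>C\}$; since the lemma only claims uniqueness within the class of \emph{analytic} solutions, the identity theorem then propagates this to the full right half-plane. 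Your worry about circularity in the analyticity step is therefore unfounded: you never need to manufacture analyticity for an arbitrary solution, only to compare two solutions that are analytic by hypothesis. One minor point worth making explicit in a write-up is the inequality $\|\E\{\matXi_{nj}\Mat A\,\matXi_{nj}^H\}\|\le \E\|\matXi_{nj}\|^2\,\|\Mat A\|$, which follows from $\|\E\{X\}\|\le\E\|X\|$ plus submultiplicativity but is used repeatedly.
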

The following lemma due to Girko provides convergence of the
eigenvalue distribution of the matrix $\matXi \matXi^H$ with
$\matXi$ defined in Lemma \ref{lemma_T_R} to a deterministic
distribution function and the corresponding  Stieltjes transform.
\begin{lemma}\cite{girko:01a} \label{lemma_convergence_Stieltjes_transform_girko}
Let us adopt the definitions in Lemma \ref{lemma_T_R} and let the
assumptions of Lemma \ref{lemma_T_R} hold. Furthermore, let
$\mu_{q_1 N}(x, \matXi \matXi^H)$ denote the normalized spectral
function of the square $q_1 N \times q_1 N$ matrix argument, i.e.,
the empirical eigenvalue distribution of the matrix $\matXi
\matXi^H.$ Then, for almost all $x$ with probability one,
\begin{equation*}
\lim_{N \rightarrow  \infty} |\mu_{q_1 N}(x, \matXi
\matXi^H)-F_{q_1 N} (x) |=0
\end{equation*}
where $F_{q_1 N}(x)$ is the distribution function whose Stieltjes
transform is equal to
\begin{equation}
\int_{0}^{+\infty} (x+\alpha)^{-1} \mathrm{d}F_{q_1 N} (x) = (q_1
N )^{-1} \mathrm{tr}[\boldsymbol{\widetilde{\Psi}}]^{-1}
\end{equation}
with
$\boldsymbol{\widetilde{\Psi}}=\mathrm{diag}(\boldsymbol{\Psi}_{nn})_{n=1
\dots N}$ nonnegative definite analytic matrix for
$\mathrm{Re}(\alpha)>0$ and $\boldsymbol{\Psi}_{nn}$ satisfying
the canonical system of equations (\ref{C_1girko}) and
(\ref{C_2girko}).
\end{lemma}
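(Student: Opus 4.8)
The plan is to work entirely at the level of Stieltjes transforms, deduce convergence there from the resolvent lemmas already established, and then transfer the conclusion back to the distribution functions. Fix $\alpha \in \mathbb{C}\setminus\mathbb{R}^-$ and recall that the Stieltjes transform of the empirical spectral distribution $\mu_{q_1 N}(\cdot,\matXi\matXi^H)$, evaluated at $-\alpha$, is
\begin{equation*}
g_N(\alpha)=\int_0^{+\infty}(x+\alpha)^{-1}\,\mathrm{d}\mu_{q_1 N}(x,\matXi\matXi^H)=\frac{1}{q_1 N}\tr\left[(\matXi\matXi^H+\alpha\I)^{-1}\right]=\frac{1}{q_1 N}\tr\matPTilde(\alpha),
\end{equation*}
where $\matPTilde$ is precisely the resolvent appearing in Lemma \ref{lemma_T_R}. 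Thus the random object of interest is the normalized trace of $\matPTilde$, and the whole strategy reduces to showing that this trace concentrates around the deterministic quantity $(q_1 N)^{-1}\tr[\boldsymbol{\widetilde{\Psi}}^{-1}]$.

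First I would establish convergence in first mean. Writing $\tr\matPTilde=\sum_{n=1}^N\tr\Mat{P}_{nn}$ and $\tr\matTTilde=\sum_{n=1}^N\tr(\Mat{C}_{nn}^{(1)})^{-1}$, the blockwise first-mean estimate $\E|\Mat{P}_{nn}(\alpha)-\matT_{nn}(\alpha)|\to 0$ of Lemma \ref{lemma_T_R}, together with the bound $\E|\tr(\Mat{P}_{nn}-\matT_{nn})|\le q_1\,\E|\Mat{P}_{nn}-\matT_{nn}|$ and the fact that $q_1$ is fixed, yields after summing and normalizing
\begin{equation*}
\lim_{K=\beta N\to\infty}\E\left|g_N(\alpha)-\frac{1}{q_1 N}\tr\matTTilde(\alpha)\right|=0.
\end{equation*}
Lemma \ref{lemma_convergence_canonical_eqns} then lets me replace each $\Mat{C}_{nn}^{(1)}$ by its limit $\boldsymbol{\Psi}_{nn}$, and Lemma \ref{lemma_existence_uniqueness} guarantees that the block-diagonal matrix $\boldsymbol{\widetilde{\Psi}}=\mathrm{diag}(\boldsymbol{\Psi}_{nn})$ is well defined, nonnegative definite, and analytic in $\alpha$ for $\mathrm{Re}(\alpha)>0$. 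Consequently $(q_1 N)^{-1}\tr[\boldsymbol{\widetilde{\Psi}}^{-1}]$ is, for each fixed $N$, a Herglotz function of $-\alpha$ with the correct decay at infinity, hence it is itself the Stieltjes transform of a bona fide probability distribution function, which we name $F_{q_1 N}$.

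Next I would upgrade the first-mean statement to almost-sure convergence. The standard device is a martingale-difference (Azuma--Hoeffding) argument: reveal the independent block rows of $\matXi$ (independent by H-1) one at a time and bound the increment of $g_N$. Since altering one block row is a bounded-rank perturbation of $\matXi\matXi^H$, the resolvent identity gives an increment of order $O(1/N)$, so $\mathrm{Var}(g_N)=O(1/N)$ and the deviation $g_N(\alpha)-\E g_N(\alpha)$ tends to zero almost surely by Borel--Cantelli. Combining this with the first-mean step gives, for each fixed $\alpha$, $g_N(\alpha)-(q_1 N)^{-1}\tr[\boldsymbol{\widetilde{\Psi}}^{-1}]\to 0$ with probability one. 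Finally, since both functions are analytic on $\mathbb{C}\setminus\mathbb{R}^-$ and agree in the limit on a set with an accumulation point, the continuity theorem for Stieltjes transforms together with the Stieltjes--Perron inversion formula transfers the convergence to the distribution functions, yielding $|\mu_{q_1 N}(x,\matXi\matXi^H)-F_{q_1 N}(x)|\to 0$ for almost all $x$ with probability one.

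The main obstacle is the concentration step, not the identification of the limit. Hypothesis H-4 assumes only the Lindeberg condition, not uniform boundedness of the blocks, so the increments entering the bounded-difference inequality are not deterministically controlled; one must truncate the blocks at the Lindeberg threshold, bound the resulting truncation error in mean, and apply the concentration inequality to the truncated matrix. Keeping the truncation error and the bounded-difference bound simultaneously $o(1)$, uniformly enough for Borel--Cantelli to close, is the delicate part; everything else reduces to the resolvent bookkeeping already encoded in Lemmas \ref{lemma_T_R}--\ref{lemma_existence_uniqueness}.
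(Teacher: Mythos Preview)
The paper does not prove this lemma; it is quoted verbatim from \cite{girko:01a} as one of several auxiliary results from Girko's REFORM method, all of which are stated without proof in Appendix~\ref{section:mathematical_tools}. There is therefore no in-paper proof to compare against, and your sketch should be read as a proposed derivation of a cited external result.

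On its own terms your outline is the standard route and is sound in structure: identify the Stieltjes transform of $\mu_{q_1N}$ with the normalized trace of the resolvent $\matPTilde$, use Lemmas~\ref{lemma_T_R}--\ref{lemma_existence_uniqueness} to pin the expectation to $(q_1N)^{-1}\tr\boldsymbol{\widetilde{\Psi}}^{-1}$, upgrade to almost-sure convergence via a martingale-difference/bounded-rank-perturbation argument, and transfer back by Stieltjes inversion. Two points deserve more care than you give them. First, Lemma~\ref{lemma_T_R} as stated asserts $\E|\Mat{P}_{nn}-\matT_{nn}|\to 0$ for each fixed index $n$, but your averaging step sums over $N$ such blocks with $N\to\infty$; you need the convergence to be uniform in $n$, which Girko's proof does provide but which is not part of the statement you invoke. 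Second, you correctly flag the real obstacle: under only the Lindeberg condition H-4 the martingale increments from revealing one block row are not deterministically bounded, so the concentration step requires a truncation-plus-error-control argument rather than a direct Azuma--Hoeffding application. That is precisely where the work sits in Girko's original proof, and your last paragraph is an honest acknowledgement rather than a resolution of it.
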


\begin{lemma}\cite{bai:98}\label{lemma_bai} Let $\Vec{x}=(x_1, x_2, \ldots, x_N)$
be an $N$-dimensional column vector of complex i.i.d. elements
with zero mean and unit variance and $\Mat{C}$ be an $N \times N $
complex matrix. Then, for any $p \geq 2$
\begin{equation}\label{bai_bound}
\mathrm{E}|\Vec{x}^H \Mat{C} \Vec{x}- \mathrm{tr} \Mat{C}|^p \leq
K_p \left( \left(\mathrm{E}|x_1|^4 \mathrm{tr} \Mat{CC}^H
\right)^{\frac{p}{2}} +  \left(\mathrm{E}|x_1|^{2p} \mathrm{tr}
(\Mat{CC}^H)^{\frac{p}{2}} \right) \right)
\end{equation}
with $K_p$ positive constant independent of $N.$
\end{lemma}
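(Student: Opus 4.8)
The plan is to prove this sharp moment bound for the quadratic form by separating its diagonal and off-diagonal contributions and treating each with a classical moment inequality — Rosenthal's inequality for independent sums and Burkholder's inequality for martingale differences. First I would verify the centering: since the $x_i$ are independent with zero mean and unit variance, $\E\{\bar x_i x_j\}=\delta_{ij}$, so $\E\{\Vec{x}^H \Mat{C}\Vec{x}\}=\sum_i (\Mat{C})_{ii}=\tr\Mat{C}$. Writing $(\Mat C)_{ij}=c_{ij}$, I would then split
\begin{equation*}
\Vec{x}^H \Mat{C}\Vec{x}-\tr\Mat{C}=\underbrace{\sum_{i=1}^{N} c_{ii}\big(|x_i|^2-1\big)}_{=:D}+\underbrace{\sum_{i\neq j}\bar x_i\, c_{ij}\, x_j}_{=:Z},
\end{equation*}
bound $\E|D|^p$ and $\E|Z|^p$ separately, and recombine through the triangle inequality in $L^p$.

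For the diagonal term $D$, the summands $c_{ii}(|x_i|^2-1)$ are independent and centered, so Rosenthal's (equivalently the Marcinkiewicz--Zygmund) inequality gives
\begin{equation*}
\E|D|^p\leq C_p\Big[\big(\textstyle\sum_i |c_{ii}|^2\,\E\big||x_1|^2-1\big|^2\big)^{p/2}+\sum_i |c_{ii}|^p\,\E\big||x_1|^2-1\big|^p\Big].
\end{equation*}
Here $\E\big||x_1|^2-1\big|^2=\E|x_1|^4-1\le\E|x_1|^4$ and $\sum_i|c_{ii}|^2\le\tr(\Mat{C}\Mat{C}^H)$, reproducing the first term $(\E|x_1|^4\,\tr\Mat{C}\Mat{C}^H)^{p/2}$. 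The remaining piece uses $\E\big||x_1|^2-1\big|^p\le C_p\,\E|x_1|^{2p}$ together with the fact that the absolute diagonal entries of $\Mat C$ are weakly majorized by its singular values, whence $\sum_i|c_{ii}|^p\le\tr(\Mat{C}\Mat{C}^H)^{p/2}$, producing the second term.

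For $Z$ I would fix the natural filtration $\mathcal F_k=\sigma(x_1,\dots,x_k)$ and set $\gamma_k=(\E_k-\E_{k-1})(\Vec{x}^H\Mat C\Vec{x})$, which collects exactly the terms whose largest index equals $k$, namely $\gamma_k=\bar x_k\sum_{j<k}c_{kj}x_j+x_k\sum_{i<k}\bar x_i c_{ik}$, so that $Z=\sum_k\gamma_k$ is a martingale-difference sum. Burkholder's inequality then yields
\begin{equation*}
\E|Z|^p\le C_p\Big[\E\big(\textstyle\sum_k\E_{k-1}|\gamma_k|^2\big)^{p/2}+\sum_k\E|\gamma_k|^p\Big].
\end{equation*}
Writing $u_k=\sum_{j<k}c_{kj}x_j$ and $w_k=\sum_{i<k}\bar x_ic_{ik}$, a direct expansion using $|\E\{x_k^2\}|\le\E|x_k|^2=1$ gives $\E_{k-1}|\gamma_k|^2\le 2(|u_k|^2+|w_k|^2)$, so $\sum_k\E_{k-1}|\gamma_k|^2$ is, up to a constant, the quadratic form $\Vec{x}^H \Mat L^H\Mat L\Vec{x}+\Vec{x}^H\Mat U\Mat U^H\Vec{x}$ in the strictly lower/upper triangular parts $\Mat L,\Mat U$ of $\Mat C$; isolating its mean $\tr(\Mat L^H\Mat L)+\tr(\Mat U\Mat U^H)\le\tr(\Mat C\Mat C^H)$ gives the Frobenius-type first term, while the jump term $\sum_k\E|\gamma_k|^p$ is controlled by $\E|x_1|^{2p}$ and the Schatten norm $\tr(\Mat C\Mat C^H)^{p/2}$.

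The main obstacle is the conditional-variance term $\E(\sum_k\E_{k-1}|\gamma_k|^2)^{p/2}$: being itself a sum of quadratic forms, controlling its $p/2$-th moment forces either an induction on $p$ — peeling off the mean $\tr(\Mat C\Mat C^H)$ and reapplying the lemma to the centered remainder at exponent $p/2$, with the base case $p/2<2$ handled by Lyapunov's inequality $\|\cdot\|_{p/2}\le\|\cdot\|_2$ — or a direct singular-value estimate. The delicate part is the bookkeeping: keeping the two structurally different factors, the Frobenius norm $(\tr\Mat C\Mat C^H)^{p/2}$ attached to $\E|x_1|^4$ and the Schatten norm $\tr(\Mat C\Mat C^H)^{p/2}$ attached to $\E|x_1|^{2p}$, cleanly separated throughout the recursion, while noting $\tr(\Mat C\Mat C^H)^{p/2}\le(\tr\Mat C\Mat C^H)^{p/2}$ so the two terms do not collapse. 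No genuinely new idea beyond Rosenthal, Burkholder, and majorization is required, and the constant $K_p$ produced is manifestly independent of $N$.
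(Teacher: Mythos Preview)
The paper does not prove this lemma: it is stated with the citation \cite{bai:98} and used as a black box, so there is no ``paper's own proof'' to compare against. Your outline is essentially the standard argument from that reference (Bai and Silverstein, 1998; see also Lemma~2.7 in their book): split into diagonal and off-diagonal parts, apply Rosenthal's inequality to the diagonal sum of independent centered variables, and handle the off-diagonal part via the Burkholder martingale inequality with the filtration $\mathcal F_k=\sigma(x_1,\dots,x_k)$, closing the loop by an induction on $p$ applied to the conditional-variance quadratic form. Your identification of that recursive step as the main technical point, and of the majorization $\sum_i|c_{ii}|^p\le\tr(\Mat C\Mat C^H)^{p/2}$ for the diagonal jump term, is exactly what the original proof uses. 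The sketch is correct and complete in spirit; the only thing you would need to tidy up in a full write-up is the bookkeeping of constants through the induction and the base case $1\le p/2<2$, both of which you already flag.
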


\section{Proof of Theorem \ref{theo:sinr_MMSE_chip_asynch}}\label{section:proof_theo_sinr_MMSE_chip_asynch}
Let us consider the $r$-block-wise circulant matrices of order $
\N $, $\boldsymbol{\Phi}_k,$ $k=1, \ldots K$
defined in Theorem \ref{theo:sinr_MMSE_chip_asynch}, and let us
denote with $\Mat{F}_{\N}^H$ the unitary Fourier transform matrix
of dimensions $\N \times \N$ with $(\ell, m)$ element given by
\begin{equation}\label{fourier_matrix}
(\Mat{F}_{\N})_{\ell,m}=\frac{1}{\sqrt{\N}}\mathrm{e}^{\frac{j 2 \pi}{\N}(\ell-1)(m-1)}.
\end{equation}
We can extend the well known results on the diagonalization of circulant
matrices\footnote{A circulant matrix $\Mat{C}=\mathfrak{F}(f(x))$ of order $N$
can be decomposed as $\Mat{C}= \Mat{F}_{\N} \Mat{D}
\Mat{F}_{\N}^H$, with $\Mat{D}=\mathrm{diag}(f(0),
f(\frac{2\pi}{\N}), \ldots , f(2 \pi\frac{ (\N-1)}{\N}) ).$ } \cite{gray:72}
to decompose the $r$-block-wise circulant matrices
$\boldsymbol{\Phi}_k,$ $k=1, \ldots K$ as
\begin{equation}\label{decomposition}
\boldsymbol{\Phi}_k=(\Mat{F}_{\N} \otimes
\Mat{I}_{r} ) \boldsymbol{\Delta}_{\phi,r}({\tau}_k)
\Mat{F}_{\N}
\end{equation}
where $\boldsymbol{\Delta}_{\phi,r}({\tau}_k)$ is an $r
\N \times \N$ block diagonal matrix with $\ell^{\text{th}}$ block
given by
\begin{equation}\label{delta_block}
(\boldsymbol{\Delta}_{\phi,r}({\tau}_k))_{\ell, \ell} = \boldsymbol{\Delta}_{\phi,r}\left(2 \pi \frac{\ell-1}{N}, {\tau}_k \right) \end{equation}
and $(\Mat{F}_{\N}\otimes
\Mat{I}_r)$ is a unitary matrix.

The matrix ${\matS}$ can then be rewritten as
\begin{equation*}
{\matS}=(\Mat{F}_{\N} \otimes \Mat{I}_r)
({\boldsymbol{\Delta}}_{\phi,r} ({\tau}_{1})
\widetilde{\vecs}_1, {\boldsymbol{\Delta}}_{\phi,r}
({\tau}_{2}) \widetilde{\vecs}_2, \ldots,
{\boldsymbol{\Delta}}_{\phi,r} ({\tau}_{\K})
\widetilde{\vecs}_{\K}),
\end{equation*} with $\widetilde{\vecs}_k=\Mat{F}_{\N}^H \vecs_k$.
Assuming the elements of the spreading sequence $\vecs_k$ i.i.d.
Gaussian distributed,  $\widetilde{\vecs}_k$ is also a vector with
i.i.d. Gaussian distributed elements having the same distribution
as the elements of $\vecs_k.$ Since the eigenvalues of any matrix
$\Mat{X}$ are invariant with respect to left multiplication by a
unitary matrix $\Mat{U}$ and right multiplication by
$\Mat{U}^{H}$, i.e., the eigenvalues of the matrix $\Mat{X}$
coincides with the eigenvalues of the matrix $\Mat{UXU}^H$ , then
the singular values of the matrices ${\matS}$ and
$\widetilde{\matS}=({\boldsymbol{\Delta}}_{\phi,r} ({\tau}_{1})
\widetilde{\vecs}_1, {\boldsymbol{\Delta}}_{\phi,r}
({\tau}_{2}) \widetilde{\vecs}_2, \ldots,
{\boldsymbol{\Delta}}_{\phi,r} ({\tau}_{\K})
\widetilde{\vecs}_{\K})$ coincide. The same properties holds for the
matrices ${\matH}$ and $\widetilde{\matH}=\widetilde{\matS}
\matA.$ It is straightforward to verify that also
$\mathrm{SINR}_{k}$ is invariant with respect to such a transform.
In fact,
\begin{align}
\mathrm{SINR}_k &= {\vech}_k^H\left({\matH}_k
{\matH}_k^H + \sigma^2 \Mat{I} \right)^{-1}
{\vech}_k \nonumber \\
&= |a_{k}|^2 \widetilde{\vecs}_k^H\left(\widetilde{\matH}_k
\widetilde{\matH}_k^H + \sigma^2 \Mat{I} \right)^{-1}
\widetilde{\vecs}_k \nonumber
\end{align}
with ${\matH}_k$ and $\widetilde{\matH}_k$ obtained from
the matrices ${\matH}$ and $\widetilde{\matH},$
respectively, by suppressing the $k^{\mathrm{th}}$ column.
Therefore, in the following we focus on the analysis of the system
with transfer matrix $\widetilde{\matH}.$

The matrix $\widetilde{\matH}$ is a matrix structured in blocks of
dimensions $r \times 1.$ The block $(n,k)$
$\widetilde{\vech}_{n,k},$ $n=1, \ldots N$ and $k=1,\ldots K,$ is
given by
\begin{equation*}
\widetilde{\vech}_{n,k}= |a_{k}|^2
(\boldsymbol{\Delta}_{\phi,r}({\tau}_k))_{nn}
\widetilde{s}_{n,k}
\end{equation*}
where $\widetilde{s}_{n,k}$ is a Gaussian random variable with
zero mean and variance
$\E\{|\widetilde{s}_{n,k}|^2\}=\frac{1}{N}.$ Additionally, the
variables $\widetilde{s}_{n,k}$ are i.i.d.. Therefore, conditions
H-1 and H-2 for the applicability of Lemma \ref{lemma_T_R} and
Lemma \ref{lemma_convergence_canonical_eqns} are satisfied.
Condition H-3 of Lemma \ref{lemma_T_R} is satisfied. In fact,
\begin{align} \zeta & = \sup_{\N}  \left[ \max_{n=1,
\ldots \N}   \sum_{k=1}^{\K} \mathrm{E} \{\|\widetilde{\vech}_{nk}
\|^2 \} + \max_{k=1, \ldots \K}\sum_{n=1}^{\N} \mathrm{E}
\{\|\widetilde{\vech}_{nk} \|^2 \} \right] \nonumber \\  &  \leq
\sup_{\N} \left[\max_{n=1...N} \sum_{k=1}^{K}
\frac{|a_{k}|^2}{\N}
\|(\boldsymbol{\Delta}_{\phi,r}(\tilde{\tau}_k))_{nn}\|^2 +
\max_{k=1...K} \frac{|a_{k}|^2}{\N} \sum_{n=1}^{N}
\|(\boldsymbol{\Delta}_{\phi,r}({\tau}_k))_{nn}\|^2
\right]. \nonumber
\end{align}
Since the function $\Phi(\omega)$ is bounded in absolute value
with finite support also $|\Phi(\Omega, \tau)|$ is upper bounded for
any $\Omega$ and $\tau.$ Then, there exists a constant
$C_{\mathrm{MAX}}>0$ that satisfies
$\|(\boldsymbol{\Delta}_{\phi,r}({\tau}_k))_{nn} \|^2<
C_{\mathrm{MAX}}$ for any $k$ and $n.$ Additionally, the elements
$|a_{k}|$ are uniformly bounded for any $k$, i.e., $ \exists
a_{\mathrm{MAX}}>0$ such that $|a_{k}|^2 \leq a_{\mathrm{MAX}}^2$
for all $k.$ Then,
\begin{equation}\label{verifica_H_1}
\zeta \leq \sup_{K=\beta N} a_{\mathrm{MAX}}^2 C_{\mathrm{MAX}}
\left( \frac{K}{N} +1 \right) < + \infty.
\end{equation}

In order to verify the Lindeberg condition H-4 we focus on the
limit
\begin{equation*}
\eta = \lim_{K=\beta N \rightarrow \infty} \max_{N} \sum_{k=1}^{K}
\E\left( \| \widetilde{\vech}_{nk} \|^2 \chi\left( \|
\widetilde{\vech}_{nk} \| > \delta \right) \right)
\end{equation*}
for any $\delta > 0.$ Let us observe that $\forall n, k $
\begin{align}
\E\left( \|\widetilde{\vech}_{nk} \|^2 \chi(\|
\widetilde{\vech}_{nk}\|
> \delta) \right) &= |a_{k}|^2  \| (\boldsymbol{\Delta}_{\phi,r}({\tau}_k))_{nn} \|^2 \int_{\left\{|\widetilde{s}_{nk}|^2>
\frac{\delta^2}{|a_{k}|^2
\|\boldsymbol{\Delta}_{\phi,r}({\tau}_k) \|^2} \right\}}
|\widetilde{s}_{nk}|^2
\mathrm{d}F(\widetilde{s}_{nk}) \nonumber \\
& \leq \frac{|a_{k}|^4 \|
(\boldsymbol{\Delta}_{\phi,r}({\tau}_k))_{nn} \|^4}{
\delta^2} \int_{\left\{|\widetilde{s}_{nk}| \geq 0 \right\}} |\widetilde{s}_{nk}|^{4}
\mathrm{d}F(\widetilde{s}_{nk}) \nonumber
\end{align}
where $F(\widetilde{s}_{nk})$ is the cumulative distribution function of $\widetilde{s}_{nk}.$ By using the fact that $\widetilde{s}_{nk}$ is a complex Gaussian variable
with variance $\E\{|s_{nk}|^2\}=\frac{1}{N}$ and forth moment
$\E\{|s_{nk}|^4\}=\frac{2}{N^2}$ and the bounds on $|a_{k}|^2$
and $ \| (\boldsymbol{\Delta}_{\phi,r}({\tau}_k))_{nn}
\|,$ it holds
\begin{align}
\max_{n=1, \ldots N} \E\left( \|\widetilde{\vech}_{nk} \|^2 \chi(\|
\widetilde{\vech}_{nk}\|
> \delta) \right) \leq \frac{2 a_{\mathrm{MAX}}^4 C_{\mathrm{MAX}}^4}{\delta^2
N^2}.
\end{align}

Then, $\eta=0$ since
\begin{equation*}
0 \leq \eta \leq   \frac{2 a_{\mathrm{MAX}}^4
C_{\mathrm{MAX}}^4}{\delta^2 } \lim_{K=\beta N \rightarrow \infty}
\sum_{k=1}^{K} \frac{1}{N^2} =0.
\end{equation*}
Similarly, it can be shown that
\begin{equation*}
\lim_{K=\beta N \rightarrow \infty} \max_{k=1, \ldots, K}
\sum_{n=1}^{N} \E\left( \|\widetilde{\vech}_{nk} \|^2
\chi(\|\widetilde{\vech}_{nk} \|^2 >\delta) \right)=0
\end{equation*}
and the Lindeberg condition H-4 is satisfied.

From Lemma \ref{lemma_T_R},  $\Mat{U}_{p \ell}(\alpha),$ $p,
\ell=1,\ldots, N,$ the blocks of the matrix
$\Mat{U}(\alpha)=(\widetilde{\matH}_k \widetilde{\matH}_k^H + \alpha
\Mat{I} )^{-1}$ converge in the first mean to $r \times r$
matrices $\Mat{V}_{p \ell}=(\Mat{C}^{(1)}_{\ell \ell})^{-1}
\delta_{p \ell},$ $p, \ell = 1, \ldots N,$ and
$\Mat{C}^{(1)}_{\ell \ell}$ defined similarly as in Lemma
\ref{lemma_T_R}. Additionally, from Lemma
\ref{lemma_convergence_canonical_eqns} the matrices $\Mat{C}_{\ell
\ell}^{(1)}$ can be obtained as solution of the canonical system
of equations (\ref{C_1girko}) and (\ref{C_2girko}) asymptotically
as $K= \beta N \rightarrow \infty.$ Equations (\ref{C_1girko}) can
be rewritten as
\begin{align} \boldsymbol{\Upsilon}_{nn}^{(1)} & = \alpha \Mat{I}_{r} +
\sum_{k=1}^{\K} \mathrm{E}\left\{
\widetilde{\vech}_{nk}[{\boldsymbol{\Upsilon}}_{kk}^{(2)}]^{-1} \widetilde{\vech}_{nk}^H \right\} \nonumber \\
& = \alpha \Mat{I}_{r} + \frac{1}{N}\sum_{k=1}^{\K}
[{\boldsymbol{\Upsilon}}_{kk}^{(2)}]^{-1} |a_{k}|^2
\boldsymbol{\Delta}_{\phi,r}\left(2 \pi \frac{n-1}{N},
{\tau}_k \right) \boldsymbol{\Delta}_{\phi,r}^H \left(
2 \pi \frac{n-1}{N}, {\tau}_k \right)  \qquad n=1, \ldots N
\label{C_1_C}
\end{align}
with $\boldsymbol{\Delta}_{\phi,r} \left( x, \tau \right)$ defined
in (\ref{delta_phi_r_def}) and taking into account  that $(\boldsymbol{\Delta}_{\phi,r}
\left({\tau}_k \right))_{nn}=\boldsymbol{\Delta}_{\phi,r} \left( 2 \pi \frac{n-1}{N},
{\tau}_k \right)$ in
(\ref{C_1_C}). Equations (\ref{C_2girko}) specialize
to
\begin{align}
\boldsymbol{\Upsilon}_{kk}^{(2)} & =  1 + \sum_{n=1}^{\N}
\mathrm{E}\left\{
\widetilde{\vech}_{nk}^H[{\boldsymbol{\Upsilon}}_{nn}^{(1)}]^{-1} \widetilde{\vech}_{nk} \right\} \nonumber \\
& = 1 + \frac{|a_{k}|^2}{N} \sum_{n=1}^{\N}
\boldsymbol{\Delta}_{\phi,r}^H\left( 2 \pi \frac{n-1}{N},
{\tau}_k \right)
[{\boldsymbol{\Upsilon}}_{nn}^{(1)}]^{-1}
\boldsymbol{\Delta}_{\phi,r} \left( 2 \pi \frac{n-1}{N},
{\tau}_k \right)   \qquad k=1, \ldots K \label{C_2_B}.
\end{align}  By substituting (\ref{C_2_B}) in
 (\ref{C_1_C}) and considering the canonical system of equations
 as $K=\beta N \rightarrow \infty$ we obtain
\begin{align}
\boldsymbol{\Upsilon}^{(1)} (\Omega)&=\alpha \Mat{I}_{r}+ \beta
\int_{\mathcal{S}} \frac{\lambda
\boldsymbol{\Delta}_{\phi,r}\left( \Omega, \tau
\right)\boldsymbol{\Delta}_{\phi,r}^{H}\left( \Omega, \tau \right)  f_{|\matA|^2,T}(\lambda, \tau) \mathrm{d}\, \lambda \mathrm{d}\,\tau}{1+ \frac{\lambda}{2 \pi}
\int_{\mathcal{X}} \boldsymbol{\Delta}_{\phi,r}^{H}\left( \Omega^{\prime}, \tau
\right) [\boldsymbol{\Upsilon}^{(1)}(\Omega^{\prime})]^{-1}
\boldsymbol{\Delta}_{\phi,r}\left( \Omega^{\prime}, \tau \right) \mathrm{d}\, \Omega^{\prime}}
\nonumber
\end{align}
with $\Omega \in [0, 2 \pi],$ or since $\boldsymbol{\Delta}_{\phi,r}(\Omega,
\tau)$ is periodical in $\Omega$ with  period $2 \pi$, $\Omega$ can equivalently
varies in the interval $\mathcal{X}=\left[ -\pi,
\pi \right]$. Here, $\mathcal{S}$ denotes the support of
the distribution function $F_{|\matA|^2,T}(\lambda, \tau).$ By
defining
$\boldsymbol{\Upsilon}(\Omega)=[\boldsymbol{\Upsilon}^{(1)}(\Omega)]^{-1}$
we obtain (\ref{multiuser_efficiency_mat}). It follows from Lemma
\ref{lemma_convergence_canonical_eqns}
\begin{equation*}
\lim_{K=\beta N \rightarrow + \infty} \Mat{C}_{nn}^{(1)} =
\boldsymbol{\Upsilon}^{-1}\left(2 \pi \frac{n}{N}\right).
\end{equation*}

The convergence in the first mean and thus in probability of
$\mathrm{SINR}_k=\widetilde{\vech}_k^H \Mat{U}(\sigma^2)
\widetilde{\vech}_k $ to the quantity $\varrho=\frac{|a_{k}|^2}{2 \pi} \int
\boldsymbol{\Delta}_{\phi,r}^H(\Omega, {\tau}_k)
\Mat{\Upsilon}(\Omega) \boldsymbol{\Delta}_{\phi,r}(\Omega,
{\tau}_k) \mathrm{d}\Omega$ is proven if $\eta_1=\E
\left|\widetilde{\vech}_k^H \Mat{U}(\sigma^2) \widetilde{\vech}_k -
\varrho \right| $ vanishes asymptotically, i.e.,
\begin{equation}\label{limit_convergence_probability}
\lim_{\begin{subarray}{c}
  \K,\N \rightarrow \infty\\
  \frac{\K}{\N} \rightarrow \beta
\end{subarray}} \eta_1=0.
\end{equation}

The rest of the proof is focused on showing
(\ref{limit_convergence_probability}). Let us observe
\begin{align}
\eta_1 &\leq \E \left|\widetilde{\vech}_k^H \Mat{U}(\sigma^2)
\widetilde{\vech}_k - \widetilde{\vech}_k^H \Mat{V} \widetilde{\vech}_k
\right|+ \E |\widetilde{\vech}_k^H \Mat{V} \widetilde{\vech}_k-
\varrho| \label{aux_riferimento}
\end{align}
where the triangular inequality\footnote{Given two matrices
$\Mat{A}$ and $\Mat{B}$ with consistent dimensions the following
inequalities hold: \begin{align} |\Mat{A} \Mat{B}|\leq |\Mat{A}|
|\Mat{B}| \qquad & \text{Submultiplicative inequality of spectral
norms; } \nonumber \\
|\Mat{A}+ \Mat{B}|\leq |\Mat{A}| + |\Mat{B}| \qquad &
\text{Triangular inequality of spectral norms. } \nonumber
\end{align}} of the spectral norm is applied and
$\Mat{V}=\mathrm{diag}([\matC^{(1)}_{kk}(\sigma^2)]^{-1})_{k=1,
\ldots,N}$ is defined in Lemma
\ref{lemma_convergence_canonical_eqns}.

By applying the submultiplicative inequality for spectral norms
and the triangular inequality  to the first term of
(\ref{aux_riferimento}) we obtain

\begin{align}
\E |\widetilde{\vech}_k^H (\Mat{U}(\sigma^2)-\Mat{V})
\widetilde{\vech}_k| & = \E \left|\sum_{i, \ell}
\widetilde{s}_{ik}^{*}
\boldsymbol{\Delta}_{\phi,r}^H\left(2 \pi \frac{i-1}{N},
{\tau}_k \right) (\Mat{U}(\sigma^2)-\Mat{V})_{i \ell}
\boldsymbol{\Delta}_{\phi,r}\left(2 \pi \frac{\ell-1}{N},
{\tau}_k \right) \widetilde{s}_{\ell k}\right| \nonumber \\
& \leq \sum_{i, \ell} \E (|(\Mat{U}(\sigma^2))_{i \ell} -
\Mat{V}_{i \ell}|) \E |\widetilde{s}_{ik}^{*} \widetilde{s}_{\ell
k}| \boldsymbol{\Delta}_{\phi,r}^H\left(2 \pi \frac{i-1}{N}, {\tau}_k \right)
\boldsymbol{\Delta}_{\phi,r}\left(2 \pi \frac{\ell-1}{N},{\tau}_k \right)  \nonumber \\
& = \sum_{i} \E |(\Mat{U}(\sigma^2))_{ii}-\Mat{V}_{ii}|
\frac{\left\|\boldsymbol{\Delta}_{\phi,r}\left(2 \pi \frac{i-1}{N},
{\tau}_k \right)\right\|^2}{N} \nonumber \\
& \leq   \sum_{i} \frac{\|\boldsymbol{\Delta}_{\phi,r}\left(2 \pi \frac{i-1}{N},
{\tau}_k \right)\|^2}{N}  \max_{i} \E
|(\Mat{U}(\sigma^2))_{ii}-\Mat{V}_{ii}|. \nonumber
\end{align}
Thanks to Lemma \ref{lemma_T_R} and the fact that
$\|\boldsymbol{\Delta}_{\phi,r}\left( 2 \pi \frac{i-1}{N},
{\tau}_k \right)\|^2 \leq K_{\mathrm{MAX}}$ for all
$i=1, \ldots N$ and ${\tau}_k$

\begin{equation*}
\lim_{\begin{subarray}{c}
  \K, \N \rightarrow \infty \\
  \frac{\K}{\N} \rightarrow \beta
\end{subarray}} \E |\widetilde{\vech}_k^H (\Mat{U}(\sigma^2)-\Mat{V})
\widetilde{\vech}_k| = 0.
\end{equation*}
In order to prove the convergence to zero of $\eta_2 = \E
|\widetilde{\vech}_k^H \Mat{V} \widetilde{\vech}_k - \varrho|$ we
consider {\small \begin{align}
\eta_2^2 & \leq  \E |\widetilde{\vech}_k^H \Mat{V} \widetilde{\vech}_k - \varrho |^2 \nonumber \\
& =  \E((\widetilde{\vech}_k^H \Mat{V} \widetilde{\vech}_k)^2 -2
\varrho
\widetilde{\vech}_k^H \Mat{V} \widetilde{\vech}_k  + \varrho^2) \nonumber  \\
& =  \E \left( |a_{k}|^4\sum_{ij}
\boldsymbol{\Delta}_{\phi,r}^H\left(2 \pi\frac{i-1}{N},
{\tau}_k \right) \Mat{V}_{ii}
\boldsymbol{\Delta}_{\phi,r}\left(2 \pi\frac{i-1}{N},
{\tau}_k \right)
\boldsymbol{\Delta}_{\phi,r}^H\left(2 \pi \frac{j-1}{N},
{\tau}_k \right) \Mat{V}_{jj}
\boldsymbol{\Delta}_{\phi,r}\left(2 \pi \frac{j-1}{N},
{\tau}_k \right) |\widetilde{s}_{ik}|^2
|\widetilde{s}_{jk}|^2 \right. \nonumber
\\& \left. \quad -2 \varrho |a_{k}|^2 \sum_{i}
\boldsymbol{\Delta}_{\phi,r}^H\left(2 \pi \frac{i-1}{N},
{\tau}_k \right) \Mat{V}_{ii}\boldsymbol{\Delta}_{\phi,r}\left(2 \pi \frac{i-1}{N},
{\tau}_k \right) |\widetilde{s}_{ik}|^2 + \varrho^2
\right)
\label{converg_step_3} \\
&= \frac{2 |a_{k}|^4}{\N^2}\sum_i
\left(\boldsymbol{\Delta}_{\phi,r}^H\left(2 \pi \frac{i-1}{N},
{\tau}_k \right) \Mat{V}_{ii} \boldsymbol{\Delta}_{\phi,r}^H\left(2 \pi \frac{i-1}{N},
{\tau}_k \right)\right)^2 + \frac{|a_{k}|^4}{\N^2}
\sum_{\begin{subarray}{c}
  i,j \\
  i \neq j
\end{subarray}} \boldsymbol{\Delta}_{\phi,r}^H\left(2 \pi \frac{i-1}{N},
{\tau}_k \right) \Mat{V}_{ii} \boldsymbol{\Delta}_{\phi,r}^H\left(2 \pi \frac{i-1}{N},
{\tau}_k \right)\nonumber \\
& \quad \times  \boldsymbol{\Delta}_{\phi,r}^H\left(2 \pi \frac{j-1}{N},
{\tau}_k \right) \Mat{V}_{jj} \boldsymbol{\Delta}_{\phi,r}^H\left(2 \pi \frac{j-1}{N},
{\tau}_k \right) -\frac{2 \varrho}{N} \sum_{i}
\boldsymbol{\Delta}_{\phi,r}^H\left(2 \pi \frac{i-1}{N},
{\tau}_k \right) \Mat{V}_{ii} \boldsymbol{\Delta}_{\phi,r}\left(2 \pi \frac{i-1}{N},
{\tau}_k \right)+ \varrho^2. \label{converg_step_4}
\end{align}}
From (\ref{converg_step_3}) to (\ref{converg_step_4}) we make use
of the assumption that  $\widetilde{s}_{ij}$ is a complex Gaussian
variable circularly invariant with variance $N^{-1}.$ Let us
observe that the spectral norm of $\Mat{\Upsilon}(\Omega)$ and
$\Mat{V}_{ii},$ for any $i,$ are bounded by
$|\Mat{\Upsilon}(\Omega)|<\sigma^2$ and $|\Mat{V}_{ii}|< \sigma^2.$
Then, the first term in (\ref{converg_step_4}) vanishes as $\N
\rightarrow \infty.$ By appealing Lemma
\ref{lemma_convergence_canonical_eqns}, for any $i$, $\Mat{V}_{ii}
\rightarrow \Mat{\Upsilon}\left(2 \pi \frac{i}{N}\right)$ as $\K, \N
\rightarrow \infty$ with $\frac{\K}{\N} \rightarrow \beta.$ Then,
the second and third terms in (\ref{converg_step_4}) converge to
$\varrho^2$ and $-2\varrho^2,$ respectively. We can conclude that
\begin{equation*}
\lim_{\begin{subarray}{c}
  \K,\N \rightarrow \infty \\
  \frac{\K}{\N} \rightarrow \beta
\end{subarray}} \eta_2^2 =0
\end{equation*}
and $\eta_2 \rightarrow 0$ as $\K, \N \rightarrow \infty$ as
$\frac{\K}{\N} \rightarrow \beta.$ Therefore,
(\ref{limit_convergence_probability}) and thus the convergence in
the first mean of $\mathrm{SINR}_{k}$ is proven. The Markov
inequality implies that, $\forall \varepsilon>0 $
\begin{equation*}
\lim_{\begin{subarray}{c}
  \K,\N \rightarrow \infty \\
  \frac{\K}{\N} \rightarrow \beta \end{subarray}} \mathrm{Pr}\{|\widetilde{\vech}_k^H \Mat{U}(\sigma^2) \widetilde{\vech}_k - \varrho| > \varepsilon
  \}  \leq \frac{1}{\varepsilon} \lim_{\begin{subarray}{c}
  \K,\N \rightarrow \infty \\
  \frac{\K}{\N} \rightarrow \beta \end{subarray}} \E|\widetilde{\vech}_k^H \Mat{U}(\sigma^2) \widetilde{\vech}_k - \varrho|=0 \nonumber
\end{equation*}
and the convergence in probability stated in Theorem
\ref{theo:sinr_MMSE_chip_asynch} is proven.

This concludes the proof of Theorem
\ref{theo:sinr_MMSE_chip_asynch}.

\section{Proof of Corollary \ref{cor_MMSE_raised_cosine}}\label{section:proof_cor_MMSE_raised_cosine}
In order to prove Corollary \ref{cor_MMSE_raised_cosine} we
rewrite the limit $\mathrm{SINR}_k$ in
(\ref{limit_SINR_MMSE_genaral}) as
\begin{equation}\label{limit_SINR_MMSE_modified}
\lim_{K,N \rightarrow \infty} \mathrm{SINR}_k = \frac{{|a_{k}|^2}}{2 \pi}
\int_{-\pi}^{\pi}
\mathrm{tr}\left(\Mat{\Upsilon}(\Omega) \Mat{Q}(\Omega, {\tau}_k)
\right) \mathrm{d}\Omega
\end{equation}
and the fixed point equation (\ref{multiuser_efficiency_mat}) as
\begin{equation}\label{multiuser_efficiency_modified}
\Mat{\Upsilon}^{-1}(\Omega) = \sigma^2 \Mat{I}_{r} + \beta
\int_{0}^{+\infty} \int_{0}^{T_c} \dfrac{\lambda \Mat{Q}(\Omega, \tau)
f_{|\matA|^2, T }(\lambda, \tau) \mathrm{d}\lambda \mathrm{d} \tau }{1 + \frac{\lambda}{2 \pi}
\int_{-\pi}^{\pi} \mathrm{tr}\left(
\Mat{\Upsilon}(\Omega^{\prime})\Mat{Q}(\Omega^{\prime}, \tau)\right) \mathrm{d}\Omega^{\prime} } \qquad
-\pi \leq \Omega \leq \pi
\end{equation}
with $\Mat{Q}(\Omega, \tau)$ defined in (\ref{matrixQ(xt)_def}). The
matrix $\Mat{Q}(\Omega, \tau)$ can be decomposed as in
(\ref{Q(xt)_decomposition}). Thanks to the assumptions on $\Phi(\Omega )$ in Corollary \ref{cor_MMSE_raised_cosine}, the
conditions on $\Mat{Q}(\Omega)$ and $\overline{\Mat{Q}}(\Omega,\tau)$ in
Lemma \ref{Lemma:traceBQ} and Lemma \ref{lemma:invarianceQT} are
satisfied. First we show that $\Mat{\Upsilon}(\Omega),$ the unique
solution of (\ref{multiuser_efficiency_modified}) in the class of
nonnegative definite analytic functions in
$\mathrm{Re}(\sigma^2)>0,$ is an $r \times r $ matrix with
eigenbasis $\Mat{U}(\Omega)$ defined in (\ref{U_definition}). Let us
assume that $\Mat{\Upsilon}(\Omega)= \Mat{U}(\Omega)
\widetilde{\Mat{\Upsilon}}(\Omega) \Mat{U}^H(\Omega) $ with elements of
$\widetilde{\Mat{\Upsilon}}(\Omega)$ nonnegative for all $\Omega \in \left[
-\pi, \pi \right].$ By appealing to Lemma
\ref{lemma:matrix_with_eigenbasis_U} $\Mat{\Upsilon}(\Omega)$ is of
form (\ref{matrixBform}). Then, by applying Lemma
\ref{Lemma:traceBQ} it results $\mathrm{tr}\left(\Mat{\Upsilon}(\Omega)
\overline{\Mat{Q}}(\Omega, \tau ) \right)=0$ for all $\Omega \in \left[
-\pi, \pi \right].$ Therefore,
\begin{align}
\int_{-\pi}^{\pi} \mathrm{tr}\left(\Mat{\Upsilon}
(\Omega) \Mat{Q}(\Omega,\tau) \right) \mathrm{d}\Omega &=
\int_{-\pi}^{\pi} \mathrm{tr}\left(\Mat{\Upsilon}
(\Omega^{\prime}) \Mat{Q}(\Omega^{\prime}) \right) \mathrm{d}\Omega^{\prime}  \nonumber \\
&= \int_{-\pi}^{\pi} \mathrm{tr}\left(\widetilde{\Mat{\Upsilon}} (\Omega^{\prime}) \Mat{D}(\Omega^{\prime}) \right) \mathrm{d}\Omega^{\prime} \geq 0 \nonumber
\end{align}
with $\Mat{D}(\Omega)$ defined as in Lemma
\ref{lemma:Q(x)_decomposition}. Let us notice that
$\int_{0}^{T_c}\overline{\Mat{Q}}(\Omega, \tau)
\mathrm{d}F_{T}(\tau)=0$ for all $\Omega.$ Thanks to this property, the
assumption of independence of the random variables $\lambda$ and
$\tau,$ and to the uniform distribution of $\tau,$
(\ref{multiuser_efficiency_modified}) can be rewritten as
\begin{equation}\label{multiuser_efficiency_modified2}
\widetilde{\Mat{\Upsilon}}^{-1}(\Omega) = \sigma^2 \Mat{I}_{r} + \beta
\left(\int_{0}^{+\infty}  \dfrac{\lambda \mathrm{d} F_{|\matA|^2 }
(\lambda) }{1 + \frac{\lambda}{2 \pi} \int_{-\pi}^{\pi}
\mathrm{tr}\left( \widetilde{\Mat{\Upsilon}}(\Omega^{\prime})\Mat{D}(\Omega^{\prime})\right)
\mathrm{d}\Omega^{\prime} } \right) \qquad -\pi \leq \Omega \leq \pi.
\end{equation}
Since $\widetilde{\Mat{\Upsilon}}(\Omega)$  and $\Mat{D}$ are diagonal
matrices, the matrix equation (\ref{multiuser_efficiency_modified})
reduces to a system of $L$ scalar equations. Furthermore, all the
quantities that appears in the right hand side of the system of
equations (\ref{multiuser_efficiency_modified2}) are nonnegative
under the assumption that $\widetilde{\Mat{\Upsilon}}(\Omega)$ is a
nonnegative definite matrix and
(\ref{multiuser_efficiency_modified2}) admits a nonnegative
definite solution for $\mathrm{Re}(\sigma^2)>0.$ The existence of
a nonnegative definite solution of the system of equations
(\ref{multiuser_efficiency_modified2}) implies also a solution of
the fixed point matrix equation
(\ref{multiuser_efficiency_modified}) given by $\Mat{\Upsilon}(\Omega)=
\Mat{U} (\Omega) \widetilde{\Mat{\Upsilon}}(\Omega) \Mat{U}^H(\Omega) . $ Let
$\widetilde{\upsilon}_s(\Omega)$ be the $s^{\mathrm{th}}$ diagonal
element of $\widetilde{\Mat{\Upsilon}}(\Omega)$ and let us recall that
the $s^{\mathrm{th}}$ diagonal element of $\Mat{D}(\Omega)$ is given in
(\ref{elements_D(x)}). Then,
(\ref{multiuser_efficiency_modified2}) reduces to
\begin{multline}\label{multiuser_efficiency_modified3}
\widetilde{\upsilon}_s^{-1}(\Omega) = \sigma^2 + \beta \frac{r}{T_c^2}
\left|\Phi\left( \frac{\Omega}{T_c} -\mathrm{sign}(\Omega)\frac{2 \pi}{T_c}
\left(\left\lfloor  \frac{r-1}{2} \right\rfloor -s+1 \right)
\right)  \right|^2 \\ \times \int_{0}^{+\infty}
\dfrac{\lambda \mathrm{d} F_{|\matA|^2 } (\lambda) }{1 +
\frac{\lambda r}{2 \pi T_c^2} \int_{-\pi}^{\pi}
\sum_{\ell=1}^{r} \widetilde{\upsilon}_{\ell}( \Omega^{\prime}) \left|\Phi\left(
\frac{\Omega^{\prime}}{T_c} -\frac{2 \pi}{T_c}\mathrm{sign}(\Omega^{\prime}) \left(\left \lfloor
\frac{r-1}{2} \right \rfloor -\ell+1 \right) \right)
\right|^2    \mathrm{d}\Omega^{\prime} } \\ \qquad -\pi \leq \Omega \leq
\pi \text{  and } s=1, \ldots r .
\end{multline}
By changing the variable $y=\Omega-\mathrm{sign}(\Omega)2 \pi \left(\left \lfloor
\frac{r-1}{2}  \right \rfloor -s +1\right)$ and defining the
function $\upsilon(y)$ in the interval $\left(-r \pi,
r \pi \right)$ as follows
\begin{equation}
\upsilon(y)=
  \begin{cases}
    \widetilde{\upsilon}_{s}\left( y-2 \pi \left(\left\lfloor \frac{r-1}{2} \right\rfloor -s+1 \right) \right) & \left\lfloor \frac{r-1}{2} \right\rfloor -s+\frac{1}{2} \leq \frac{y}{2 \pi} \leq \left\lfloor \frac{r-1}{2} \right\rfloor -s+1 , \\
  \widetilde{\upsilon}_{s}\left( y+2 \pi \left(\left\lfloor \frac{r-1}{2} \right\rfloor -s+1 \right)\right) & s-1-\left\lfloor \frac{r-1}{2} \right\rfloor  \leq \frac{y}{2 \pi} \leq s-\frac{1}{2} -\left\lfloor \frac{r-1}{2} \right\rfloor
  \end{cases} \qquad s=1, \ldots,r
\end{equation}
the system of equations (\ref{multiuser_efficiency_modified3}) can
be rewritten as
\begin{multline}\label{fp_A}
\upsilon^{-1}(\Omega) = \sigma^2 + \beta \frac{r}{T_c^2}
\left|\Phi\left( \frac{y}{T_c} \right) \right|^2
\int_{0}^{+\infty} \dfrac{\lambda \mathrm{d} F_{|\matA|^2 }
(\lambda) }{1 + \frac{\lambda r}{2 \pi T_c}
\int_{-r \pi}^{r \pi} \upsilon(t) \left|\Phi\left(
\frac{t}{T_c} \right) \right|^2    \mathrm{d}t } \qquad
|y| \leq r \pi .
\end{multline}
A similar approach applied to (\ref{limit_SINR_MMSE_modified})
yields
\begin{align}
\lim_{K,N \rightarrow \infty} \mathrm{SINR}_k &= \frac{|a_{k}|^2
r}{T_c^2} \sum_{s=1}^{r} \int_{-\pi}^{\pi} \left|
\Phi\left( \Omega- 2 \pi \mathrm{sign}(\Omega)\left( \left \lfloor \frac{r-1}{2}
\right\rfloor -s+1 \right) \right) \right|^2
{\upsilon}(\Omega) \mathrm{d}\Omega \nonumber \\
&=\frac{|a_{k}|^2 r}{T_c^2} \int_{-\pi r}^{\pi r}
\left| \Phi\left(\frac{\Omega}{T_c} \right) \right|^2 \upsilon(\Omega)
\mathrm{d}\Omega. \label{limit_B}
\end{align}
Let us recall that the variance of the discrete white noise is $\sigma^2= \frac{N_0 r}{T_c}.$  Additionally, let us define the function
\begin{equation}\label{efficiency_spectrum}
    \eta\left( \frac{\Omega}{T_c} \right) = \frac{r N_0}{T_c E_{\phi}} \left| \Phi\left( \frac{\Omega}{T_c}\right) \right|^2 \upsilon(\Omega).
\end{equation}
By substituting (\ref{defMUE}) in (\ref{fp_A}) and (\ref{limit_B}), using definition (\ref{efficiency_spectrum}), and $\omega=\frac{\Omega}{T_c}$  we obtain the fixed point equation
(\ref{fixed_point_eq_cor_MMSE_raised_cos}) and the limit
(\ref{SINR_cor_MMSE_raised_cos}), respectively. This concludes the proof of Corollary
\ref{cor_MMSE_raised_cosine}.

\section{Proof of Theorem \ref{theo:small_bandwidth_MMSE}}\label{section:proof_theo_small_bandwidth_MMSE}
The proof of Theorem \ref{theo:small_bandwidth_MMSE} follows along
the line of the proof of Theorem \ref{theo:sinr_MMSE_chip_asynch}.
In this case $\Mat{\Delta}_{\phi,r} (\Omega, \tau)= \frac{\mathrm{e}^{j
\frac{\sqrt{r}}{T_c} \tau \Omega }}{T_c} \Phi^{*}\left( \frac{\Omega }{T_c}\right) \Vec{e}(\Omega)$ and the matrix $\Mat{Q}(\Omega, \tau)$
is independent of $\tau.$ Specifically,
\begin{equation*}
\Mat{Q}(\Omega, \tau)= \frac{r}{T_c^2} \left|\Phi\left(  \frac{\Omega}{T_c}\right)\right|^2 \Vec{e}(\Omega) \Vec{e}^H(\Omega).
\end{equation*}
Then, applying the same approach as in Theorem
\ref{theo:sinr_MMSE_chip_asynch} Lemma \ref{lemma_T_R} and Lemma
\ref{lemma_convergence_canonical_eqns} yield
\begin{equation*}
\lim_{K, N \rightarrow \infty} \mathrm{SINR}_k = \frac{|a_{k}|^2
r}{2 \pi T_c^2} \int_{-\pi}^{\pi}  \left| \Phi\left(
\frac{\Omega }{T_c} \right)\right|^2 \Vec{e}^H(\Omega)
\widetilde{\Mat{\Upsilon}}( \sigma^2,\Omega) \Vec{e}(\Omega) \mathrm{d}\Omega
\end{equation*}
with
\begin{align}
\widetilde{\Mat{\Upsilon}}^{-1}( \sigma^2,\Omega)&= \sigma^2
\Mat{I}_{r}+\beta \frac{r}{T_c^2} \left| \Phi\left( \frac{\Omega}{T_c}\right)\right|^2 \Vec{e}(\Omega) \Vec{e}^H(\Omega) \int_{0}^{+\infty}
\frac{\lambda \mathrm{d}F_{|A|^2}(\lambda)}{1+ \frac{\lambda}{2 \pi}
\int_{-\pi}^{\pi} \frac{r}{T_c^2} \left|
\Phi\left(  \frac{\Omega^{\prime} }{T_c}\right)\right|^2 \Vec{e}^H(\Omega^{\prime})
\widetilde{\Mat{\Upsilon}}( \sigma^2,\Omega^{\prime}) \Vec{e}(\Omega^{\prime}) \mathrm{d}\Omega^{\prime}}
\nonumber \\
& =\sigma^2 \Mat{I}_r +\beta \int_{0}^{+\infty} \frac{\Mat{U}(\Omega)
\Mat{D}^{\prime}(\Omega)  \Mat{U}^H(\Omega) \lambda
\mathrm{d}F_{|\Mat{A}|^2}(\lambda)}{1+ \frac{\lambda}{2 \pi}
\int_{-\pi}^{\pi} \frac{r}{T_c^2} \left|
\Phi\left(  \frac{\Omega^{\prime} }{T_c} \right)\right|^2 \Vec{e}^H(\Omega^{\prime})
\widetilde{\Mat{\Upsilon}}( \sigma^2, \Omega^{\prime}) \Vec{e}(\Omega^{\prime}) \mathrm{d}\Omega^{\prime}}
\label{fixed_point_intermediate_step}
\end{align}
with $\Mat{U}(\Omega)$ defined in (\ref{U_definition}) and
$\Mat{D}^{\prime}(\Omega)$ diagonal matrix with all zero elements
except the $\left(\left\lfloor  \frac{r-1}{2} \right \rfloor +1
\right)^{\mathrm{th}}$ element, corresponding to the eigenvector
$\Vec{e}(\Omega)$ and equal to $\frac{r}{T_c^2} \left| \Phi\left(
\frac{\Omega }{T_c}\right)\right|^2.$ Then, it is apparent that
the solution of the fixed point matrix equation
(\ref{fixed_point_intermediate_step}) is a matrix with the basis
of eigenvectors $\Mat{U}(\Omega),$  and
(\ref{fixed_point_intermediate_step}) reduces to the equation
corresponding to the $\left(\left \lfloor \frac{r-1}{2} \right
\rfloor + 1 \right)^{\mathrm{th}}$ element ${\upsilon}(\Omega)$ of
$\Mat{\Upsilon}(\Omega)=\Mat{U}^H(\Omega)\widetilde{\Mat{\Upsilon}}(\Omega)\Mat{U}(\Omega)$
\begin{equation}\label{fp_B}
\upsilon_{s}^{-1}(\Omega)=\sigma^2 + \beta \frac{r}{T_c^2} \left|
\Phi\left(\frac{\Omega}{T_c}  \right)\right|^2 \int \frac{\lambda
\mathrm{d}F_{|\Mat{A}|^2} (\lambda)}{1+ \frac{\lambda r}{2 \pi T_c^2}
\int_{-\pi}^{\pi} \left| \Phi\left(\frac{\Omega^{\prime}}{T_c}  \right)\right|^2 \upsilon(\Omega^{\prime}) \mathrm{d}\Omega^{\prime} }.
\end{equation}
The other components of the diagonal matrix $\Mat{\Upsilon}(\Omega)$ are simply given by $\upsilon_s^{-1}(\Omega)=\sigma^2, \, s=1, \ldots,r$ and $s \neq \left(\left
\lfloor \frac{r-1}{2} \right \rfloor + 1 \right).$  The identity $\Vec{e}^H(\Omega)
\widetilde{\Mat{\Upsilon}}(\Omega)  \Vec{e}(\Omega)=\upsilon(\Omega)$ yields
\begin{equation}\label{SINR_aux_final}
\lim_{K=\beta N \rightarrow \infty} \mathrm{SINR}_k=
\frac{|a_{k}|^2 r}{ 2 \pi T_c^2} \int_{-\pi}^{\pi}
\left| \Phi\left(\frac{\Omega}{T_c}  \right)\right|^2 \upsilon(\Omega)
\mathrm{d}\Omega  .
\end{equation}
The convergence (\ref{SINR_aux_final}) in probability or in the
first mean can be proven as in Theorem
\ref{theo:sinr_MMSE_chip_asynch}. By substituting (\ref{defMUE})
in (\ref{fp_B}) and (\ref{SINR_aux_final}), using definition (\ref{efficiency_spectrum}), and $\omega=\frac{\Omega}{T_c},$ we obtain the fixed point equation (\ref{fixed_point_eq_theo_small_bandwidth}) and the limit (\ref{SINR_theo_small_bandwidth}), respectively.

This concludes the proof of Theorem
\ref{theo:small_bandwidth_MMSE}.

\section{Proof of Proposition \ref{prop2}}\label{section:proof_proposition_2}
Proposition \ref{prop2} follows immediately from Corollary
\ref{cor_sinc_pulse}. In fact, from (\ref{fix_point_aux}) it is
apparent that the multiuser efficiency of a system with load
$\beta$ and sinc pulses having roll-off equal to $\alpha$ is equal
to the multiuser efficiency of a system with load
$\frac{\beta}{\alpha}$ and sinc pulses having zero roll-off.
Thanks to the fundamental relations between multiuser efficiency
and capacity \cite{guo:05} we obtain (\ref{eq:sinc_capacity}).
Since the spectral efficiency is obtained as the ratio $
\dfrac{\left.\mathcal{C}^{\text{(sinc)}} (\beta, \text{SNR},
\alpha)\right|_{\text{SNR}=N_0^{-1}}}{\alpha}, $ it is apparent
from (\ref{eq:sinc_capacity}) that it is constant as $\beta
\rightarrow \infty$ for any finite bandwidth $\alpha.$

\section{Proof of Corollary \ref{theor:constrained_capacity}}\label{section:proof_theor_constrained_capacity}

Let $\mathrm{MMSE}_{b_{k}[m]}(\rho)$ be the achievable MMSE by an
estimator of the symbol $b_{k}[m]$ transmitted by user $k$ in the
$m$-th symbol interval when the transmitted signal
$\vecb$ in (\ref{matrix_model}) is Gaussian and the signal to noise ratio is
$\rho=\sigma^{-2}.$ Furthermore, let
$\mathrm{SINR}_{b_{k}[m]}(\rho)$ be the SINR at the output of the
same MMSE estimator for the transmitted symbol $b_{k}[m].$ Then,
\begin{equation}\label{SNIR_MMSE_fundamental}
  \mathrm{MMSE}_{b_{k}[m]}(\rho)=
  \frac{1}{1+\mathrm{SINR}_{b_k[m]}(\rho)}.
\end{equation}
Additionally, let
$I(\vecb;\vecy, \rho)$ be
the mutual information in nats between the input
$\vecb$ and the output
$\vecy.$ From Theorem 2 in \cite{guo:05} the
following relation holds
\begin{equation}\label{MMSE_mutual_information}
  \frac{\mathrm{d}}{\mathrm{d}\rho}I(\vecb;\vecy,
  \rho)= \mathrm{E}\{\|\matH \vecb - \matH \widehat{\vecb}\|^2\}
\end{equation}
being $\widehat{\vecb}$ the conditional mean estimate. We
recall here that for Gaussian signals conditional mean estimate
and MMSE estimate coincide (see e.g., \cite{Kaybook}) and
\begin{align}
  \mathrm{E}\{\|\matH \vecb - \matH \widehat{\vecb}\|^2\} & = \mathrm{tr}\left( \sigma^2 \matH^H (\matH \matH^H+ \sigma^2 \matI)^{-1} \matH \right) \nonumber \\
  &= \sum_{m,k} \frac{\mathrm{SINR}_{b_k[m]}(\rho)}{\rho (1+ \mathrm{SINR}_{b_k[m]}(\rho))}.
\end{align}

For $K,N,m \rightarrow \infty$ with $\frac{K}{N} \rightarrow
\beta,$ $\mathrm{SINR}_{b_k[m]}(\rho)$ converges with probability
one to deterministic values. More specifically,
$\text{SINR}_{b_k[m]}(\rho)=\left. \frac{|a_{k}|^2 E_{\phi}}{N_0}
\eta_{ \gamma}\right|_{\gamma=\frac{r \rho }{T_c}},$ being $\eta_{\gamma}$ the multiuser efficiency corresponding to $\gamma=\frac{E_{\phi}}{N_0}$  as in Corollary
\ref{cor_MMSE_raised_cosine} or Theorem
\ref{theo:small_bandwidth_MMSE}.

Then,  the total capacity per chip constrained to a given chip
pulse waveform is given by

\begin{align}
\mathcal{C}^{(\text{asyn})}\left(\beta, \frac{E_{\phi}}{N_0}, \phi
\right) &= \frac{\beta}{\ln2}   \int_{0}^{\frac{E_{\phi} T_c}{r N_0}}  \int_0^{+\infty} \frac{\lambda \frac{r s}{T_c} \eta_{\frac{r s}{T_c}}\mathrm{d}F_{|\matA|^2 }(\lambda) \, \mathrm{d}s}{1+\lambda \frac{r s}{T_c} \eta_{\frac{r s}{T_c}}}\\
& = \frac{\beta }{\ln2}  \int_{0}^{\frac{E_{\phi}}{N_0}} \mathrm{d}t \int_0^{+\infty} \frac{\lambda \eta_t \mathrm{d}F_{|\matA|^2}(\lambda)}{1+ \lambda t \eta_t }
\end{align}

This concludes the proof of Corollary
\ref{theor:constrained_capacity}.

\section{Proof of Theorem \ref{theo:equivalence_widetildestackH}}\label{section:proof_theo_equivalence_widetildestackH}
In this section $\Mat{0}_{N}^{M}$ denotes an $N \times M $ matrix of zeros. Shortly, $\Mat{0}_{N}$ and $\Vec{e}_N$ denote $N$-dimensional vectors of zeros and ones, respectively. Additionally, we introduce the notation $\Mat{F}_{N}^{(r)}=\Mat{F}_{N} \otimes \Mat{I}_r,$ with $\Mat{F}_N$ already defined in (\ref{fourier_matrix}).

A basic property of the Fourier eigenbasis functions is stated in the following.
\begin{property}\label{prop_basic_fourier}
Let $\Mat{E}_L$ denote an $L \times L$ matrix of ones  and by $\boldsymbol{\mathcal{F}}_{L}^{(r)}(u),$ with $u \leq (L-1)Nr$, the $LNr \times LN $ matrix with structure
\begin{equation*}
 \boldsymbol{\mathcal{F}}_{L}^{(r)}(u) = \left( \begin{array}{ccc}
                                                  \Mat{0}_{ur}^{ur} & \Mat{0}_{ur}^{Nr} & \Mat{0}_{ur}^{(L-1)N-u} \\
                                                  \Mat{0}_{Nr}^{ur} & \Mat{F}_{N}^{(r)} & \Mat{0}_{Nr}^{(L-1)N-u} \\
                                                  \Mat{0}^{ur}_{(L-1)N-u} & \Mat{0}^{Nr}_{(L-1)N-u} & \Mat{0}_{(L-1)N-u}^{(L-1)N-u}
                                                \end{array}
  \right)
\end{equation*}
Then,
\begin{equation}\label{basic_prop_fourier}
    \frac{1}{\sqrt{L}} (\Mat{E}_L \otimes \Mat{F}_N^{(r)})\boldsymbol{\mathcal{F}}_{L}^{(r)}(u)=\boldsymbol{\mathcal{E}}(u,r,N)
\end{equation}
where $\boldsymbol{\mathcal{E}}(u,r,N)$ is a matrix with structure
\begin{equation*}
\boldsymbol{\mathcal{E}}(u,r,N)= \left( \begin{array}{c|c|c}
                                                   & \boldsymbol{\mathcal{R}}_{rN}^u & \\
                                                  \Mat{0}_{LNr}^{ur} & \vdots & \Mat{0}_{(L-1)Nr-ur}^{(L-1)Nr-ur} \\
                                                   & \boldsymbol{\mathcal{R}}_{rN}^u &
                                                \end{array}
  \right)
\end{equation*}
and $\boldsymbol{\mathcal{R}}_{rN}^u$ is an $rN \times N$ block diagonal matrix with $\ell$-th block $(\boldsymbol{\mathcal{R}}_{rN}^u )_{\ell \ell} = \mathrm{e}^{-j\frac{2 \pi}{N}(\ell -1) u} \Vec{e}_r. $
\end{property}

Let us consider the virtual spreading sequence of user $k$ for symbol $m$ in the time interval $[-MT_s, MT_s],$ with $M>m$ integer:
\begin{equation*}
    \pmb{\mathfrak{h}}_k^{(m)T}=a_k[\Mat{0}^{T}_{(M+m)Nr}, \widetilde{\boldsymbol{\Phi}}_k \vecs_k^{(m)}, \Mat{0}^{T}_{(M-m-1)Nr} ].
\end{equation*}
Property \ref{prop_basic_fourier} and decomposition (\ref{decomposition}) yield
\begin{align}\label{equivalence}
 \frac{1}{\sqrt{2M+1}} (\Mat{E}_{2M+1} \otimes \Mat{F}_N^{(r)})\boldsymbol{\mathfrak{h}}_k^{(m)}&=\Vec{e}_{2M+1} \otimes (\boldsymbol{\mathcal{R}}_{rN}^{\overline{\tau}_k} \boldsymbol{\Delta}_{\phi,r}(\widetilde{\tau}_k) \widetilde{\vecs}_k) \nonumber \\
 & =\frac{1}{\sqrt{2M+1}} (\Mat{E}_{2M+1} \otimes \Mat{F}_N^{(r)})\widehat{\boldsymbol{\mathfrak{h}}}_k^{(m)}
\end{align}
with $\widehat{\boldsymbol{\mathfrak{h}}}_k^{(m)}=a_k[\Mat{0}_{(M+m)Nr}^T, (\Mat{F}_N^{(r)}\boldsymbol{\mathcal{R}}_{rN}^{\overline{\tau}_k} \boldsymbol{\Delta}_{\phi,r}(\widetilde{\tau}_k) \widetilde{\vecs}_k)^T, \Mat{0}_{(M-m)Nr}^T ]^T.$
Let us observe that the position of the nonzero elements does not depends anymore on $\overline{\tau}.$
Since the random entries of the vector $\widetilde{\vecs}_k$ are rotationally invariant, the rotation matrix $\boldsymbol{\mathcal{R}}_{rN}^{\overline{\tau}_k} $ can be absorbed in the random vector $\widetilde{\vecs}_k$ without change of the statistics and the vector $\widehat{\boldsymbol{\mathfrak{h}}}_k$  can be rewritten as
\begin{equation*}
  \widehat{\boldsymbol{\mathfrak{h}}}_k= a_k [\Vec{0}_{(M+m)Nr}, \Vec{h}_k^{(m)}, \Vec{0}_{(M-m)Nr}]^T
\end{equation*}
where $\Vec{h}_k^{(m)}=\Mat{F}_N^{(r)} \boldsymbol{\Delta}_{\phi,r}(\widetilde{\tau}_k)$ is the virtual spreading for the $m$-th transmitted symbol of user $k$ delayed by $\widetilde{\tau} \in [0, T_c).$ From the previous considerations it follows that the random  matrix $\boldsymbol{\mathcal{T}}=\boldsymbol{\mathcal{HH}}^H$ is unitarily equivalent to an infinite block diagonal matrix with blocks ${\matT}^{(m)}= {\matH}^{(m)}{\matH}^{(m)H}$ of dimension $rN \times rN$ being $\matH^{(m)}$ the matrix with $k$-th column equal to $\vech_k^{(m)},$ i.e., the transfer matrix of a symbol synchronous but chip asynchronous system with time delay $\{\widetilde{\tau}_1, \ldots,\widetilde{\tau}_K \}.$ Then, asymptotically for $K, N \rightarrow + \infty$ the eigenvalue distribution of the matrix $\matT^{(m)}$ equals the eigenvalue distribution of the matrix ${\boldsymbol{\mathcal{T}}}.$ The equivalence of the systems in terms of SINR at the output of a linear MMSE detector and in terms of capacity follows also from the fact that the SINR is invariant to any unitary transform of the system transfer matrix as already observed in the proof of Theorem \ref{theo:sinr_MMSE_chip_asynch}.

This concludes the proof of Theorem
\ref{theo:equivalence_widetildestackH}.

\bibliographystyle{ieeetran}

\begin{thebibliography}{10}
\providecommand{\url}[1]{#1}
\def\UrlFont{\rmfamily}
\providecommand{\newblock}{\relax}
\providecommand{\bibinfo}[2]{#2}
\providecommand\BIBentrySTDinterwordspacing{\spaceskip=0pt\relax}
\providecommand\BIBentryALTinterwordstretchfactor{4}
\providecommand\BIBentryALTinterwordspacing{\spaceskip=\fontdimen2\font plus
\BIBentryALTinterwordstretchfactor\fontdimen3\font minus
  \fontdimen4\font\relax}
\providecommand\BIBforeignlanguage[2]{{%
\expandafter\ifx\csname l@#1\endcsname\relax
\typeout{** WARNING: IEEEtran.bst: No hyphenation pattern has been}%
\typeout{** loaded for the language `#1'. Using the pattern for}%
\typeout{** the default language instead.}%
\else
\language=\csname l@#1\endcsname
\fi
#2}}

\bibitem{rupf:94b}
M.~Rupf and J.~L. Massey, ``Optimum sequences multisets for synchronous
  code--division multiple--access channels,'' \emph{IEEE Transactions on
  Information Theory}, vol.~40, no.~4, pp. 1261--1266, July 1994.

\bibitem{viswanath:99a}
P.~Viswanath, V.~Anantharam, and D.~N. Tse, ``Optimal sequences, power control
  and user capacity of synchronous {CDMA} systems with linear {MMSE} multiuser
  receivers,'' \emph{IEEE Transactions on Information Theory}, vol.~45, no.~6,
  pp. 1968--1983, Sept. 1999.

\bibitem{viswanath:99b}
P.~Viswanath and V.~Anantharam, ``Optimal sequences and sum capacity of
  synchronous {CDMA} systems,'' \emph{IEEE Transactions on Information Theory},
  vol.~45, no.~6, pp. 1984--1991, Sept. 1999.

\bibitem{verdu:99a}
S.~Verd\'u and S.~{Shamai (Shitz)}, ``Spectral efficiency of {CDMA} with random
  spreading,'' \emph{IEEE Transactions on Information Theory}, vol.~45, no.~2,
  pp. 622--640, Mar. 1999.

\bibitem{tse:99b}
D.~Tse and S.~Hanly, ``Linear multiuser receivers: Effective interference,
  effective bandwidth and user capacity,'' \emph{IEEE Transactions on
  Information Theory}, vol.~45, no.~2, pp. 641--657, Mar. 1999.

\bibitem{mueller:98d}
R.~R. M{\"u}ller, ``Multiuser receivers for randomly spread signals:
  Fundamental limits with and without decision--feedback,'' \emph{IEEE
  Transactions on Information Theory}, vol.~47, no.~1, pp. 268--283, Jan. 2001.

\bibitem{shitz:99}
S.~{Shamai (Shitz)} and S.~Verd{\'u}, ``The impact of frequency--flat fading on
  the spectral efficiency of {CDMA},'' \emph{IEEE Transactions on Information
  Theory}, vol.~47, no.~4, pp. 1302--1327, May 2001.

\bibitem{tanaka:02}
T.~Tanaka, ``A statistical mechanics approach to large-system analysis of
  {CDMA} multiuser detectors,'' \emph{IEEE Transactions on Information Theory},
  vol.~48, no.~11, pp. 2888--2910, Nov. 2002.

\bibitem{grant:98}
A.~J. Grant and P.~D. Alexander, ``Random sequence multisets for synchronous
  code--division multiple--access channels,'' \emph{IEEE Transactions on
  Information Theory}, vol.~44, no.~7, pp. 2832--2836, Nov. 1998.

\bibitem{pursley:77a}
M.~B. Pursley, ``Performance evaluation for phase--coded spread--spectrum
  multiple--access communication --- {P}art {I}: System analysis,'' \emph{IEEE
  Transactions on Communications}, vol. COM--25, no.~8, pp. 795--799, Aug.
  1977.

\bibitem{zang:03}
G.~Zang and C.~Ling, ``Performance evaluation for band-limited {DS-CDMA}
  systems based on simplified improved {Gaussian} approximation,'' \emph{IEEE
  Transactions on Communications}, vol.~51, no.~7, pp. 1204--1213, July 2003.

\bibitem{jeong:06}
Y.~K. Jeong, J.~H. Cho, and J.~S. Lehnert, ``An asymptotic analysis of
  band-limited {DS/SSMA} communication systems,'' \emph{IEEE Transactions on
  Information Theory}, vol.~52, no.~2, pp. 759--766, Feb. 2006.

\bibitem{schramm:99}
P.~Schramm and R.~R. M{\"u}ller, ``Spectral efficiency of {CDMA} systems with
  linear {MMSE} interference suppression,'' \emph{IEEE Transactions on
  Communications}, vol.~47, no.~5, pp. 722--731, May 1999.

\bibitem{luo:05}
J.~Luo, S.~Ulukus, and A.~Ephremides, ``Optimal sequences and sum capacity of
  symbol asynchronous {CDMA} systems,'' \emph{IEEE Transactions on Information
  Theory}, vol.~51, no.~8, pp. 2760--2769, Aug. 2005.

\bibitem{kiran:00}
Kiran and D.~N. Tse, ``Effective interference and effective bandwidth of linear
  multiuser receivers in asynchronous {CDMA} systems,'' \emph{IEEE Transactions
  on Information Theory}, vol.~46, no.~4, pp. 1426--1447, Sept. 2000.

\bibitem{cottatellucci:04b}
{Laura Cottatellucci, Merouane Debbah,} and R.~R. M{\"u}ller, ``Asymptotic
  analysis of linear detectors for asynchronous {CDMA} systems,'' in
  \emph{Proc. of IEEE International Symposium on Information Theory ({ISI}T)},
  Chicago, Illinois, June/July 2004.

\bibitem{cottatellucci:04g}
L.~Cottatellucci, R.~R. M\"uller, and M.~Debbah, ``Efficient implementation of
  multiuser detectors for asynchronous {CDMA},'' in \emph{Proc. 42nd {A}llerton
  {C}onf. on {C}ommunication, {C}ontrol and {C}omputing}, Monticello, Illinois,
  Sept./Oct. 2004, pp. 357--366.

\bibitem{mantravadi:02}
A.~Mantravadi and V.~V. Veeravalli, ``{MMSE} detection in asynchronous {CDMA}
  systems: An equivalence result,'' \emph{IEEE Transactions on Information
  Theory}, vol.~48, no.~12, pp. 3128--3137, Dec. 2002.

\bibitem{madhow:94}
U.~Madhow and M.~L. Honig, ``{MMSE} interference suppression for
  direct--sequence spread--spectrum {CD{MA}},'' \emph{IEEE Transactions on
  Communications}, vol.~42, no.~12, pp. 3178--3188, Dec. 1994.

\bibitem{madhow:99}
------, ``On the average near--far resistance for {MMSE} detection of direct
  sequence {CDMA} signals with random spreading,'' \emph{IEEE Transactions on
  Information Theory}, vol.~45, no.~6, pp. 2039--2045, Sept. 1999.

\bibitem{verdu:86}
S.~Verd\'u, ``Minimum probability of error for asynchronous {G}aus\-sian
  multiple--access channels,'' \emph{IEEE Transactions on Information Theory},
  vol. IT--32, no.~1, pp. 85--96, Jan. 1986.

\bibitem{lupas:90}
R.~Lupas and S.~Verd\'u, ``Near--far resistance of multiuser detectors in
  asynchronous channels,'' \emph{IEEE Transactions on Communications}, vol.
  COM--38, no.~4, pp. 496--508, Apr. 1990.

\bibitem{rupf:94}
M.~Rupf, F.~Tark{\"o}y, and J.~L. Massey, ``User--separating demodulation for
  code--division multiple--access systems,'' \emph{IEEE Journal on Selected
  Areas in Communications}, vol.~12, no.~5, pp. 786--795, June 1994.

\bibitem{hwang:07}
C.~Hwang, ``Eigenvalue distribution of correlation matrix in asynchronous
  {CDMA} with infinite observation window width,'' in \emph{Proc. of IEEE
  International Symposium on Information Theory ({ISI}T)}, Nice, France, June
  2007.

\bibitem{mantravadi:01}
A.~Mantravadi and V.~V. Veeravalli, ``On chip-matched filtering and discrete
  sufficient statistics for asynchronous band-limited {CDMA} systems,''
  \emph{IEEE Transactions on Communications}, vol.~49, no.~8, pp. 1457--1467,
  Aug. 2001.

\bibitem{landolsi:95}
M.~A. Landolsi and W.~E. Stark, ``{DS-CDMA} chip waveform design for optimal
  power-bandwidth performance,'' in \emph{Proc. of 6th International Symposium
  Personal, Indoor and Mobile Radio Communications}, Toronto, ON, Sept. 1995,
  pp. 706--710.

\bibitem{landolsi:99}
------, ``{DS-CDMA} chip waveform design for minimal interference under
  bandwidth, phase and envelope constraints,'' \emph{IEEE Transactions on
  Communications}, vol.~47, no.~11, pp. 1737--1746, Nov. 1999.

\bibitem{verdu:98}
S.~Verd{\'u}, \emph{Multiuser Detection}.\hskip 1em plus 0.5em minus
  0.4em\relax New York: Cambridge University Press, 1998.

\bibitem{verdu:93}
S.~Verd\'u, ``Multiuser detection,'' in \emph{Advances in Statistical Signal
  Processing: Signal Detection}, H.~Poor and J.~Thomas, Eds.\hskip 1em plus
  0.5em minus 0.4em\relax Greenwich, CT: JAI Press, 1993, pp. 369--410.

\bibitem{gray:71}
R.~M. Gray, Ed., \emph{{Toeplitz} and Circulant Matrices: A Review}.\hskip 1em
  plus 0.5em minus 0.4em\relax Standford, CA:
  $\text{http:}//\text{ee.standford.edu}/ \sim
  \text{gray}/\text{toeplitz.pdf}$, 2002.

\bibitem{gray:72}
R.~M. Gray, ``{Toeplitz} and circulant matrices: A review,'' \emph{IEEE
  Transactions on Information Theory}, vol.~18, no.~6, pp. 725--730, Nov. 1972.

\bibitem{evans:00}
J.~Evans and D.~N. Tse, ``Large system performance of linear multiuser
  receivers in multipath fading channels,'' \emph{IEEE Transactions on
  Information Theory}, vol.~46, no.~6, pp. 2059--2078, Sept. 2000.

\bibitem{li:04}
L.~Li, A.~Tulino, and S.~Verd{\'u}, ``Design of reduced-rank {MMSE} multiuser
  detectors using random matrix methods,'' \emph{IEEE Transactions on
  Information Theory}, vol.~50, no.~6, pp. 986 -- 1008, June 2004.

\bibitem{cottatellucci:04}
L.~Cottatellucci and R.~R. M\"uller, ``A systematic approach to multistage
  detectors in multipath fading channels,'' \emph{IEEE Transactions on
  Information Theory}, vol.~51, no.~9, pp. 3146--3158, Sept. 2005.

\bibitem{Kaybook}
S.~Kay, \emph{Fundamentals of Statistical Signal Processing, Estimation
  Theory}, ser. Prentice Hall Signal Processing Series.\hskip 1em plus 0.5em
  minus 0.4em\relax Prentice Hall, 1993, vol.~1.

\bibitem{mueller:03}
R.~R. M{\"u}ller and W.~H. Gerstacker, ``On the capacity loss due to separation
  of detection and decoding,'' \emph{IEEE Transactions on Information Theory},
  vol.~50, no.~8, pp. 1769--1778, Aug. 2004.

\bibitem{guo:04}
D.~Guo and S.~Verd{\'u}, ``Randomly spread {CDMA}: Asymptotics via statistical
  physics,'' \emph{IEEE Transactions on Information Theory}, vol.~51, no.~6,
  pp. 1983--2010, June 2005.

\bibitem{hanly:99a}
S.~V. Hanly and D.~N. Tse, ``Resource pooling and effective bandwidth in {CDMA}
  networks with multiuser receivers and spatial diversity,'' \emph{IEEE
  Transactions on Information Theory}, vol.~47, no.~4, pp. 1328--1351, May
  2001.

\bibitem{guo:05}
D.~Guo and S.~Verd{\'u}, ``Mutual information and minimum mean-square error in
  {G}aussian channels,'' \emph{IEEE Transactions on Information Theory},
  vol.~51, no.~4, pp. 1261--1282, Apr. 2005.

\bibitem{verdu:86b}
S.~Verd{\'u}, ``Capacity region of {G}aus\-sian {CD{MA}} channels: The
  symbol--synchronous case,'' in \emph{Proc.~of 24th Annual Allerton Conference
  on Communication, Control and Computing}, Monticello, IL, Oct. 1986, pp.
  1025--1034.

\bibitem{girko:01a}
V.~L. Girko, \emph{Theory of Stochastic Canonical Equations}.\hskip 1em plus
  0.5em minus 0.4em\relax Dordrecht, The Netherlands: Kluwer Academic
  Publishers, 2001, vol.~1.

\bibitem{girko:01b}
------, \emph{Theory of Stochastic Canonical Equations}.\hskip 1em plus 0.5em
  minus 0.4em\relax Dordrecht, The Netherlands: Kluwer Academic Publishers,
  2001, vol.~2.

\bibitem{cottatellucci:07}
L.~Cottatellucci and R.~R. M\"uller, ``{CDMA} systems with correlated spatial
  diversity: A generalized resource pooling result.'' \emph{IEEE Transactions
  on Information Theory}, vol.~53, no.~3, pp. 1116--1136, Mar. 2007.

\bibitem{bai:98}
Z.~D. Bai and J.~W. Silverstein, ``No eigenvalues outside the support of the
  limiting spectral distribution of large dimensional sample covariance
  matrices,'' \emph{The Annals of Probability}, vol.~26, no.~1, pp. 316--345,
  1998.

\end{thebibliography}

%%%%%%%%%%%%%%%%%%%%%%%%%%%%%%%%%%%%%%%%%%%%%%%%%%%%%%%%%%%%%%%%%%%%

\begin{biography}[]
{Laura Cottatellucci} is currently working as assistant professor at the department of Mobile Communications at Eurecom, France. She received the degree in Electrical Engineering  and the PhD from University "La Sapienza", Italy in 1995 and from Technical University of Vienna, Austria in 2006, respectively. She worked in Telecom Italia from 1995 until 2000. From April 2000 to September 2005 she was Senior Research at ftw., Vienna, Austria in the group of information processing for wireless communications. From October 2005 to December 2005 she was research fellow on  ad-hoc networks at INRIA, Sophia Antipolis, France and guest researcher at Eurecom,  Sophia Antipolis, France. From January 2006 to November 2006, Dr. Cottatellucci was appointed research fellow at the Institute for Telecommunications Research, University of South Australia, Adelaide, Australia working on information theory for networks with uncertain topology.  Her research interests lie in the field of network information theory, communication theory, and signal processing for wireless communications.
\end{biography}

\begin{biography}[]
{Ralf R. M\"uller} (S'96-M'03-SM'05) was born in Schwabach, Germany, 1970. He received the Dipl.- Ing. and Dr.-Ing. degree with distinction from University of Erlangen-Nuremberg in 1996 and 1999, respectively. From 2000 to 2004, he directed a research group at Vienna Telecommunications Research Center in Vienna, Austria and taught as an adjunct professor at Vienna University of Technology. Since 2005 he has been a full professor at the Department of Electronics and Telecommunications at the Norwegian University of Science and Technology (NTNU) in Trondheim, Norway. He held visiting appointments at Princeton University, US, Institute Eurecom, France, University of Melbourne, Australia, University of Oulu, Finland, National University of Singapore, Babes-Bolyai University, Cluj-Napoca, Romania, Kyoto University, Japan, and University of Erlangen-Nuremberg, Germany. Dr. M\"uller received the Leonard G. Abraham Prize (jointly with Sergio Verd\'u) for the paper "Design and analysis of low-complexity interference mitigation on vector channels" from the IEEE Communications Society. He was presented awards for his dissertation "Power and bandwidth efficiency of multiuser systems with random spreading" by the Vodafone Foundation for Mobile Communications and the German Information Technology Society (ITG). Moreover, he received the ITG award for the paper "A random matrix model for communication via antenna arrays," as well as the Philipp-Reis Award (jointly with Robert Fischer). Dr. M\"uller served as an associate editor for the IEEE TRANSACTIONS ON INFORMATION THEORY from 2003 to 2006.
\end{biography}

\begin{biography}[]{
M{\'e}rouane Debbah} was born in Madrid, Spain. He entered the Ecole Normale Suprieure de Cachan (France) in 1996 where he received his M.Sc and Ph.D. degrees respectively in 1999 and 2002. From 1999 to 2002, he worked for Motorola Labs on Wireless Local Area Networks and prospective fourth generation systems. From 2002 until 2003, he was appointed Senior Researcher at the Vienna Research Center for Telecommunications (FTW) (Vienna, Austria) working on MIMO wireless channel modeling issues. From 2003 until 2007, he joined the Mobile Communications department of Eurecom (Sophia Antipolis, France) as an Assistant Professor. He is presently a Professor at Supelec (Gif-sur-Yvette, France), holder of the Alcatel-Lucent Chair on Flexible Radio. His research interests are in information theory, signal processing and wireless communications. M{\'e}rouane Debbah is the recipient of the "Mario Boella" prize award in 2005, the 2007 General Symposium IEEE GLOBECOM best paper award, the Wi-Opt 2009 best paper award as well as the Valuetools 2007,Valuetools 2008 and CrownCom2009 best student paper awards. He is a WWRF fellow.
\end{biography}
%%%%%%%%%%%
\end{document}